\documentclass[11pt]{article}
\usepackage{graphics,color}
\usepackage{graphicx}
\usepackage{amssymb,amsmath,amsfonts}
\usepackage{amsmath,mathrsfs,bm,url,times}
\usepackage{latexsym}
\usepackage{subfigure}
\usepackage{algorithmic}
\usepackage{algorithm}
\usepackage{epsfig,url}

\newtheorem{theorem}{Theorem}
\newtheorem{lemma}{Lemma}
\newtheorem{definition}{Definition}

\newtheorem{remark}{Remark}
\newtheorem{corollary}{Corollary}
\newtheorem{property}{Property}
\newenvironment{proof}{\vspace{1ex}\noindent{\bf Proof.}\hspace{0.5em}}
    {\hfill\qed\vspace{1ex}}
\def\qed{\hfill \vrule height 6pt width 6pt depth 0pt}

\newcommand{\R}{{\mathbb R}}
\newcommand{\onetom}{1,\cdots,m}
\newcommand{\oneton}{1,\cdots,n}
\newcommand{\onetoK}{1,\cdots,K}

\title{Event-triggered Consensus for Multi-agent Systems with Asymmetric and Reducible Topologies}
\author{Xinlei Yi, Wenlian Lu and Tianping Chen}

\begin{document}
\maketitle
\begin{abstract}
This paper studies the consensus problem of multi-agent systems with
asymmetric and reducible topologies. Centralized event-triggered rules are
provided so as to reduce the frequency of system's updating. The diffusion
coupling feedbacks of each agent are based on the latest observations from
its in-neighbors and the system's next observation time is triggered by a
criterion based on all agents' information. The scenario of continuous
monitoring is first considered, namely all agents' instantaneous states can
be observed. It is proved that if the network topology has a spanning tree,
then the centralized event-triggered coupling strategy can realize
consensus for the multi-agent system. Then the results are extended to
discontinuous monitoring, where the system computes its next triggering
time in advance without having to observe all agents' states continuously.
Examples with numerical simulation are provided to show the effectiveness
of the theoretical results.

\noindent{\bf Keywords:}Asymmetric, irreducible, reducible, consensus,
multi-agent systems, event-triggered, self-triggered.
\end{abstract}
\section{Introduction}
In the past decade, consensus problem in multi-agent systems, i.e. a group
of agents seeks to agree upon certain quantity of interest, has attracted
many researchers. There are many excellent results in this field, for
example, see \cite{Ros}-\cite{Liubo}. In these works, the network
topologies can be fixed or stochastically switching, and to realize a
consensus a fundamental assumption is that the underlying graph of the
network system has a spanning tree \cite{Ros}.

However, the above studies are all using the simultaneous state as feedback
control to realize a consensus. In the near future, each agent could be
equipped with embedded microprocessors with limited resources that will
transmit and gather information, etc. Motivated by that, event-triggered
control \cite{Pta}-\cite{Yfg} and self-triggered control
\cite{Aapt}-\cite{Aap} have been proposed and studied. Instead of using the
simultaneous state to realize a consensus, the control in event-triggered
control strategy is piecewise constant between the triggering times which
need been determined. Self-triggered control is a natural extension of the
event-triggered control since the derivative of the concern multi-agent
system's state is piecewise constant (a very simple linear constant
coefficient ordinary differential equations) in mathematical respect, which
means it is very easy to work out solutions (agents' states) of the
equations. In \cite{Pta}, the triggering times are determined when a
certain error becomes large enough with respect to the norm of the state.
In \cite{Dvd}, under the condition that the graph is undirected and
strongly connected, the authors provide event-triggered and self-triggered
approaches in both centralized and distributed formulations. It should be
emphasized that the approaches cannot be applied to directed graphes. In
\cite{Zliu}, the authors investigate the average-consensus problem of
multi-agent systems with directed and weighted topologies, but they need an
additional assumption that the directed topology must be balanced. In
\cite{Yfg}, the authors propose a new combinational measurement approach to
event design and as a result, control of agents is only triggered at their
own even time, which is an improvement.

In fact, event-driven strategies for multi-agent systems can be viewed as
linearization and discretization process, which has been considered and
investigated in early papers \cite{LC2004,LC2007}. For example, in the
paper \cite{LC2004}, following algorithm was investigated
\begin{align}
x^{i}(t+1)=f(x^{i}(t))+c_{i}\sum_{j=1}^{m}a_{ij}(f(x^{j}(t)))
\end{align}
which can be considered as nonlinear consensus algorithm.

As a special case, let $f(x(t))=x(t)$ and $c_{i}=(t_{k+1}^{i}-t_{k}^{i})$,
then
\begin{align}
x^{i}(t_{k+1}^{i})=x^{i}(t_{k}^{i})+(t_{k+1}^{i}-t_{k}^{i})\sum_{j=1}^{m}a_{ij}x^{j}(t_{k}^{i})
\end{align}
which is just the event triggering (distributed, self triggered) model for
consensus problem, though the term "event triggering" is not used. In
centralized control, the bound for
$(t_{k+1}^{i}-t_{k}^{i})=(t_{k+1}-t_{k})$ to reach synchronization was
given in that paper when the coupling graph is indirected, too.

In this paper, continuing with previous works, we study centralized
event-triggered and centralized self-triggered consensus in multi-agent
system with asymmetric, reducible and weighted topology. Firstly, under the
irreducible topology condition, i.e. the underlying directed graph is
strongly connected, we derive two centralized event-triggered rules.
Secondly, by mathematical induction, we generalize above results to
reducible topology case. It is proved that if the network topology has a
spanning tree, then the centralized event-triggered coupling strategies can
realize consensus for the multi-agent system. Here, we point out that we do
not need that the directed topology must be balanced or any other
additional conditions. The consensus value is a weighted average of all
agents' initial values by the nonnegative left eigenvector of the graph
Laplacian matrix corresponding to eigenvalue zero. Finally, the results are
extended to discontinuous monitoring, i.e. self-triggered control, where
the system computes its next triggering time in advance without having to
observe all agents' states continuously. In addition, we provide a novel
and very simple self-triggered rule (see Theorem \ref{thm0.12}) of which
the time interval length is bigger than the result in \cite{Dvd}. And we
even give a self-triggered strategy with a fixed time interval between two
continuous triggering, i.e. we give a periodic self-triggered strategy. It
should be emphasized that the period is decided by the maximum in-degree of
the network only.

The paper is organized as follows: in Section 2, some necessary definitions
and lemmas are given; in Section 3, the event-triggered consensus for
multi-agent with asymmetrical topologies is discussed; in Section 4, the
self-triggered formulation of the frameworks in Section 3 is presented; in
Section 5, examples with numerical simulation are provided to show the
effectiveness of the theoretical results; in Section 6, we conclude this
paper and indicate further research directions.

\section{Preliminaries}
In this section we first review some related definitions and results on
algebraic graph theory \cite{Die,Rah} which will be used later in this
paper.

For a weighted asymmetric graph (digraph or directed graph) $\mathcal
G=(\mathcal V, \mathcal E, \mathcal A)$ with $m$ agents (vertices or
nodes), the set of agents $\mathcal V =\{v_1,\cdots,v_m\}$, set of links
(edges) $\mathcal E \subseteq \mathcal V \times \mathcal V$, and the
weighted adjacency matrix $\mathcal A =(a_{ij})$ with nonnegative adjacency
elements $a_{ij}$. A link of $\mathcal G$ is denoted by $e(i,j)=(v_i,
v_j)\in \mathcal E$ if there is a directed link from agent $j$ to agent
$i$. The adjacency elements associated with the links of the graph are
positive, i.e., $e(i,j)\in \mathcal E\iff a_{ij}>0$, for all $i,
j\in\mathcal I$, where $\mathcal I=\{1, 2,\cdots, m\}$. It is assumed that
$a_{ii}=0$ for all $i\in \mathcal I$. We define the link set $\mathcal
E^{in}=\{e^{in}(i,j)\}$ is composed of directed links $e^{in}(i,j)$ if
$a_{ij}>0$, i.e., $\mathcal E^{in}=\mathcal E$. Equivalently, we can define
a dual link set $\mathcal E^{out}=\{(e^{out}(i,j)\}$ composed of
$e^{out}(i,j)$ if $a_{ji}>0$. Moreover, the in-neighbours set of agent
$v_i$ is defined as
\begin{eqnarray*}
N^{in}_i=\{v_j\in \mathcal V\mid a_{ij}>0\}.
\end{eqnarray*}
The in-degree of agent $v_i$ is defined as follows:
\begin{eqnarray*}
deg^{in}(v_i)=\sum\limits_{j=1}^{m}a_{ij}.
\end{eqnarray*}

The degree matrix of digraph $\mathcal G$ is defined as
$D=diag[deg^{in}(v_1), \cdots, deg^{in}(v_m)]$. The weighted Laplacian
matrix associated with the digraph $\mathcal G$ is defined as $L=\mathcal
A-D$. A directed path from agent $v_0$ to agent $v_k$ is a directed graph
with distinct agents $v_0,...,v_k$ and links $e_0,...,e_{k-1}$ such that
$e_i$ is a link directed from $v_i$ to $v_{i+1}$, for all $i<k$.
\begin{definition}
We say an asymmetric graph $\mathcal G$ is strongly connected if for any
two distinct agents $v_{i},~v_{j}$, there exits a directed path from
$v_{i}$ to $v_{j}$.
\end{definition}
By \cite{Rah}, we know that $\mathcal G$ is strongly connected is
equivalent to the corresponding Laplacian matrix $L$ is irreducible.
\begin{definition}
We say an asymmetric graph $\mathcal G$ has a spanning tree if exists at
least one agent $v_{i_{0}}$ such that for any other agent $v_{j}$, there
exits a directed path from $v_{i_{0}}$ to $v_{j}$.
\end{definition}

By Perron-Frobenius theorem \cite{Rah}, we have
\begin{lemma}\label{lem2}
If $L$ is irreducible, then $rank(L)=m-1$, zero is an algebraically simple
eigenvalue of $L$ and there is a positive vector
$\xi^{\top}=[\xi_{1},\cdots,\xi_{m}]$ such that $\xi^{\top} L=0$ and
$\sum_{i=1}^{m}\xi_{i}=1$. And if the asymmetric graph $\mathcal G$ just
has a spanning tree then we should change the positive vector to
nonnegative vector in above conclusion.
\end{lemma}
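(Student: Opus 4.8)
The plan is to reduce the statement to the Perron--Frobenius theorem for irreducible nonnegative matrices by means of a spectral shift. First I would record the two structural facts about $L=\mathcal A-D$: its off-diagonal entries $L_{ij}=a_{ij}$ (for $i\neq j$) are nonnegative while its diagonal entries $L_{ii}=-deg^{in}(v_i)$ are nonpositive, and each row sums to zero, so that $L\mathbf 1=0$. The latter already exhibits $0$ as an eigenvalue with right eigenvector $\mathbf 1$; the whole content of the lemma is that this eigenvalue is algebraically simple and that the associated left eigenvector can be taken positive.

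Next I would introduce the shifted matrix $M=L+\delta I$ with $\delta=\max_{i}deg^{in}(v_i)$. By the sign structure above, $M$ is entrywise nonnegative, and since its off-diagonal pattern coincides with that of $\mathcal A$, irreducibility of $L$ (equivalently, strong connectivity of $\mathcal G$) makes $M$ irreducible. The Perron--Frobenius theorem then guarantees that the spectral radius $\rho(M)$ is an algebraically simple eigenvalue admitting strictly positive right and left eigenvectors, and that it is the only eigenvalue possessing a positive eigenvector. Since $M\mathbf 1=\delta\mathbf 1$ with $\mathbf 1>0$, this forces $\rho(M)=\delta$ together with its simplicity. Undoing the shift, $0=\rho(M)-\delta$ is an algebraically simple eigenvalue of $L$, whence $rank(L)=m-1$. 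Taking the positive left Perron eigenvector $\xi^\top M=\delta\xi^\top$ and subtracting $\delta\xi^\top$ yields $\xi^\top L=0$ with every $\xi_i>0$; rescaling so that $\sum_i\xi_i=1$ completes the irreducible case.

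For the spanning-tree extension, I would pass to the Frobenius normal form, reordering the agents so that $L$ becomes block triangular with irreducible diagonal blocks, each block corresponding to a strongly connected component of $\mathcal G$. The spanning-tree hypothesis means the condensation digraph has a unique source component $C$ (the one containing a root $v_{i_{0}}$), so the diagonal block of $C$ is itself the Laplacian of a strongly connected subgraph and carries a simple zero eigenvalue with positive left eigenvector, while every other diagonal block has at least one strictly negative row sum. Applying the shifted-matrix argument to each off-root block shows its spectral radius is strictly below $\delta$, so these blocks remain nonsingular after the shift and contribute no further zero eigenvalue. Hence $0$ stays algebraically simple and $rank(L)=m-1$; the left null vector is the Perron vector of $C$ extended by zeros, which is nonnegative rather than strictly positive, exactly as claimed.

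The main obstacle I anticipate is the reducible (spanning-tree) case: one must argue carefully that the non-source diagonal blocks are Hurwitz after shifting --- i.e. that an irreducible Metzler block with some strictly negative row sums has spectral abscissa strictly negative --- and that zero does not reappear through the coupling between blocks. The block-triangular structure is what makes this transparent, since the spectrum of $L$ is exactly the union of the spectra of its diagonal blocks, so no cross-block interaction can create additional eigenvalues at the origin.
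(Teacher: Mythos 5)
Your proposal is correct and takes essentially the same approach as the paper: the paper states this lemma as a direct consequence of the Perron--Frobenius theorem (citing Horn and Johnson) and gives no further proof, and your argument---shifting $L$ by $\delta I$ to get an irreducible nonnegative matrix in the strongly connected case, then passing to the Frobenius normal form and showing the non-root diagonal blocks are Hurwitz in the spanning-tree case---is precisely the standard derivation behind that citation. The one step you rightly flag, that an irreducible nonnegative block with nonpositive row sums, at least one strictly negative, has spectral radius strictly below the shift, is itself a standard Perron--Frobenius refinement (pair the positive left Perron vector of the shifted block with $\mathbf 1$ to force strict inequality), so no genuine gap remains.
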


Let $\Xi=diag[\xi_{1},\cdots,\xi_{m}]$, also by Perron-Frobenius theorem, we have
\begin{lemma}\label{lem1}
If $L$ is irreducible, then $\Xi L+L^{\top}\Xi$ is a symmetric Metzler
matrix with all row sums equal to zeros and has zero eigenvalue with
algebraic dimension one.
\end{lemma}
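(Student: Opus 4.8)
The plan is to verify the four asserted properties in turn, disposing of the structural ones quickly and reserving the spectral statement for last. Throughout I write $\mathbf 1=[1,\cdots,1]^\top$ and set $Q:=\Xi L+L^\top\Xi$.

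First I would establish symmetry directly: since $\Xi$ is diagonal, $\Xi^\top=\Xi$, so $(\Xi L+L^\top\Xi)^\top=L^\top\Xi+\Xi L$, which is the same matrix. For the Metzler property I would examine an off-diagonal entry $(i,j)$ with $i\neq j$: it equals $\xi_i L_{ij}+L_{ji}\xi_j=\xi_i a_{ij}+\xi_j a_{ji}$, which is nonnegative because every $a_{ij}\geq 0$ and, by Lemma \ref{lem2}, every $\xi_i>0$. For the row-sum claim I would use that $L$ has zero row sums by construction ($L=\mathcal A-D$ gives $L\mathbf 1=0$) together with $\xi^\top L=0$ from Lemma \ref{lem2}; hence $Q\mathbf 1=\Xi(L\mathbf 1)+L^\top(\Xi\mathbf 1)=0+L^\top\xi=(\xi^\top L)^\top=0$.

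The substantive step is showing that zero is an eigenvalue of algebraic multiplicity one. Since $Q$ is symmetric, algebraic and geometric multiplicities coincide, so it suffices to prove $\ker Q=\mathrm{span}\{\mathbf 1\}$. The idea is to recognize $-Q$ as the Laplacian of an undirected weighted graph on the same vertex set with symmetric edge weights $w_{ij}:=\xi_i a_{ij}+\xi_j a_{ji}=Q_{ij}\geq 0$; the already-proved zero-row-sum property gives $Q_{ii}=-\sum_{j\neq i}w_{ij}$. For any vector $x$ this yields the quadratic-form identity $x^\top Q x=-\tfrac{1}{2}\sum_{i,j}w_{ij}(x_i-x_j)^2\leq 0$, so $Q$ is negative semidefinite and $x^\top Q x=0$ forces $x_i=x_j$ whenever $w_{ij}>0$.

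It then remains to argue that this forces $x$ to be constant, which is where connectivity enters and is the main obstacle. I would observe that $w_{ij}>0$ precisely when $a_{ij}>0$ or $a_{ji}>0$, i.e. exactly when the original digraph has an arc between $i$ and $j$ in at least one direction; thus the undirected weight graph is the symmetrization of $\mathcal G$. Since $L$ is irreducible, $\mathcal G$ is strongly connected, and strong connectivity of a digraph implies connectivity of its symmetrization, so any two vertices are joined by a path of positive-weight edges. Propagating the equalities $x_i=x_j$ along such a path shows all coordinates of $x$ agree, i.e. $x\in\mathrm{span}\{\mathbf 1\}$; combined with $Q\mathbf 1=0$ from the row-sum step this gives $\ker Q=\mathrm{span}\{\mathbf 1\}$, hence the simple zero eigenvalue. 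Alternatively one could invoke the Laplacian fact cited before the lemma that a connected undirected weighted graph has a rank-$(m-1)$ Laplacian, but I prefer the self-contained quadratic-form argument since it makes the role of strong connectivity explicit.
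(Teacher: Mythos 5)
Your proof is correct, but it is worth noting that the paper does not actually write out a proof of this lemma at all: it simply asserts the conclusion ``by Perron--Frobenius theorem,'' citing Horn and Johnson. The argument implicit in that citation is spectral: since $\mathcal G$ is strongly connected, the symmetric Metzler matrix $Q=\Xi L+L^{\top}\Xi$ with zero row sums is irreducible, so $cI+Q$ is an irreducible nonnegative matrix for $c>0$ large enough; Perron--Frobenius then says its spectral radius $c$ (attained at the eigenvector $\mathbf 1$) is an algebraically simple eigenvalue, and shifting back by $-cI$ shows zero is a simple eigenvalue of $Q$. Your route replaces this with a self-contained, elementary argument: you verify symmetry, the Metzler sign pattern, and the zero row sums by direct computation (steps needed in either approach, and which correctly use the positivity of $\xi$ from Lemma \ref{lem2}), and then you recognize $-Q$ as the Laplacian of the undirected symmetrization of $\mathcal G$, so that
\begin{equation*}
x^{\top}Qx=-\frac{1}{2}\sum_{i,j}w_{ij}(x_i-x_j)^2,\qquad w_{ij}=\xi_i a_{ij}+\xi_j a_{ji},
\end{equation*}
and the kernel is pinned down by propagating equalities along paths of the connected symmetrized graph (the inclusion $\ker Q\subseteq\{x:x^{\top}Qx=0\}$ that your logic needs is trivial, so the argument is airtight). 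What your version buys: it makes the role of strong connectivity explicit, and it proves negative semidefiniteness of $Q$ as a byproduct --- a fact the paper merely asserts afterwards (``Obviously, $R$ is negative semi-definite''). What the paper's citation buys: brevity, and reuse of the same Perron--Frobenius machinery already invoked for Lemma \ref{lem2}.
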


Denote $R=[R_{ij}]_{i,j=1}^{m}=(1/2)(\Xi L+L^{\top}\Xi)$. By Lemma
\ref{lem1}, $R$ can be regarded as a Laplacian matrix of some bi-directed
graph with strongly connected topology. Denote this graph as $\mathcal
G^{s}=\{\mathcal V,\mathcal E^{s}\}$ with the same agent set as $\mathcal
G$, and the link set $\mathcal E^{s}$ is composed of $e^{s}(i,j)$ for
either $e(i,j)\in\mathcal E$ or $e(j,i)\in\mathcal E$, i.e, $\mathcal
E^{s}=\mathcal E^{in}\bigcup\mathcal E^{out}$. Obviously, $R$ is negative
semi-definite. Let $0=\lambda_{1}<\lambda_{2}\le\cdots\le\lambda_{m}$ be
the eigenvalue of $-R$, counting the multiplicities. Let
$\Lambda=diag[\lambda_{1},\cdots,\lambda_{m}]$, from \cite{Rah}, there is a
real orthogonal matrix $\Sigma$ satisfies $-R=\Sigma^{\top}\Lambda\Sigma$.
Thus
\begin{align}\label{RB}
-R=B^{2},
\end{align}
where $B=\Sigma^{\top}\sqrt{\Lambda}\Sigma$. From \cite{Rah}, we know that
$B$ is the unique positive semi-definite matrix satisfies (\ref{RB}).

In the sequel, we need following three matrices: matrix $-R=-(1/2)(\Xi
L+L^{\top}\Xi)$, matrix $Q=\Xi LL^{\top}\Xi$, and matrix
$U=\Xi-\xi\xi^{\top}$ with eigenvalues
$0=\lambda_{1}\le\cdots\le\lambda_{m}$,
$0=\beta_{1}<\beta_{2}\le\cdots\le\beta_{m}$ and
$0=\mu_{1}<\mu_{2}\le\cdots\le\mu_{m}$, (counting their multiplicities),
respectively.

Following three inequalities are used frequently in the rest of the paper:
$$\lambda_{m}x^{T}x\ge\min_{x\bot 1}\{x^{T}(-R)x\}\ge\lambda_{2}x^{T}x,$$
$$\beta_{m}x^{T}x\ge\min_{x\bot 1}\{x^{T}Qx\}\ge\beta_{2}x^{T}x,$$
and
$$\mu_{2}x^{T}x\le\max_{x\bot 1}\{x^{T}Ux\}\le\mu_{m}x^{T}x.$$

Therefore, we have
\begin{align}
\frac{\lambda_{2}}{\beta_{m}}Q\le-R\le\frac{\lambda_{m}}{\beta_{2}}Q,\label{RQ}\\
\frac{\lambda_{2}}{\mu_{m}}U\le-R\le\frac{\lambda_{m}}{\mu_{2}}U.\label{RU}
\end{align}

By \cite{Rwr}, we have
\begin{lemma} \label{lem3} Let $A\in M_{n}(\mathbb R)$ be a be a stochastic matrix. If
$A$ has a spanning tree and positive diagonal elements, then
$\lim_{m\rightarrow\infty}A^m=\mathbf 1v^{\top}$, where $v$ satisfies
$A^{\top}v=v$ and $\mathbf 1^{\top}v=1$. Furthermore, each element of $v$
is nonnegative.
\end{lemma}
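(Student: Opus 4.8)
The plan is to analyze the spectrum of $A$ and reduce the convergence of $A^m$ to the behavior of its eigenvalue of largest modulus. Since $A$ is row-stochastic we have $A\mathbf 1=\mathbf 1$, so $1$ is an eigenvalue with right eigenvector $\mathbf 1$, and by Gershgorin's theorem every eigenvalue satisfies $|\lambda|\le 1$. The whole argument rests on two facts: that $1$ is an algebraically simple eigenvalue, and that it is the only eigenvalue on the unit circle. Once both hold, the spectral projection onto the eigenvalue-$1$ subspace produces the limit matrix.

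First I would reorder the agents along a topological ordering of the strongly connected components (SCCs) of the digraph of $A$, putting $A$ into block-triangular form with irreducible diagonal blocks. Because $\mathcal G$ has a spanning tree, exactly one SCC---the one containing the root---receives no edges from other components, so its diagonal block is itself row-stochastic and irreducible, while every other diagonal block is irreducible but has at least one deficient row: each non-root component receives at least one edge from outside (along the path from the root), so the corresponding row has positive mass outside its own block and the block is strictly substochastic. By the Perron--Frobenius theorem the root block has $1$ as a simple eigenvalue, whereas each irreducible strictly substochastic block has spectral radius strictly less than $1$. Reading eigenvalues off the block-triangular form, $1$ occurs exactly once, hence it is algebraically simple.

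Next I would invoke the positive-diagonal hypothesis to exclude other eigenvalues of modulus $1$. A positive diagonal entry is a self-loop, which makes each irreducible diagonal block aperiodic and therefore primitive; in particular the root block is primitive, so by Perron--Frobenius its unique eigenvalue of modulus $1$ is $1$ itself. The remaining blocks already have spectral radius below $1$. Thus $1$ is the unique eigenvalue of $A$ on the unit circle and all others lie strictly inside it.

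With $1$ simple and strictly dominant I would close the argument by a spectral-projection computation. Let $v$ satisfy $A^\top v=v$ and $\mathbf 1^\top v=1$, and set $P=\mathbf 1 v^\top$, the spectral projection onto the eigenvalue-$1$ eigenspace. One checks directly that $AP=A\mathbf 1 v^\top=\mathbf 1 v^\top=P$, that $PA=\mathbf 1(A^\top v)^\top=P$, and that $P^2=\mathbf 1(v^\top\mathbf 1)v^\top=P$. Hence $A^m=A^mP+A^m(I-P)=P+\bigl(A(I-P)\bigr)^m$, where $A(I-P)$ vanishes on the range of $P$ and restricts to $A$ on the complementary invariant subspace, so its spectral radius is the second-largest eigenvalue modulus of $A$, which is $<1$; therefore $\bigl(A(I-P)\bigr)^m\to 0$ and $A^m\to\mathbf 1 v^\top$. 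Nonnegativity of $v$ is then immediate: every $A^m$ is entrywise nonnegative, so its limit $\mathbf 1 v^\top$ is nonnegative, forcing $v\ge 0$. The main obstacle is the spectral bookkeeping in the block-triangular reduction---establishing simplicity of the eigenvalue $1$ together with the strict substochasticity of every non-root block---while the primitivity step supplied by the positive diagonal and the final projection argument are routine.
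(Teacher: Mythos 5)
Your proof is sound, but there is nothing in the paper to measure it against: the paper never proves Lemma \ref{lem3}, it simply imports the statement from \cite{Rwr} (Ren and Beard), where it arises inside the machinery of SIA (stochastic, indecomposable, aperiodic) matrices and lemmas on products of stochastic matrices, tools built for switching topologies. You instead give a self-contained spectral argument for a single fixed matrix, and every step checks out: the Frobenius normal form ordered along the condensation of strongly connected components; the observation that the spanning-tree hypothesis makes the root SCC the unique closed component, so its diagonal block is irreducible and row-stochastic while every other diagonal block is irreducible with at least one deficient row and hence has spectral radius strictly below one; primitivity of the root block from the positive diagonal, which yields algebraic simplicity of the eigenvalue $1$ and excludes all other eigenvalues of modulus one; and the deflation $A^m=P+\bigl(A(I-P)\bigr)^m\to P=\mathbf 1 v^{\top}$, with nonnegativity of $v$ read off from $A^m\ge 0$. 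Two facts you use implicitly deserve a sentence: (i) the normalization $\mathbf 1^{\top}v=1$ is possible because the left and right eigenvectors of an algebraically simple eigenvalue are never orthogonal; (ii) the claim that an irreducible substochastic block with a deficient row has spectral radius $<1$ genuinely needs irreducibility (it fails for reducible blocks), which your SCC decomposition does supply. What your route buys is a short, purely linear-algebraic proof with an explicit convergence rate (the second-largest eigenvalue modulus); what the cited SIA route buys is stability under products of different matrices, which \cite{Rwr} needs for time-varying graphs but this paper does not.
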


\noindent {\bf Notations}: $\|\cdot\|$ represents the Euclidean norm for
vectors or the induced 2-norm for matrices. The notation $\bf 1$ denotes a
column vector with each component 1 and proper dimension. The notation
$\rho(\cdot)$ stands for the spectral radius for matrices and
$\rho_2(\cdot)$ indicates the minimum positive eigenvalue for matrices
which have positive eigenvalues.

\section{Event-triggered control}
Consider following multi-agent system
\begin{eqnarray}
\begin{cases}
\dot{x}_{i}(t)=u_{i}(t)\\
u_{i}(t)=\sum_{j=1}^{m}L_{ij}x_{j}(t_{k(t)}),~i=\onetom\end{cases}\label{mg2}
\end{eqnarray}
where $x_{i}$ represents the state of the agent $i$ at time $t$ and
$L=[L_{ij}]_{i,j=1}^{m}$ is the weighted Laplacian matrix associated with
the underlying graph of the network system, and
$$k(t)=\arg\max_{ k'=1,2,\cdots}\{t_{k'}\le t\}$$
with $t_1=0\le t_2\le t_3\le\cdots$ which are the triggering time points to
be determined.

In order to design the appropriate triggering times, we define the state
measurement error as:
\begin{align}
\Delta x(t)=x(t_k)-x(t),~t\in[t_k,t_{k+1}),~k=0,1,2,\cdots
\end{align}
or
\begin{align}
\Delta x(t)=x(t_{k(t)})-x(t),
\end{align}
where $x(t)=[x_{1}(t),\cdots,x^{m}(t)]^{\top}\in\R^{m}$.

\subsection{Asymmetric and irreducible topology}
In this subsection, we consider the case that $L$ is irreducible.
Equivalently, $\mathcal G$ is strongly connected. To depict the event that
triggers the next coupling term basing time point, we consider the
following candidate Lyapunov function:
\begin{eqnarray}
V(t)=\frac{1}{2}\sum_{i=1}^{m}\xi_{i}(x_{i}(t)-\bar{x}(t))^{2}=\frac{1}{2}
(x(t)-\bar{X}(t))^{\top}\Xi
(x(t)-\bar{X}(t))=\frac{1}{2}x(t)^{\top}Ux(t)\label{V}
\end{eqnarray}
where $\bar{x}(t)=\sum_{i=1}^{m}\xi_{i}x_{i}(t)$ is the weighted average of
$x(t)$ with respect to $\xi$,
$\bar{X}(t)=[\bar{x}(t),\cdots,\bar{x}(t)]^{\top}$. Then, by the definition
of $\Delta x(t)$, $L$, $\dot{\bar{X}}(t)=0$ and (\ref{RQ}), for any $a>0$,
the derivative of $V(t)$ along (\ref{mg2}) is
\begin{align}
&(x(t)-\bar{X}(t))=[1+(t-t_{k}) L](x(t_k)-\bar{X}(t_k))
\end{align}

\begin{align}
&(x(t)-\bar{X}(t))^{\top}\Xi(x(t)-\bar{X}(t))\nonumber\\
=&{(x(t_k)-\bar{X}(t_k))}^{\top}[I+(t-t_{k}) L^{T}]\Xi[1+(t-t_{k}) L](x(t_k)-\bar{X}(t_k))\nonumber\\
=&(x(t_k)-\bar{X}(t_k))^{\top} \Xi(x(t_k)-\bar{X}(t_k))\nonumber\\
&+(t-t_{k})(x(t_k)-\bar{X}(t_k))^{\top} L(x(t_k)-\bar{X}(t_k))\nonumber\\
=&[1+(t-t_{k})](x(t_k)-\bar{X}(t_k))^{\top}R(x(t_k)-\bar{X}(t_k))\nonumber\\
\end{align}

\begin{align}
\frac{d}{dt}V(t)=&(x(t)-\bar{X}(t))^{\top}\Xi(\dot{x}(t)-\dot{\bar{X}}(t))\nonumber\\
=&(x(t)-\bar{X}(t))^{\top}\Xi L(x(t_k)-\bar{X}(t_k))\nonumber\\
=&(x(t_k)-\bar{X}(t_k))^{\top}\Xi L(x(t_k)-\bar{X}(t_k))\nonumber\\
&+(t-t_{k})(x(t_k)-\bar{X}(t_k))^{\top}\Xi L(x(t_k)-\bar{X}(t_k))\nonumber\\
=&[1+(t-t_{k})](x(t_k)-\bar{X}(t_k))^{\top}R(x(t_k)-\bar{X}(t_k))\nonumber\\
\end{align}

\begin{align}&(x(t)-\bar{X}(t))^{\top}\Xi L(x(t)+\Delta x(t))\nonumber\\
=&(x(t)-\bar{X}(t))^{\top}\Xi L(x(t)+\Delta x(t))\nonumber\\
=&x^{\top}(t)\Xi Lx(t)+x^{\top}(t)\Xi L\Delta x(t)\nonumber\\
\le&x^{\top}(t)Rx(t)+\frac{a}{2}x^{\top}(t)\Xi LL^{\top}\Xi x(t)+\frac{1}{2a}\|\Delta x(t)\|^2\nonumber\\
\le&x^{\top}(t)Rx(t)+\frac{a\beta_m}{2\lambda_2}x^{\top}(t)(-R)x(t)+\frac{1}{2a}\|\Delta x(t)\|^2\nonumber\\
=&-(1-\frac{a\beta_m}{2\lambda_2})x^{\top}(t)(-R)x(t)+\frac{1}{2a}\|\Delta x(t)\|^2.\label{dV0.1}
\end{align}
Therefore, we can give
\begin{theorem}\label{thm0.1}
Suppose that $\mathcal G$ is strongly connected. Set $t_{k+1}$ as the time
point such that for some fixed $\gamma\in(0,1)$ and
$0<a<\frac{2\lambda_2}{\beta_m}$
\begin{eqnarray}
t_{k+1}=\max\left\{\tau\ge t_{k}:~\|x(t_k)-x(t)\|\le \sqrt{2\gamma
a(1-\frac{a\beta_m}{2\lambda_2})x^{\top}(t)(-R)x(t)},~\forall
t\in[t_k,\tau]\right\}\label{event0.1}
\end{eqnarray}
Then, system (\ref{mg2}) reaches a consensus; In addition, for all
$i\in\mathcal I$, we have
$$\lim_{t\to\infty}x_{i}(t)=\sum_{j=1}^{m}\xi_{j}x_{j}(0)$$
and
$$\lim_{t\to\infty}x_{i}(t_{k(t)})=\sum_{j=1}^{m}\xi_{j}x_{j}(0).$$
\end{theorem}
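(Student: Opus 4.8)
The plan is to substitute the triggering rule (\ref{event0.1}) into the already-derived differential inequality (\ref{dV0.1}) and show that $V$ decays exponentially to zero. On each interval $[t_k,t_{k+1})$ the rule (\ref{event0.1}), once squared, gives $\|\Delta x(t)\|^2\le 2\gamma a(1-\frac{a\beta_m}{2\lambda_2})x^\top(t)(-R)x(t)$, so that $\frac{1}{2a}\|\Delta x(t)\|^2\le \gamma(1-\frac{a\beta_m}{2\lambda_2})x^\top(t)(-R)x(t)$. Inserting this into (\ref{dV0.1}) merges the two terms:
$$\frac{d}{dt}V(t)\le -(1-\gamma)\Big(1-\frac{a\beta_m}{2\lambda_2}\Big)x^\top(t)(-R)x(t),$$
and since $\gamma\in(0,1)$ and $0<a<\frac{2\lambda_2}{\beta_m}$, both factors are strictly positive, giving $\dot V\le 0$.

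To turn monotonicity into exponential decay I would invoke the lower half of (\ref{RU}). Because $U$ and $-R$ both vanish on the span of $\mathbf 1$, the matrix inequality $\frac{\lambda_2}{\mu_m}U\le -R$ applies to the actual state and yields $x^\top(t)(-R)x(t)\ge \frac{\lambda_2}{\mu_m}x^\top(t)Ux(t)=\frac{2\lambda_2}{\mu_m}V(t)$. Hence $\dot V\le -\kappa V$ with $\kappa=(1-\gamma)(1-\frac{a\beta_m}{2\lambda_2})\frac{2\lambda_2}{\mu_m}>0$, and the comparison lemma gives $V(t)\le V(0)e^{-\kappa t}\to 0$. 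By (\ref{V}) we have $V(t)=\frac12\sum_i\xi_i(x_i(t)-\bar x(t))^2$ with every $\xi_i>0$ (Lemma \ref{lem2}), so $V(t)\to 0$ forces $x_i(t)-\bar x(t)\to 0$ for all $i$, which is consensus.

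Next I would identify the limit. Differentiating $\bar x(t)=\xi^\top x(t)$ along (\ref{mg2}) gives $\dot{\bar x}(t)=\xi^\top L\,x(t_{k(t)})=0$ since $\xi^\top L=0$ (Lemma \ref{lem2}); thus $\bar x(t)\equiv\bar x(0)=\sum_j\xi_j x_j(0)$, and combining with $x_i(t)-\bar x(t)\to 0$ yields $\lim_{t\to\infty}x_i(t)=\sum_j\xi_j x_j(0)$. For the sampled state, the upper half of (\ref{RU}) gives $x^\top(t)(-R)x(t)\le\frac{2\lambda_m}{\mu_2}V(t)\to 0$, so the threshold in (\ref{event0.1}) forces $\|\Delta x(t)\|=\|x(t_{k(t)})-x(t)\|\to 0$; hence $x_i(t_{k(t)})$ converges to the same value.

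The step I expect to be the main obstacle is well-posedness of the triggering sequence, which the limits as $t\to\infty$ tacitly require: one must check that the $\max$ in (\ref{event0.1}) is attained at some $t_{k+1}>t_k$ and that the $t_k$ do not accumulate at a finite time (Zeno). Strict positivity of each interval is immediate, since $\Delta x(t_k)=0$ while the right-hand side of (\ref{event0.1}) is positive away from consensus, so continuity gives $t_{k+1}>t_k$. A uniform lower bound is more delicate: between triggers $\dot x(t)=Lx(t_k)$ is constant, so $\|\Delta x(t)\|=(t-t_k)\|Lx(t_k)\|$ grows linearly, and I would compare this rate against the threshold---using that, on the complement of $\mathbf 1$, the seminorms $\|L\cdot\|$ and $\sqrt{(\cdot)^\top(-R)(\cdot)}$ are equivalent---to extract a positive minimum dwell time and thereby rule out Zeno, ensuring the solution and the stated limits are defined for all $t\ge 0$.
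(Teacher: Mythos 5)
Your proof is correct and takes essentially the same approach as the paper's: substitute the triggering condition (\ref{event0.1}) into (\ref{dV0.1}), apply (\ref{RU}) to obtain $\frac{d}{dt}V(t)\le-\kappa V(t)$ and hence exponential decay of $V$, then identify the consensus value through the invariance of $\bar{x}(t)$ (since $\xi^{\top}L=0$) and deduce convergence of the sampled states from $\|\Delta x(t)\|\to 0$. The well-posedness/Zeno issue you flag at the end is exactly what the paper defers to the separate Theorem \ref{thm0.11}, where a uniform positive lower bound $\tau_{0}$ on the inter-event times is established, so your acknowledgment of it does not constitute a gap in the present argument.
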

\begin{proof}  From (\ref{event0.1}), we know
\begin{align}
\|\Delta x(t)\|=\|x(t_{k(t)})-x(t)\|\le
\sqrt{2\gamma a(1-\frac{a\beta_m}{2\lambda_2})x^{\top}(t)(-R)x(t)}.\label{event0.1tl}
\end{align}
From (\ref{event0.1tl}), (\ref{dV0.1}) and (\ref{RU}), we have
\begin{align}
\frac{d}{dt}V(t)\le&-(1-\frac{a\beta_m}{2\lambda_2})(1-\gamma)x^{\top}(t)(-R)x(t)\nonumber\\
\le&-(1-\frac{a\beta_m}{2\lambda_2})\frac{\lambda_{2}(1-\gamma)}{\mu_{m}}x^{\top}(t)Ux(t)\nonumber\\
=&-(1-\frac{a\beta_m}{2\lambda_2})\frac{2\lambda_{2}(1-\gamma)}{\mu_{m}}V(t)\label{speed}
\end{align}
valid for all $t\ge 0$. It means when $ t_{k+1}\ge t\ge t_{k}$
\begin{align}
V(t)\le V(t_{k})\bigg(exp\bigg\{-(1-\frac{a\beta_m}{2\lambda_2})
\frac{2\lambda_{2}(1-\gamma)}{\mu_{m}} (t-t_{k})\bigg\}\bigg)
\end{align}
{Thus
\begin{align}
V(t)\le V(0)\bigg(exp\bigg\{-(1-\frac{a\beta_m}{2\lambda_2})
\frac{2\lambda_{2}(1-\gamma)}{\mu_{m}} t\bigg\}\bigg)
\end{align}}
This implies that system (\ref{mg2}) reaches a consensus and for all
$i=1,\cdots,m,$
\begin{align}
x_{i}(t)-\sum_{j=1}^{m}\xi_{j}x_{j}(0)=
O\bigg(exp\bigg\{-(1-\frac{a\beta_m}{2\lambda_2})\frac{2\lambda_{2}(1-\gamma)}{\mu_{m}}
{\color{blue} t}\bigg\}\bigg)
\end{align}
and
\begin{align}
x_{i}(t)-x_{i}(t_{k(t)})=
O\bigg(exp\bigg\{-(1-\frac{a\beta_m}{2\lambda_2})\frac{2\lambda_{2}(1-\gamma)}{\mu_{m}}
t\bigg\}\bigg)
\end{align}
The proof is completed.
\end{proof}

Next, we will prove that the above event-triggered rule is realizable, i.e.
the inter-event times $\{t_{k+1}-t_k\}$ are up limited and strictly
positive, which are also known as not exhibiting singular triggering or
Zeno behavior \cite{Joh}. Those are proven in the following theorem.
\begin{theorem}\label{thm0.11}
Under the proposed event-triggered rule (\ref{event0.1}), the next
inter-event time is finite and strictly positive.
\end{theorem}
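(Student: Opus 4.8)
The plan is to exploit that on each inter-event interval the feedback is frozen, so the closed loop is an explicit affine flow, and then to compare the exact growth of the measurement error with the triggering threshold. On $[t_k,t_{k+1})$ the control in (\ref{mg2}) is the constant vector $Lx(t_k)$, hence $\dot x(t)\equiv Lx(t_k)$ and
\begin{equation}
x(t)=x(t_k)+(t-t_k)Lx(t_k),\qquad \Delta x(t)=-(t-t_k)Lx(t_k),
\end{equation}
giving the exact linear growth $\|\Delta x(t)\|=(t-t_k)\|Lx(t_k)\|$. I would also note that $\bar x=\xi^{\top}x$ is invariant since $\xi^{\top}L=0$ (Lemma \ref{lem2}), and that, because $\mathcal G$ is strongly connected, $\ker L=\ker(-R)=\mathrm{span}\{\mathbf 1\}$; consequently $\|Lx(t_k)\|$ and the seminorm $\|x(t_k)\|_{-R}:=\sqrt{x(t_k)^{\top}(-R)x(t_k)}$ vanish simultaneously, exactly when $x(t_k)$ is already a consensus state. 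This separates the degenerate consensus case (the flow is stationary and no further event is needed) from the generic case treated below.

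For strict positivity, write $c:=2\gamma a(1-\tfrac{a\beta_m}{2\lambda_2})>0$, so that rule (\ref{event0.1}) first triggers when $\|\Delta x(t)\|=\sqrt{c}\,\|x(t)\|_{-R}$. Applying the reverse triangle inequality for the seminorm $\|\cdot\|_{-R}$ to the affine form of $x(t)$ gives $\|x(t)\|_{-R}\ge\|x(t_k)\|_{-R}-(t-t_k)\|Lx(t_k)\|_{-R}$, and substituting this at $t=t_{k+1}$ and rearranging yields the explicit bound
\begin{equation}
t_{k+1}-t_k\ \ge\ \frac{\sqrt{c}\,\|x(t_k)\|_{-R}}{\|Lx(t_k)\|+\sqrt{c}\,\|Lx(t_k)\|_{-R}}\ >\ 0,
\end{equation}
which is strictly positive whenever $x(t_k)$ is not a consensus state. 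This is the no-Zeno conclusion at the level of a single interval.

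For finiteness I would argue by contradiction: suppose $t_{k+1}-t_k=\infty$ with $x(t_k)$ not at consensus, so $Lx(t_k)\neq 0$. Since the triggering inequality then holds throughout $[t_k,\infty)$, so does (\ref{event0.1tl}), and the derivation leading to (\ref{speed}) applies, giving $\dot V\le 0$ on this interval; hence $V$ stays bounded, which, with $\bar x$ invariant and $\xi>0$, keeps $\|x(t)\|$ bounded for all $t\ge t_k$. But the frozen dynamics force $x(t)=x(t_k)+(t-t_k)Lx(t_k)$ to be unbounded, a contradiction, so the interval is finite.

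The delicate point is not the per-interval positivity, which the exact linear error growth makes transparent, but upgrading it to a lower bound uniform in $k$, i.e. genuinely ruling out an accumulation of triggering times. That would require bounding $\|x(t_k)\|_{-R}/\|Lx(t_k)\|$ from below independently of $k$, and here the natural spectral inequality (\ref{RQ}) is awkward because it compares $-R$ with $Q=\Xi LL^{\top}\Xi$ rather than with $L^{\top}L$; reconciling this mismatch, via (\ref{RQ})--(\ref{RU}) together with the fact that $x(t_k)$ remains in a bounded set away from the consensus line $\mathrm{span}\{\mathbf 1\}$, is the step I expect to require the most care. For the statement as written, however, only the per-interval positivity and finiteness established above are needed.
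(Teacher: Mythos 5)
Your proof is correct for the statement as written, and it takes a genuinely different route from the paper's on both halves. For strict positivity, the paper follows Tabuada's scheme: it differentiates the ratio $\|\Delta x(t)\|/\|Bx(t)\|$ (where $-R=B^{2}$, so $\|Bx\|=\|x\|_{-R}$ in your notation), bounds this derivative by a quadratic polynomial in the ratio, and invokes the comparison principle to get an explicit lower bound $\tau_0$ in (\ref{tau0}) that is \emph{uniform in} $k$; you instead exploit the exact affine form of the frozen flow plus the reverse triangle inequality for the seminorm $\|\cdot\|_{-R}$, which is more elementary and transparent but yields a bound depending on $x(t_k)$. For finiteness, the paper shows that the threshold, dominated by $\|x(t)-\bar X(0)\|$ up to the factor $\sqrt{2\gamma a(1-\frac{a\beta_m}{2\lambda_2})\lambda_m}$, decays exponentially via (\ref{speed}) while $\|x(t_k)-x(t)\|$ stays bounded below near $\|x(t_k)-\bar X(0)\|>0$, so the condition must fail in finite time; your contradiction between Lyapunov boundedness (from $\dot V\le 0$ under the trigger condition) and the unboundedness of $x(t_k)+(t-t_k)Lx(t_k)$ when $Lx(t_k)\ne 0$ is equally valid and arguably cleaner. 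What the paper's heavier machinery buys is exactly the uniform bound $\tau_0$: that is what excludes Zeno behavior (the accumulation of events alluded to just before the theorem), and it is reused in Remark \ref{remark1} and in the proof of Theorem \ref{thm0.01c}, where $t_{k+1}-t_k\ge\tau_0$ is quoted. You correctly flag the missing uniformity, but the fix is easier than you suggest and needs no reconciliation of $-R$ with $L^{\top}L$ through (\ref{RQ}): since $\bar x$ is invariant, $y_k:=x(t_k)-\bar X(0)$ always lies in the fixed subspace $\{y:\xi^{\top}y=0\}$, on which $-R$ is positive definite because its kernel $\mathrm{span}\{\mathbf 1\}$ meets that subspace only at the origin; hence with $c':=\min\{y^{\top}(-R)y:\ \xi^{\top}y=0,\ \|y\|=1\}>0$ one has $\|x(t_k)\|_{-R}\ge\sqrt{c'}\,\|y_k\|$, while $\|Lx(t_k)\|=\|Ly_k\|\le\|L\|\,\|y_k\|$ and $\|Lx(t_k)\|_{-R}\le\|B\|\,\|L\|\,\|y_k\|$, so your own inequality upgrades to $t_{k+1}-t_k\ge \sqrt{c\,c'}\big/\bigl(\|L\|(1+\sqrt{c}\,\|B\|)\bigr)$, a $k$-independent constant, recovering everything the paper's argument delivers.
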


\begin{proof}
Firstly, we will prove that in case $\|x(t_{k})-\bar{X}(0)\|\neq0$, there
exists a finite triggering time $t_{k+1}>t_k$.

In fact, it is easy to see that for any $t>t_k$, we have

\begin{align*} V(t)\le
V(t_{k})\bigg(exp\bigg\{-(1-\frac{a\beta_m}{2\lambda_2})
\frac{2\lambda_{2}(1-\gamma)}{\mu_{m}} (t-t_{k})\bigg\}\bigg)
\end{align*}
Then,
\begin{align*}
\|x(t)-\bar{X}(0)\|&=\sqrt{\|x(t)-\bar{X}(0)\|^{2}}\le\sqrt{\frac{2}{min_{i}\{\xi_i\}}V(t)}\\
&
\le\sqrt{\frac{2}{min_{i}\{\xi_i\}}V(t_{k})\bigg(exp\bigg\{-(1-\frac{a\beta_m}{2\lambda_2})
\frac{2\lambda_{2}(1-\gamma)}{\mu_{m}} (t-t_{k})\bigg\}\bigg)}
\end{align*}
which means $\|x(t)-\bar{X}(0)\|$ eventually exponentially decreases with
respect to $t$.

On the other hand, it is easy to check that
\begin{align*}
&\sqrt{2\gamma a(1-\frac{a\beta_m}{2\lambda_2})x^{\top}(t)(-R)x(t)}\\
=&\sqrt{2\gamma a(1-\frac{a\beta_m}{2\lambda_2})(x(t)-\bar{X}(0))^{\top}(-R)(x(t)-\bar{X}(0))}\\
\le&\sqrt{2\gamma
a(1-\frac{a\beta_m}{2\lambda_2})\lambda_m}\|x(t)-\bar{X}(0)\|,
\end{align*}
Thus, for sufficient large $t>t_{k}$, we have
\begin{align*}
\|x(t_k)-x(t)\|&\ge\|x(t_{k})-\bar{X}(0)\|-\|x(t)-\bar{X}(0)\|\\
&>\bigg(\sqrt{2\gamma a (1-\frac{a\beta_m}{2\lambda_2})\lambda_m}+1\bigg)\|x(t)-\bar{X}(0)\|-\|x(t)-\bar{X}(0)\|\\
&=\sqrt{2\gamma a(1-\frac{a\beta_m}{2\lambda_2})\lambda_m}~\|x(t)-\bar{X}(0)\|\\
&\ge \sqrt{2\gamma a(1-\frac{a\beta_m}{2\lambda_2})x^{\top}(t)(-R)x(t)}
\end{align*}
On the other hand, $\|x(t_k)-x(t)\|=0$  and $x^{\top}(t)(-R)x(t)\neq0$
at $t=t_{k}$, since $\|x(t_{k})-\bar{X}(0)\|\neq0$. Therefore, 
there exists $t_{k+1} >t_k$ satisfying
\begin{eqnarray}
\|x(t_k)-x(t)\|\le \sqrt{2\gamma
a(1-\frac{a\beta_m}{2\lambda_2})x^{\top}(t)(-R)x(t)}~\forall
t\in[t_k,t_{k+1}]
\end{eqnarray}
This completes the proof that the next inter-event time is finite.

(2) Similar to \cite{Pta}, we can calculate the time derivative of
$\frac{\|\Delta x(t)\|}{\|Bx(t)\|}$.
\begin{align*}
&\frac{d}{dt}\frac{\|\Delta x(t)\|}{\|Bx(t)\|}=\frac{-(\Delta x(t))^{\top}\dot{x}(t)}{\|\Delta x(t)\|\|Bx(t)\|}-\frac{\|\Delta x(t)\|x^{\top}(t)BB\dot{x}(t)}{\|Bx(t)\|^2\|Bx(t)\|}\\
&=\frac{-(\Delta x(t))^{\top}L(x(t)+\Delta x(t))}{\|\Delta x(t)\|\|Bx(t)\|}-\frac{\|\Delta x(t)\|x^{\top}(t)BBL(x(t)+\Delta x(t))}{\|Bx(t)\|^2\|Bx(t)\|}\\
&\le \sqrt{\frac{\rho(L^{\top}L)}{\lambda_2}}+\|L\|\frac{\|\Delta x(t)\|}{\|Bx(t)\|}+\|B\|\sqrt{\frac{\rho(L^{\top}L)}{\lambda_2}}\frac{\|\Delta x(t)\|}{\|Bx(t)\|}+\|L\|\|B\|\frac{\|\Delta x(t)\|^2}{\|Bx(t)\|^2}\\
&=\Big[\|L\|\frac{\|\Delta x(t)\|}{\|Bx(t)\|}+\sqrt{\frac{\rho(L^{\top}L)}{\lambda_2}}\Big]\Big[\|B\|\frac{\|\Delta x(t)\|}{\|Bx(t)\|}+1\Big].
\end{align*}
Via comparison principle, we have $\frac{\|\Delta x(t)\|}{\|Bx(t)\|}\le \phi(t,\phi_0)$, where $\phi(t,\phi_0)$ is the solution of following differential equation
\begin{align*}
\begin{cases}
\frac{d\phi}{dt}=\Big[\|L\|\phi+\sqrt{\frac{\rho(L^{\top}L)}{\lambda_2}}\Big]\Big[\|B\|\phi+1\Big]\\
\phi(t,\phi_0)=\phi_0\end{cases}.
\end{align*}
Hence the inter-event times are bounded from below by the time $\tau_0$ which satisfies $\phi(\tau_0,0)=\sqrt{\gamma2a(1-\frac{a\beta_m}{2\lambda_2})}$. We can calculate $\tau_0$ as follows.
\begin{align*}
\int_{0}^{\sqrt{\gamma2a(1-\frac{a\beta_m}{2\lambda_2})}}\frac{d\phi}{\Big[\|L\|\phi+\frac{\rho(L^{\top}L)}
{\lambda_2}\Big]\Big[\|B\|\phi+1\Big]}=\int_{0}^{\tau_0}dt,
\end{align*}
which yields
\begin{align}\label{tau0}
\tau_0=\begin{cases}
g_1(\|B\|\sqrt{\gamma2a(1-\frac{a\beta_m}{2\lambda_2})})-g_1(0),~if~k_1\ne k_2\\
1-g_2(\sqrt{\gamma2a(1-\frac{a\beta_m}{2\lambda_2})}),~if~k_1=k_2\end{cases}
\end{align}
with $k_1=\|B\|\sqrt{\frac{\rho(L^{\top}L)}{\lambda_2}},~k_2=\|L\|,~g_1(s)=\frac{1}{|k_1-k_2|}ln|\frac{2k_2s+(k_1+k2)-
|k_1-k_2|}{2k_2s+(k_1+k2)+|k_1-k_2|}|,~g_2(s)=\frac{1}{1+s}$.
This completes the proof the next inter-event time is strictly positive.
\end{proof}

Just as the discussion in \cite{Zliu}, we have
\begin{remark}\label{remark1}
Since larger $\tau_0$ implies less control updating times, thus the larger $\tau_0$, the less resources needed for the equipment, like embedded microprocessors, to communicate between agents. On the other hand, the smaller $\dot{V}(t)$ means the faster convergence speed. We know from (\ref{tau0}) that $\tau_0$ is increasing with respect to $\gamma$. Then a larger $\gamma$ leads to less control updating times for each agent, while a smaller $\gamma$ leads to bigger low bound of the system convergence rate according to (\ref{speed}). It should be emphasized that we can not say a smaller $\gamma$ leads to faster system convergence according to (\ref{speed}). Therefore, the protocol designer should choose a proper $\gamma$ so as to make a compromise between the control actuation times and the system convergence speed.
\end{remark}

In the following, we will propose an alternative event-triggered rule.
\begin{corollary}\label{thm0.01}
Suppose that $\mathcal G$ is strongly connected. Set $t_{k+1}$ as the time
point such that for any fixed $\gamma\in(0,1)$
\begin{eqnarray}
t_{k+1}=\max\left\{\tau\ge t_{k}:~|x^{\top}(t)\Xi L(x(t_k)-x(t))|\le
\gamma x^{\top}(t)(-R)x(t),~\forall t\in[t_k,\tau]\right\}.\label{event0.01}
\end{eqnarray}
Then, system (\ref{mg2}) reaches a consensus; In addition, for all
$i\in\mathcal I$, we have
$$\lim_{t\to\infty}x_{i}(t)=\sum_{j=1}^{m}\xi_{j}x_{j}(0)$$ and
$$\lim_{t\to\infty}x_{i}(t_{k(t)})=\sum_{j=1}^{m}\xi_{j}x_{j}(0).$$
\end{corollary}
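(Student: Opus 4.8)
The plan is to mirror the proof of Theorem \ref{thm0.1}, since Corollary \ref{thm0.01} differs only in replacing the explicit norm‐based triggering bound (\ref{event0.1}) with the implicit bilinear condition (\ref{event0.01}) on $|x^{\top}(t)\Xi L(x(t_k)-x(t))|$. First I would recall from the derivation leading to (\ref{dV0.1}) that, since $u_i(t)=\sum_j L_{ij}x_j(t_{k(t)})$ and $x(t_{k(t)})=x(t)+\Delta x(t)$, the exact derivative of the Lyapunov function is
\begin{align*}
\frac{d}{dt}V(t)=&(x(t)-\bar X(t))^{\top}\Xi L\,(x(t)+\Delta x(t))\\
=&x^{\top}(t)\Xi L x(t)+x^{\top}(t)\Xi L\,\Delta x(t)\\
=&x^{\top}(t)Rx(t)+x^{\top}(t)\Xi L\,\Delta x(t),
\end{align*}
using $x^{\top}\Xi L x=x^{\top}Rx$ and the fact that $\bar X(t)$ lies in the kernel so it may be dropped. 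Here $\Delta x(t)=x(t_{k(t)})-x(t)=x(t_k)-x(t)$, so the second term is exactly the quantity controlled in (\ref{event0.01}).

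The key step is then to insert the triggering condition directly. On every interval $[t_k,t_{k+1}]$ the rule (\ref{event0.01}) guarantees $x^{\top}(t)\Xi L\,\Delta x(t)\le|x^{\top}(t)\Xi L(x(t_k)-x(t))|\le\gamma\,x^{\top}(t)(-R)x(t)$, whence
\begin{align*}
\frac{d}{dt}V(t)\le&-x^{\top}(t)(-R)x(t)+\gamma\,x^{\top}(t)(-R)x(t)\\
=&-(1-\gamma)\,x^{\top}(t)(-R)x(t).
\end{align*}
Applying inequality (\ref{RU}), namely $x^{\top}(t)(-R)x(t)\ge\frac{\lambda_2}{\mu_m}x^{\top}(t)Ux(t)=\frac{2\lambda_2}{\mu_m}V(t)$, yields the exponential decay estimate $\frac{d}{dt}V(t)\le-\frac{2\lambda_2(1-\gamma)}{\mu_m}V(t)$, from which $V(t)\le V(0)\exp\{-\frac{2\lambda_2(1-\gamma)}{\mu_m}t\}$ follows by integration. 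Since $V(t)\to0$ forces $x_i(t)-\bar x(t)\to0$ for all $i$ and $\bar x(t)=\xi^{\top}x(t)$ is conserved (because $\xi^{\top}L=0$ by Lemma \ref{lem2}, giving $\dot{\bar x}=\xi^{\top}\dot x=\xi^{\top}Lx(t_{k(t)})=0$), the consensus value is $\bar x(0)=\sum_j\xi_j x_j(0)$, establishing both limit claims exactly as in Theorem \ref{thm0.1}.

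The main obstacle, and the one genuine point of difference from Theorem \ref{thm0.1}, is establishing that the rule (\ref{event0.01}) is realizable in the sense of Theorem \ref{thm0.11}, i.e. that the inter-event times are strictly positive and finite; this must be checked because the triggering functional is now bilinear rather than a clean norm bound. For strict positivity I would argue that at $t=t_k$ the left side $|x^{\top}(t_k)\Xi L(x(t_k)-x(t_k))|=0$ while the right side $\gamma\,x^{\top}(t_k)(-R)x(t_k)$ is strictly positive away from consensus, and that both sides are continuous in $t$, so by continuity the inequality persists on a nondegenerate interval, ruling out Zeno behavior; a quantitative lower bound can be obtained by bounding the growth rate of the ratio $\frac{|x^{\top}(t)\Xi L\,\Delta x(t)|}{x^{\top}(t)(-R)x(t)}$ via a comparison argument entirely analogous to the one producing $\tau_0$ in (\ref{tau0}). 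Finiteness of the next triggering time follows the same structure as in Theorem \ref{thm0.11}: the right-hand side $\gamma\,x^{\top}(t)(-R)x(t)$ decays exponentially toward zero as the system approaches consensus, whereas $|x^{\top}(t)\Xi L\,\Delta x(t)|$ cannot stay below it indefinitely once $\Delta x(t)$ accumulates, forcing a trigger at some finite $t_{k+1}>t_k$.
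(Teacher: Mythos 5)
Your Lyapunov argument is exactly the paper's: you combine (\ref{dV0.1}) with the trigger (\ref{event0.01}) to get $\frac{d}{dt}V(t)\le-(1-\gamma)x^{\top}(t)(-R)x(t)$, then use (\ref{RU}) for exponential decay of $V$ and the conservation of $\bar{x}(t)$ (from $\xi^{\top}L=0$) to identify the limit $\sum_{j}\xi_{j}x_{j}(0)$; this is precisely how the paper proves the corollary, citing ``the same arguments as in Theorem \ref{thm0.1}.'' You are also right that the second limit, $\lim_{t\to\infty}x_{i}(t_{k(t)})$, hinges on realizability of the rule; the paper likewise does not prove this inside the corollary but defers it to a separate result (Theorem \ref{thm0.01c}).

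The genuine gap is in your finiteness argument. You claim it ``follows the same structure as in Theorem \ref{thm0.11}'': the threshold decays while the error term accumulates, so a trigger must fire. That structure does not transfer here, because the left side of (\ref{event0.01}) is not a pure error norm: it carries the factor $x^{\top}(t)\Xi L$, and since $\xi^{\top}L=0$ this factor tends to zero whenever $x(t)$ approaches consensus. So both sides of the trigger inequality vanish together, and you cannot conclude that the left side eventually dominates without comparing rates. Concretely, under the never-trigger hypothesis one has $x(t)=x(t_k)+(t-t_k)Lx(t_k)$, so both sides are quadratics in $t-t_k$; finiteness then requires comparing leading coefficients (the left side's is $(Lx(t_k))^{\top}(-R)Lx(t_k)$ and the right side's is $\gamma$ times the same quantity, which is strictly positive unless $Lx(t_k)=0$), not an appeal to accumulation. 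The paper's Theorem \ref{thm0.01c} avoids this computation with a cleaner contradiction: under the never-trigger hypothesis the decay $\frac{d}{dt}V(t)\le-(1-\gamma)x^{\top}(t)(-R)x(t)$ still holds for all $t\ge t_k$, forcing consensus and hence $\dot{x}(t)\to\mathbf{0}$; but the control is frozen, so $\dot{x}(t)\equiv Lx(t_k)$, which is a nonzero constant whenever $x(t_k)$ is not already at consensus, since $\mathrm{rank}(L)=m-1$ makes $\ker L=\mathrm{span}\{\mathbf 1\}$. You should replace your accumulation heuristic with this argument (or with the leading-coefficient comparison). Your strict-positivity/Zeno discussion is fine in spirit and close to the paper's: Theorem \ref{thm0.01c} bounds the threshold of (\ref{event0.01}) from below by the norm-based threshold of (\ref{event0.1}) via (\ref{compare}) and reuses the uniform bound $\tau_0$ from (\ref{tau0}).
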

\begin{proof} From (\ref{event0.01}) and (\ref{dV0.1}), we have
\begin{align*}
\frac{d}{dt}V(t)\le (1-\gamma)x^{\top}(t)Rx(t).
\end{align*}
By the same arguments as in the proof of Theorem \ref{thm0.1}, one can conclude \begin{align*}
V(t)=O\bigg(exp\bigg\{-\frac{2\lambda_{2}(1-\gamma)}{\mu_{m}} t\bigg\}\bigg)
\end{align*}
and
\begin{align}\label{xRx}
x^{\top}(t)(-R)x(t)=O\bigg(exp\bigg\{-\frac{2\lambda_{2}(1-\gamma)}{\mu_{m}} t\bigg\}\bigg).
\end{align}
This implies that system (\ref{mg2}) reaches a consensus and $$\lim_{t\to\infty}x_{i}(t)=\sum_{j=1}^{m}\xi_{j}x_{j}(0).$$ If we can prove that the above event-triggered rule (\ref{event0.01}) is realisable, i.e., the inter-event times are up limited and strictly positive, then we have $$\lim_{t\to\infty}x_{i}(t_{k(t)})=\sum_{j=1}^{m}\xi_{j}x_{j}(0).$$ The proof of the above event-triggered rule (\ref{event0.01}) is realisable can be found in the following theorem. This completes the proof of this theorem.
\end{proof}

\begin{theorem}\label{thm0.01c}
Under the proposed event-triggered rule (\ref{event0.01}), the next inter-event time is finite and strictly positive.
\end{theorem}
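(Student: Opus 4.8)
The plan is to prove the two assertions separately, following the template of Theorem \ref{thm0.11} but adapting it to the quadratic form appearing in rule (\ref{event0.01}). Throughout I assume consensus has not yet been reached at the current trigger, i.e. $x^{\top}(t_k)(-R)x(t_k)\neq 0$ (equivalently $x(t_k)\neq\bar X(0)$, since by Lemma \ref{lem1} one has $\ker(-R)=\mathrm{span}\{\mathbf 1\}$); otherwise the statement is vacuous.

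For finiteness I would argue by contradiction using the exact evolution between events. If no event ever triggered after $t_k$, then the control would stay frozen at $u=Lx(t_k)$, so the closed-form solution is $x(t)=x(t_k)+(t-t_k)Lx(t_k)$. Writing $s=t-t_k\ge 0$ and $w=Lx(t_k)$, so that $\Delta x(t)=-sw$, both sides of (\ref{event0.01}) become explicit polynomials in $s$. The left side is $|x^{\top}(t)\Xi L\Delta x(t)|=s\,|x(t_k)^{\top}\Xi Lw+s\,w^{\top}\Xi Lw|$, which grows like $w^{\top}(-R)w\,s^2$ as $s\to\infty$: the antisymmetric part of $\Xi L$ drops out of $w^{\top}\Xi Lw=w^{\top}Rw$, and $R=-B^{2}$ is negative semidefinite so $|w^{\top}Rw|=w^{\top}(-R)w$. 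The right side $\gamma x^{\top}(t)(-R)x(t)=\gamma\|Bx(t)\|^{2}$ grows like $\gamma\,w^{\top}(-R)w\,s^2$. The crucial observation is that non-consensus forces $w=Lx(t_k)\notin\mathrm{span}\{\mathbf 1\}$: if $Lx(t_k)=c\mathbf 1$, applying $\xi^{\top}$ and using $\xi^{\top}L=0$ gives $c=0$, hence $x(t_k)\in\ker L=\mathrm{span}\{\mathbf 1\}$. Thus $w^{\top}(-R)w>0$, and since $\gamma<1$ the ratio of the two sides tends to $1/\gamma>1$; the left side eventually exceeds the right, violating (\ref{event0.01}) at some finite $s$ and contradicting $t_{k+1}=\infty$. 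Hence $t_{k+1}$ is finite.

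For strict positivity I would reuse the machinery of Theorem \ref{thm0.11}. First I record a sufficient condition for (\ref{event0.01}) to remain satisfied: by Cauchy--Schwarz and (\ref{RQ}),
$$|x^{\top}(t)\Xi L\Delta x(t)|\le\|L^{\top}\Xi x(t)\|\,\|\Delta x(t)\|=\sqrt{x^{\top}(t)Qx(t)}\,\|\Delta x(t)\|\le\sqrt{\tfrac{\beta_m}{\lambda_2}}\,\|Bx(t)\|\,\|\Delta x(t)\|,$$
while $x^{\top}(t)(-R)x(t)=\|Bx(t)\|^{2}$ by (\ref{RB}). Hence the event does not fire as long as $\frac{\|\Delta x(t)\|}{\|Bx(t)\|}\le\gamma\sqrt{\tfrac{\lambda_2}{\beta_m}}$. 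It therefore suffices to bound from below the time for $\frac{\|\Delta x(t)\|}{\|Bx(t)\|}$, which equals $0$ at $t=t_k$, to grow up to this threshold. The time-derivative estimate for $\frac{\|\Delta x(t)\|}{\|Bx(t)\|}$ and the ensuing comparison principle are identical to those in the proof of Theorem \ref{thm0.11}; integrating the comparison ODE from $0$ to $\gamma\sqrt{\lambda_2/\beta_m}$ yields an explicit $\tau_0>0$ of the same form as (\ref{tau0}), with the threshold replaced by $\gamma\sqrt{\lambda_2/\beta_m}$, that lower-bounds every inter-event interval.

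The main obstacle is the finiteness step: unlike rule (\ref{event0.1}), where the threshold simply decays to zero while the state drifts away from $x(t_k)$, here both sides of (\ref{event0.01}) vanish at $s=0$ and then grow, so the argument must compare their growth rates rather than their limits. The technical points to verify carefully are that the antisymmetric part of $\Xi L$ contributes nothing to $w^{\top}\Xi Lw$, and that non-consensus guarantees $w^{\top}(-R)w>0$ so that the leading $s^{2}$ terms genuinely dominate and the factor $\gamma<1$ tips the inequality. The strict-positivity step is essentially mechanical once the sufficient condition $\|\Delta x\|/\|Bx\|\le\gamma\sqrt{\lambda_2/\beta_m}$ is in hand, since the ODE bound from Theorem \ref{thm0.11} is reusable verbatim.
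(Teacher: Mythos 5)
Your proposal is correct, and its strict-positivity half matches the paper's own argument almost exactly: the paper likewise reduces rule (\ref{event0.01}) via Cauchy--Schwarz to a norm condition of the form $\|\Delta x(t)\|\le \gamma x^{\top}(t)(-R)x(t)/\|x^{\top}(t)\Xi L\|$ (it calls the resulting auxiliary trigger time $t^{3}_{k+1}$), bounds that threshold below using $\|x^{\top}(t)\Xi L\|=\sqrt{x^{\top}(t)Qx(t)}\le\sqrt{\beta_m/\lambda_2}\,\|Bx(t)\|$, and then reuses the comparison-ODE machinery of Theorem \ref{thm0.11}; your direct integration up to the threshold $\gamma\sqrt{\lambda_2/\beta_m}$ is, if anything, cleaner than the paper's last step, which compares against the threshold of rule (\ref{event0.1}) and quotes $\tau_0$ even though its lower bound $\gamma\sqrt{2a(1-\frac{a\beta_m}{2\lambda_2})}$ is slightly smaller than the value $\sqrt{2\gamma a(1-\frac{a\beta_m}{2\lambda_2})}$ that defines $\tau_0$ --- your route avoids that small mismatch. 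Where you genuinely diverge is the finiteness half. The paper argues by contradiction at the level of the Lyapunov analysis: if no event ever fires after $t_k$, the inequality in (\ref{event0.01}) holds for all time, so by Corollary \ref{thm0.01} consensus is reached and hence $\dot x(t)\to 0$; but $\dot x(t)\equiv Lx(t_k)\neq 0$ because $\mathrm{rank}(L)=m-1$ and $x(t_k)$ is not a consensus point --- contradiction. You instead work with the closed-form flow $x(t)=x(t_k)+s\,Lx(t_k)$ and compare the leading $s^{2}$ coefficients of the two sides of (\ref{event0.01}), namely $w^{\top}(-R)w$ versus $\gamma\,w^{\top}(-R)w$ with $w=Lx(t_k)$, using exactly the two facts you flag: the antisymmetric part of $\Xi L$ contributes nothing to $w^{\top}\Xi Lw$, and non-consensus forces $w\notin\mathrm{span}\{\mathbf 1\}$, hence $w^{\top}(-R)w>0$. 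Both arguments are sound; the paper's is shorter but leans on the asymptotic consensus result of Corollary \ref{thm0.01}, while yours is self-contained, purely algebraic, and in principle yields an explicit finite upper bound on the inter-event interval rather than bare finiteness. Both, correctly, must set aside the degenerate case $x(t_k)=\bar X(0)$, in which no further event ever occurs.
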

\begin{proof} (1) Firstly, we will prove that in case $\|x(t_{k})-\bar{X}(0)\|\neq0$, there
exists a finite triggering time $t_{k+1}>t_k$. Otherwise, we have
\begin{align}\label{assumetion}
t_{k(t)}=t_k,~\forall t\ge t_k.
\end{align}
Thus $$\lim_{t\to\infty}x(t_{k(t)})=x(t_k)\neq X(t_k),$$ and $$\dot{x}(t)=Lx(t_k),~\forall t\ge t_k.$$
Additionally, noting $rank(L)=m-1$, we have $$\lim_{t\to\infty}\dot{x}(t)\neq\bf 0.$$
But from $$\lim_{t\to\infty}x_{i}(t)=\sum_{j=1}^{m}\xi_{j}x_{j}(0)$$ we can conclude $$\lim_{t\to\infty}\dot{x}(t)=\bf 0.$$
This is a contradiction, which means the next triggering time after $t_{k}$, i.e., $t_{k+1}$ exists.
This completes the proof that the next inter-event time is finite.

(2)Let
\begin{align*}
t^{3}_{k+1}=\max\left\{\tau\ge t_{k}:~\|x(t_k)-x(t)\|\le
\frac{\gamma x^{\top}(t)(-R)x(t)}{\|x^{\top}(t)\Xi L\|},~\forall t\in[t_k,\tau]\right\}.
\end{align*}
Obviously, $t_{k+1}$ in (\ref{event0.01}) is not less than above $t^{3}_{k+1}$. Since $0<a<\frac{2\lambda_2}{\beta_m}$, we have
\begin{align}
&\frac{\gamma x^{\top}(t)(-R)x(t)}{\|x^{\top}(t)\Xi L\|}\ge\frac{\gamma x^{\top}(t)(-R)x(t)}{\sqrt{\frac{\beta_m}{\lambda_2}x^{\top}(t)(-R)x(t)}}\nonumber\\
&\ge\gamma \sqrt{\frac{\lambda_2}{\beta_m}x^{\top}(t)(-R)x(t)}
\ge\gamma \sqrt{2a(1-\frac{a\beta_m}{2\lambda_2})x^{\top}(t)(-R)x(t)}.\label{compare}
\end{align}
So $$t_{k+1}-t_{k}\ge t^{3}_{k+1}-t_{k}\ge \tau_0.$$  This completes the proof the next inter-event time is strictly positive.
\end{proof}

\begin{remark}
The next update time $t_{k+1}$ in (\ref{event0.01}) is not less than the next update time in (\ref{event0.1}) if under the same $x(t_k)$. And the influence of $\gamma$ in the above event-triggered rule (\ref{event0.01}) is analogous to those in the discussion of Remark \ref{remark1}.
\end{remark}

\subsection{Asymmetric and reducible topology}
In this subsection, we consider the case that $L$ is reducible. The
following mathematic methods are inspired by the thoughts given in
\cite{Ctp}. By proper permutation, we rewrite $L$ as the following
Perron-Frobenius form:
\begin{eqnarray}
L=\left[\begin{array}{llll}L^{1,1}&L^{1,2}&\cdots&L^{1,K}\\
0&L^{2,2}&\cdots&L^{2,K}\\
\vdots&\vdots&\ddots&\vdots\\
0&0&\cdots&L^{K,K}
\end{array}\right]\label{PF}
\end{eqnarray}
with $L^{k,k}$, with dimension $n_{k}$, associated with the $k$-th strongly
connected component of $\mathcal G$, denoted by $SCC_{k}$, $k=\onetoK$.
Accordingly, define $x^{k}=[x_{1}^{k},\cdots,x_{n_{k}}^{k}]^{\top}$,
corresponding to the $SCC_{k}$. Let $\Delta
x^{k}(t)=x^{k}(t_l)-x^{k}(t),~t\in[t_l,t_{l+1}),~l=0,1,2,\cdots$

If $\mathcal G$ has spanning trees, then each $L^{k,k}$ is irreducible or
has one dimension and for each $k<K$, $L^{k,q}\ne 0$ for at least one
$q>k$. Define an auxiliary matrix
$\tilde{L}^{k,k}=[\tilde{L}^{k,k}_{ij}]_{i,j=1}^{n_{k}}$ as
\begin{eqnarray*}
\tilde{L}^{k,k}_{ij}=\begin{cases}L^{k,k}_{ij}&i\ne j\\
-\sum_{p=1,p\not=i}^{n_{k}}L^{k,k}_{ip}&i=j\end{cases}.
\end{eqnarray*}
Then, let
$D^{k}=L^{k,k}-\tilde{L}^{k,k}=diag[D^{k}_{1},\cdots,D^{k}_{n_{k}}]$, which
is a diagonal semi-negative definite matrix and has at least one diagonal
negative (nonzero). Keep the following property in mind \cite{Wcw}:
\begin{property}
$D^{k}_{i}\ne 0$ if and only if there exists $j\in \bigcup_{l>k}SCC_{l}$
such that there exists an directed link from $j$ to $i+M_{k-1}$, i.e.,
$L^{k,l}_{i,j-M_{l-1}}>0$ for some $j$ and $l>k$.
\end{property}

This implies that for $k=1,\cdots,K$, $L^{k,k}$ is an $M$-matrix. From
\cite{Rah}, one can find some positive definite matrix $\Xi^{k}$ such that
$\Xi^{k}L^{k,k}$ is negative definite. In the following, we are to specify
$\Xi^{k}$. Let ${\xi^{k}}^{\top}$ be the left eigenvector of
$\tilde{L}^{k,k}$ with the eigenvalue zero. Since $\tilde{L}^{k,k}$ is
irreducible, we further specify all components of $\xi^{k}$ positive and
sum equal to 1. Let $\Xi^{k}=diag[\xi^{k}]$. We have
\begin{property}\label{p1}
Under the setup above, $\Xi^{k}L^{k,k}$ is negative definite for all $k<K$.
\end{property}
\begin{proof}
Consider a decomposition of the Euclidean space $\R^{n}$. Define
\begin{align*}
\mathcal S_{n}&=\left\{x\in\R^{n}:\quad x_{i}=x_{j}\quad\forall~i,j=\oneton\right\}\\
\mathcal L_{\zeta}&=\left\{x\in\R^{n}:\quad \sum_{i=1}^{n}\zeta_{i}x_{i}=0\right\}
\end{align*}
for some positive vector $\zeta\in\R^{n}$. In this way, we can decompose
$\R^{n_{k}}=\mathcal S_{n_{k}}\oplus\mathcal L_{\xi^{k}}$. For any
$y\in\R^{n_{k}}$ and $x\ne 0$, we can find a unique decomposition of
$y=y_{S}+y_{L}$ such that $y_{S}\in\mathcal S_{n_{k}}$ and
$y_{L}\in\mathcal L_{\xi^{k}}$. Then, noting
\begin{eqnarray*}
y^{\top}\Xi^{k}L^{k,k}y=y^{\top}\Xi^{k}\tilde{L}^{k,k}y+y^{\top}\Xi^{k}D^{k}y.
\end{eqnarray*}
From Lemma \ref{lem1}, if $y_{L}\ne 0$, then
$y^{\top}\Xi^{k}\tilde{L}^{kk}y<0$; otherwise, $y=y_{S}=\alpha{\bf 1}$ for
some $\alpha\ne 0$, then we have
\begin{eqnarray*}
y^{\top}\Xi^{k}D^{k}y=\sum_{i=1}^{n_{k}}D^{k}_{i}\xi^{k}_{i}\alpha^{2}<0.
\end{eqnarray*}
Therefore, we have $y^{\top}\Xi^{k}L^{k,k}y<0$ in any cases, which implies
that $\Xi^{k}L^{k,k}$ is negative definite. This completes the proof.
\end{proof}

If only consider the $K$-th SCC, we can employ the same rule of event time
sequence $\{t_{l}\}$, as in Theorem \ref{thm0.1}, restricted in this $K$-th
SCC. By Theorem \ref{thm0.1}, the subsystem of (\ref{mg2}) in the $K$-th
SCC can each a consensus with the agreement value equal to
$\nu=\sum_{p=1}^{n_{K}}\xi^{K}_{p}x^{K}_{p}(0)$
and all $x_{j}^{K}(t_{k(t)})$ converge to this value for all
$v_{j+M_{K-1}}\in SCC_{K}$. Here we point out that the nonnegative vector
$[0,\cdots,0,\xi^{K}_{1},\cdots,\xi^{K}_{n_K}]$ is the left eigenvector of
$L$ corresponding to zero. Thus $\nu=\bar{x}(0)$.

Let us consider the $K-1$-th SCC. Construct a candidate Lyapunov function as follows
\begin{align}
V_{K-1}(t)=\frac{1}{2}(x^{K-1}(t)-\nu{\bf 1})^{\top}\Xi^{K-1}(x^{K-1}(t)-\nu{\bf 1}).\label{VK-1r}
\end{align}
Let $R^{K-1}=\frac{1}{2}[\Xi^{K-1}\tilde{L}^{K-1,K-1}+(\Xi^{K-1}\tilde{L}^{K-1,K-1})^{\top}]=[R^{K-1}_{ij}]_{i,j=1}^{n_{K-1}}$, which is all row sums equal to zeros and has zero eigenvalue with algebraic dimension one. Let $Q^{K-1}=\frac{1}{2}[\Xi^{K-1}L^{K-1,K-1}+(\Xi^{K-1}L^{K-1,K-1})^{\top}]=[Q^{K-1}_{ij}]_{i,j=1}^{n_{K-1}}=R^{K-1}+\Xi^{K-1}D^{K-1}$, which is all row sums less than zeros, and from Property \ref{p1}, we know $Q^{K-1}$ is negative definite. Let $\hat{Q}^{K-1}=\Xi^{K-1}L^{K-1,K-1}[\Xi^{K-1}L^{K-1,K-1}]^{\top}$ which is positive (semi-)definite. Similar to (\ref{RQ}), we have
\begin{align}
\hat{Q}^{K-1}\le\frac{\rho(\hat{Q}^{K-1})}{\rho_2(-Q^{K-1})}(-Q^{K-1}),\label{QK-1Qhat}\\
\Xi^{K-1}<I\le\frac{1}{\rho_2(-Q^{K-1})}(-Q^{K-1}).\label{QK-1XiK-1}
\end{align}

The derivative of $V_{K-1}(t)$ along (\ref{mg2}) is
\begin{align}
&\frac{d}{dt}V_{K-1}(t)
=(x^{K-1}(t)-\nu{\bf 1})^{\top}\Xi^{K-1}(
\dot{x}^{K-1})\nonumber\\
=&(x^{K-1}(t)-\nu{\bf 1})^{\top}\Xi^{K-1}\Big\{L^{K-1,K-1}x^{K-1}(t_{k(t)})+L^{K-1,K}x^{K}(t_{k(t)})\Big\}\nonumber\\
=&(x^{K-1}(t)-\nu{\bf 1})^{\top}\Xi^{K-1}\Big\{L^{K-1,K-1}(x^{K-1}(t)-\nu{\bf 1})\nonumber\\
&+L^{K-1,K-1}\Delta x^{K-1}(t)
+L^{K-1,K}(x^{K}(t_{k(t)})-\nu{\bf 1})\Big\}\nonumber\\
=&Q^{K-1}_{3}(t)+(x^{K-1}(t)-\nu{\bf 1})^{\top}\Xi^{K-1}L^{K-1,K-1}\Delta x^{K-1}(t)+Q^{K-1}_{1}(t)
\label{dVK-1r}
\end{align}
where
\begin{align*}
Q^{K-1}_{1}(t)=&(x^{K-1}(t)-\nu{\bf 1})^{\top}\Xi^{K-1}L^{K-1,K}(x^{K}(t_{k(t)})-\nu{\bf 1}),\\
Q^{K-1}_{3}(t)=&(x^{K-1}(t)-\nu{\bf 1})^{\top}\Xi^{K-1}L^{K-1,K-1}(x^{K-1}(t)-\nu{\bf 1})\\
=&[x^{K-1}(t)-\nu{\bf 1}]^{\top}Q^{K-1}[x^{K-1}(t)-\nu{\bf 1}].
\end{align*}
For any $\upsilon^{K-1}>0$, we have
\begin{align*}
Q^{K-1}_{1}(t)\le\upsilon^{K-1} V_{K-1}(t)+F_{1,\upsilon^{K-1}}(t).
\end{align*}
where
\begin{eqnarray*}
F_{1,\upsilon^{K-1}}(t)=\frac{1}{2\upsilon^{K-1}}[L^{K-1,K}(x^{K}(t_{k(t)})-\nu{\bf 1})]^{\top}\Xi^{K-1}L^{K-1,K}(x^{K}(t_{k(t)})-\nu{\bf 1}).
\end{eqnarray*}
According to the discussion of $SCC_{K}$ and Theorem \ref{thm0.1}, for all $p=1,\cdots,n_{K}$, we have
\begin{eqnarray*}
\lim_{t\to\infty}x^{K}_{p}(t_{k(t)})=\nu,
\end{eqnarray*}
exponentially. So,
\begin{align}
\lim_{t\to\infty}F_{1,\upsilon^{K-1}}(t)=0,\label{F1up}
\end{align}
exponentially.

From (\ref{QK-1Qhat}), for any $a^{K-1}>0$, (\ref{dVK-1r}) can be rewritten as
\begin{align}
&\frac{d}{dt}V_{K-1}(t)
=Q^{K-1}_{3}(t)+(x^{K-1}(t)-\nu{\bf 1})^{\top}\Xi^{K-1}L^{K-1,K-1}\Delta x^{K-1}(t)+Q^{K-1}_{1}(t)\nonumber\\
&\le Q^{K-1}_{3}(t)+\frac{1}{2a^{K-1}}\|\Delta x^{K-1}(t)\|^{2}+Q^{K-1}_{1}(t)\nonumber\\
&+\frac{a^{K-1}}{2}(x^{K-1}(t)-\nu{\bf 1})^{\top}\hat{Q}^{K-1}(x^{K-1}(t)-\nu{\bf 1})\nonumber\\
&\le(1-\frac{a^{K-1}\rho(\hat{Q}^{K-1})}{2\rho_2(-Q^{K-1})})Q^{K-1}_{3}(t)+\frac{1}{2a^{K-1}}\|\Delta x^{K-1}(t)\|^{2}+Q^{K-1}_{1}(t).
\label{dVK-1r1}
\end{align}

Analogy, we can define the above quantities for general $k<K$ by replacing $K-1$ by $k$.

Immediately, we have
\begin{theorem}\label{thm0.1r}
Suppose that $\mathcal G$ has spanning tree and $L$ is written in the form of (\ref{PF}). Set $t_{l+1}$ as the time point such that for any fixed $\gamma\in(0,1)$ and $0<a^{k}<\frac{2\rho_2(-Q^{k})}{\rho(\hat{Q}^{k})}$
\begin{align}
t_{l+1}=\min_{k}\big\{\omega^{k}_{l+1}\big\}
\label{event0.1r}
\end{align}
with
\begin{align}\label{tauk}
\omega^{k}_{l+1}=&\max\left\{\tau\ge t_{l}:~\|x^{k}(t_{l})-x^{k}(t)\|
\le\sqrt{2a^{k}\gamma(\frac{a^{k}\rho(\hat{Q}^{k})}{2\rho_2(-Q^{k})}-1)Q^{k}_3(t)},~\forall t\in[t_l,\tau]\right\}.
\end{align}
Then, system (\ref{mg2}) reaches a consensus; In addition,  for all $i\in\mathcal I$, we have $$\lim_{t\to\infty}x_{i}(t)=\sum_{p=1}^{n_{K}}\xi^{K}_{p}x^{K}_{p}(0)$$ and $$\lim_{t\to\infty}x_{i}(t_{k(t)})=\sum_{p=1}^{n_{K}}\xi^{K}_{p}x^{K}_{p}(0).$$
\end{theorem}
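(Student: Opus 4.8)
The plan is to argue by backward induction on the strongly-connected-component index $k$, running from $k=K$ down to $k=1$, showing at each stage that every agent in $SCC_k$ converges exponentially to the common value $\nu=\sum_{p=1}^{n_K}\xi^K_p x^K_p(0)$ and that the sampled states $x^k(t_{k(t)})$ do the same. The base case $k=K$ is exactly Theorem \ref{thm0.1}: the $K$-th block row of (\ref{PF}) is $[0,\cdots,0,L^{K,K}]$, so $\dot x^K=L^{K,K}x^K(t_{k(t)})$ is autonomous, and since $SCC_K$ receives no link from a component of higher index, the Property forces $D^K=0$, making $L^{K,K}$ a genuine irreducible Laplacian. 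The triggering clause (\ref{tauk}) for $k=K$ then specializes to (\ref{event0.1}), and Theorem \ref{thm0.1} gives $x^K(t)\to\nu{\bf 1}$ and $x^K(t_{k(t)})\to\nu{\bf 1}$ exponentially.

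For the inductive step I would fix $k<K$, assume the conclusion for every $q>k$, and work with the Lyapunov function $V_k$ defined as in (\ref{VK-1r}) with $K-1$ replaced by $k$. The crucial structural observation is that, because the global triggering time is $t_{l+1}=\min_k\{\omega^k_{l+1}\}$ in (\ref{event0.1r}), we have $t_{l+1}\le\omega^k_{l+1}$ for every $k$, so the per-component bound in (\ref{tauk}),
\begin{align*}
\|\Delta x^k(t)\|^2\le 2a^k\gamma\Big(\tfrac{a^k\rho(\hat{Q}^k)}{2\rho_2(-Q^k)}-1\Big)Q^k_3(t),
\end{align*}
holds for all $t$ and all $k$ simultaneously. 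Substituting this into the generalized derivative estimate (\ref{dVK-1r1}) and using $Q^k_1(t)\le\upsilon^k V_k(t)+F_{1,\upsilon^k}(t)$ collapses the first two terms into $(1-\gamma)\big(1-\tfrac{a^k\rho(\hat{Q}^k)}{2\rho_2(-Q^k)}\big)Q^k_3(t)$, which is non-positive since $0<a^k<\tfrac{2\rho_2(-Q^k)}{\rho(\hat Q^k)}$ and $Q^k$ is negative definite by Property \ref{p1}.

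Next I would convert this into a scalar differential inequality for $V_k$. Because $Q^k$ is negative definite and $\Xi^k<I$ by (\ref{QK-1XiK-1}), one has $Q^k_3(t)\le-2\rho_2(-Q^k)V_k(t)$, so
\begin{align*}
\frac{d}{dt}V_k(t)\le -c_k V_k(t)+F_{1,\upsilon^k}(t),\quad c_k=2(1-\gamma)\Big(1-\tfrac{a^k\rho(\hat Q^k)}{2\rho_2(-Q^k)}\Big)\rho_2(-Q^k)-\upsilon^k,
\end{align*}
where $\upsilon^k>0$ is chosen small enough that $c_k>0$. By the induction hypothesis the forcing term $F_{1,\upsilon^k}(t)$ tends to zero exponentially, exactly as in (\ref{F1up}), since it is built from $x^q(t_{k(t)})-\nu{\bf 1}$ for $q>k$. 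A comparison (Gr\"onwall) argument for the linear inequality $\dot V_k\le -c_k V_k+F_{1,\upsilon^k}$ then yields exponential decay of $V_k(t)$, whence $x^k(t)\to\nu{\bf 1}$. The sampled-state claim follows because the same per-component bound forces $\|\Delta x^k(t)\|\to 0$ as $Q^k_3(t)\to 0$, so $x^k(t_{k(t)})=x^k(t)+\Delta x^k(t)\to\nu{\bf 1}$.

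I expect the main obstacle to be the comparison step rather than the algebra: one must show that $\dot V_k\le -c_k V_k+F$ with $F(t)$ decaying like $e^{-\delta t}$ forces $V_k(t)$ to decay like $e^{-\min(c_k,\delta)t}$, taking care of the resonant case $c_k=\delta$, where the convolution of the two exponentials produces an extra factor $t$ that still tends to zero. A secondary point requiring care is verifying that $\upsilon^k$ can be taken strictly below the natural decay rate for every $k$, and that the exponential rate inherited from the layers $q>k$ is uniform, so that the cascade of $K$ inductive steps terminates with a single exponential bound valid for the whole network and with the common limit $\nu=\bar x(0)$.
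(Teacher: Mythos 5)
Your proposal follows essentially the same route as the paper's own proof: the base case for $SCC_K$ is Theorem \ref{thm0.1}, and the inductive step down the SCC chain uses the Lyapunov function $V_k$, the derivative estimate (\ref{dVK-1r1}) combined with the triggering bound (\ref{tauk}), the bound $V_k(t)\le\frac{1}{2\rho_2(-Q^k)}(-Q^k_3(t))$ from (\ref{QK-1XiK-1}), a sufficiently small $\upsilon^k$, and a Gr\"onwall comparison against the exponentially vanishing forcing term $F_{1,\upsilon^k}(t)$. If anything, you are more careful than the paper, which silently skips the verification that the $K$-th block dynamics are autonomous with $D^K=0$, the resonant case in the comparison lemma, and the argument that the sampled states $x^k(t_{k(t)})$ inherit the limit.
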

\begin{proof} For the $K$-th SCC, the event-triggered rule (\ref{event0.1r}) is the same as (\ref{event0.1}) in Theorem \ref{thm0.1}, since $L$ is written in the form of (\ref{PF}). By Theorem \ref{thm0.1}, we can conclude that under the updating rule (\ref{event0.1r}) for each $v_{j+M_{K-1}}\in SCC_{K}$, the subsystem restricted in $SCC_{K}$ reaches a consensus. And, $\lim_{t\to\infty}x^{K}_{j}(t_{k(t)})=\nu,~j=1,\cdots,n_K$ as well.

In the following, we are to prove that the state of the agent $v_{p+M_{K-2}}\in SCC_{K-1}$ converges to $\nu$ and so it is with $x^{K-1}_{p}(t_{k(t)})$. The remaining can be proved similarly by induction. From (\ref{event0.1r}) and (\ref{dVK-1r1}), we have
\begin{align*}
\frac{d}{dt}V_{K-1}(t)\le(1-\gamma)(1-\frac{a^{K-1}\rho(\hat{Q}^{K-1})}{2\rho_2(-Q^{K-1})})Q^{K-1}_{3}(t)+\upsilon^{K-1} V_{K-1}(t)+F_{1,\upsilon^{K-1}}(t).
\end{align*}
From (\ref{QK-1XiK-1}), we have $$V_{K-1}(t)\le\frac{1}{2\rho_2(-Q^{K-1})}(-Q^{K-1}_{3}(t)).$$ Picking sufficiently small $\upsilon^{K-1}$, there exists some $\upsilon^{K-1}_{0}>0$ such that
\begin{eqnarray*}
\frac{d}{dt}V_{K-1}(t)\le-\upsilon^{K-1}_{0} V_{K-1}(t)+F_{1,\upsilon^{K-1}}(t).
\end{eqnarray*}
From (\ref{F1up}), we have $\lim_{t\to\infty}V_{K-1}(t)=0$ exponentially. This implies that $$\lim_{t\to\infty}x_{p}^{K-1}(t)=\nu$$  exponentially for all $p=1,\cdots,n_{K-1}$. By the same argument in the proof of Theorem \ref{thm0.1}, we can conclude $$\lim_{t\to\infty}x_{p}^{K-1}(t_{k(t)})=\nu$$ exponentially for all $p=1,\cdots,n_{K-1}$, too. Then, we can complete the proof by induction to $SCC_{k}$ for $k<K-1$.
\end{proof}

Like Corollary \ref{thm0.01}, we have
\begin{corollary}\label{thm0.01r}
Suppose that $\mathcal G$ has spanning tree and $L$ is written in the form of (\ref{PF}). Set $t_{l+1}$ as the time point such that for any fixed $\gamma\in(0,1)$ and
\begin{align}
t_{l+1}=\min_{k}\big\{\omega^{k}_{l+1}\big\}
\label{event0.01r}
\end{align}
with
\begin{align}\label{tauk0.01}
\omega^{k}_{l+1}=&\max\bigg\{\tau\ge t_{l}:~|(x^{k}(t)-\nu{\bf 1})^{\top}\Xi^{k}L^{k,k}(x^{k}(t_{l})-x^{k}(t))|
\le-\gamma Q^{k}_3(t),~\forall t\in[t_l,\tau]\bigg\}.
\end{align}
Then, system (\ref{mg2}) reaches a consensus; In addition,  for all $i\in\mathcal I$, we have $$\lim_{t\to\infty}x_{i}(t)=\sum_{p=1}^{n_{K}}\xi^{K}_{p}x^{K}_{p}(0)$$ and $$\lim_{t\to\infty}x_{i}(t_{k(t)})=\sum_{p=1}^{n_{K}}\xi^{K}_{p}x^{K}_{p}(0).$$
\end{corollary}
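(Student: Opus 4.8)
The plan is to run the same downward induction on the strongly connected components $SCC_K,SCC_{K-1},\dots,SCC_1$ as in the proof of Theorem \ref{thm0.1r}, changing only the way the measurement error is absorbed: the new rule (\ref{tauk0.01}) bounds the cross term in the Lyapunov derivative directly, so Young's inequality and the parameter $a^k$ disappear. The inductive hypothesis at level $k$ is that $x^q(t)\to\nu{\bf 1}$ and $x^q(t_{k(t)})\to\nu{\bf 1}$ exponentially for every $q>k$, where $\nu=\sum_{p=1}^{n_K}\xi^K_p x^K_p(0)$.

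For the base case $k=K$ the rule (\ref{event0.01r}) restricted to $SCC_K$ is precisely (\ref{event0.01}), so Corollary \ref{thm0.01} gives $x^K(t)\to\nu{\bf 1}$ and $x^K(t_{k(t)})\to\nu{\bf 1}$ exponentially. For the inductive step I would take the derivative identity (\ref{dVK-1r}), written for a general $k$, namely $\frac{d}{dt}V_k(t)=Q^k_3(t)+(x^k(t)-\nu{\bf 1})^\top\Xi^k L^{k,k}\Delta x^k(t)+Q^k_1(t)$. The middle term is exactly the bilinear form whose absolute value (\ref{tauk0.01}) keeps at most $-\gamma Q^k_3(t)$, so substituting yields $\frac{d}{dt}V_k(t)\le(1-\gamma)Q^k_3(t)+Q^k_1(t)$, the analogue of the inequality used for Theorem \ref{thm0.1r} but with the cleaner factor $(1-\gamma)$ replacing $(1-\gamma)(1-\frac{a^k\rho(\hat Q^k)}{2\rho_2(-Q^k)})$. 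I would then bound $Q^k_1(t)\le\upsilon^k V_k(t)+F_{1,\upsilon^k}(t)$ and use (\ref{QK-1XiK-1}) in the form $V_k(t)\le\frac{1}{2\rho_2(-Q^k)}(-Q^k_3(t))$ to convert $(1-\gamma)Q^k_3(t)$ into $-2(1-\gamma)\rho_2(-Q^k)V_k(t)$; choosing $\upsilon^k$ small enough gives $\upsilon^k_0>0$ with $\frac{d}{dt}V_k(t)\le-\upsilon^k_0 V_k(t)+F_{1,\upsilon^k}(t)$. Because the inductive hypothesis makes every $x^q(t_{k(t)})\to\nu{\bf 1}$ for $q>k$, the forcing term $F_{1,\upsilon^k}(t)$ collects only coupling contributions $L^{k,q}(x^q(t_{k(t)})-\nu{\bf 1})$ that vanish exponentially, so $F_{1,\upsilon^k}(t)\to0$ by (\ref{F1up}) and the comparison principle gives $V_k(t)\to0$, i.e. $x^k(t)\to\nu{\bf 1}$, exponentially.

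To obtain the second assertion $x^k(t_{k(t)})\to\nu{\bf 1}$ I first need rule (\ref{tauk0.01}) to be realizable — finite and strictly positive inter-event times — and this is the step I expect to be the real obstacle, since everything above is a near-mechanical transcription of Theorem \ref{thm0.1r}. For strict positivity I would imitate the comparison-principle computation of Theorem \ref{thm0.01c}: using $\|(x^k(t)-\nu{\bf 1})^\top\Xi^k L^{k,k}\|\le\sqrt{\rho(\hat Q^k)}\,\|x^k(t)-\nu{\bf 1}\|$ one replaces (\ref{tauk0.01}) by a strictly weaker norm-based threshold on $\|\Delta x^k(t)\|$ and runs the comparison-principle argument of Theorem \ref{thm0.01c} with the matrices $L^{k,k},\Xi^k,\hat Q^k,Q^k$ replacing $L,\Xi,Q,R$, extracting a positive lower bound $\tau_0^k$ for each block; then $t_{l+1}-t_l=(\min_k\omega^k_{l+1})-t_l\ge\min_k\tau_0^k>0$, excluding Zeno behaviour. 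For finiteness I would argue as in Theorem \ref{thm0.01c} by contradiction: if after some $t_l$ the condition were never met again, the coupling would freeze at the constant $Lx(t_l)$, but while the state has not yet reached consensus $x(t_l)$ is not in the kernel of $L$, so $\dot x\equiv Lx(t_l)\ne{\bf 0}$, contradicting $\dot x(t)\to{\bf 0}$, which follows from the exponential convergence just proved.

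Realizability in hand, the triggering instants satisfy $t_{k(t)}\to\infty$, so the sampled values $x^k(t_{k(t)})$ are evaluated along times tending to infinity and therefore share the limit $\nu{\bf 1}$ of $x^k(t)$; this closes the induction at level $k$. Carrying the induction down to $k=1$ establishes $\lim_{t\to\infty}x_i(t)=\lim_{t\to\infty}x_i(t_{k(t)})=\sum_{p=1}^{n_K}\xi^K_p x^K_p(0)$ for every $i\in\mathcal I$, as claimed.
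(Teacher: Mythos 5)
Your proposal is correct and is essentially the proof the paper intends: the paper states this corollary without a written proof (``Like Corollary \ref{thm0.01}, we have\dots''), relying on exactly the combination you spell out --- Corollary \ref{thm0.01} for the base block $SCC_K$, the downward induction of Theorem \ref{thm0.1r} with the cross term $(x^{k}(t)-\nu{\bf 1})^{\top}\Xi^{k}L^{k,k}\Delta x^{k}(t)$ absorbed directly by rule (\ref{tauk0.01}) so that $a^{k}$ disappears, and realizability deferred to the arguments of Theorems \ref{thm0.11} and \ref{thm0.01c} applied blockwise. Your write-up actually fills in slightly more detail (the inequality $\frac{d}{dt}V_{k}(t)\le(1-\gamma)Q^{k}_{3}(t)+Q^{k}_{1}(t)$ and the sampled-state limit via $t_{k(t)}\to\infty$) than the paper's remark, but the route is the same.
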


Similar to (\ref{compare}), we have
\begin{align}
\frac{-\gamma Q^{k}_3(t)}{\|(x^{k}(t)-\nu{\bf 1})^{\top}\Xi^{k}L^{k,k}\|}\ge
\gamma\sqrt{2a^{k}(\frac{a^{k}\rho(\hat{Q}^{k})}{2\rho_2(-Q^{k})}-1)Q^{k}_3(t)}.
\end{align}

\begin{remark}
The next update time $t_{k+1}$ in (\ref{event0.01r}) is not less than the next update time in (\ref{event0.1r}) if under the same $x(t_k)$. As similar with the proof of Theorem \ref{thm0.11} and Theorem \ref{thm0.01c}, by mathematical induction we can prove that the above event-triggered rules (\ref{event0.1r}) and (\ref{event0.01r}) are also realisable, i.e. the inter-event times are up limited and strictly positive. Since the methods are very similar, we omit the proof. And the influence of $\gamma$ in the reducible case is analogous to those in the discussion of Remark \ref{remark1}.
\end{remark}

\begin{remark}
Different SCC could have different $\gamma$ in (\ref{tauk}) and (\ref{tauk0.01}).
\end{remark}

\begin{remark}
When using the above event-triggered rules (\ref{event0.1r}) and (\ref{event0.01r}), in the early stages, one could only consider $SCC_K$ which reaches consensus exponentially, then after some time, one could take $SCC_{K-1}$ into considered which also reaches consensus exponentially. This process goes on each strongly connected component of $\mathcal G$ in a parallel fashion.
\end{remark}

\section{Self-triggered control}
In this section, we present self-triggered strategies for the consensus problem (\ref{mg2}). In the above event-triggered strategy, it is apparently that all agents' states should be observed simultaneously in order to check condition (\ref{event0.1}), (\ref{event0.01}), (\ref{tauk}) or (\ref{tauk0.01}). In the following, the next triggering time $t_{k+1}$ is predetermined at previous triggering time $t_{k}$ or even at the beginning. In time interval $[t_k, t_{k+1})$ ($t_{k+1}$ is waiting to be determined), all agents' states can be formulated as:
\begin{align}
x(t)=(t-t_k)Lx(t_k)+x(t_k).\label{xs}
\end{align}

\subsection{Asymmetric and irreducible topology}
In this subsection, we consider the case of irreducible $L$. Let $\zeta=t-t_{k(t)}$. From (\ref{xs}), we can rewrite all agents' states as
\begin{align*}
x(t)=\zeta Lx(t_{k(t)})+x(t_{k(t)}).
\end{align*}
$\|x(t_{k(t)})-x(t)\|^2=\|Lx(t_{k(t)})\|^2\zeta^{2}$. (\ref{event0.1}) and (\ref{event0.01}) can be written as
\begin{align*}
&x^{\top}(t)Rx(t)=x^{\top}(t_{k(t)})L^{\top}RLx(t_{k(t)})\zeta^{2}+2x^{\top}(t_{k(t)})RLx(t_{k(t)})\zeta
+x^{\top}(t_{k(t)})Rx(t_{k(t)}),\\
&|x^{\top}(t)\Xi L(x(t_{k(t)})-x(t))|^2=|x^{\top}(t_{k(t)})L^{\top}L^{\top}\Xi Lx(t_{k(t)})\zeta^{2}+x^{\top}(t_{k(t)})L^{\top}L^{\top}\Xi x(t_{k(t)})\zeta|^2.
\end{align*}
Denote $\psi=\gamma2a(1-\frac{a\beta_m}{2\lambda_2}),$ solve the following inequality to maximise $\zeta$ so that
\begin{align}
\tau_{l+1}=\max\Big\{&\zeta:\|Lx(t_{k(t)})\|^2s^{2}\le -\psi x^{\top}(t_{k(t)})L^{\top}RLx(t_{k(t)})s^{2}\nonumber\\
&-2\psi x^{\top}(t_{k(t)})RLx(t_{k(t)})s-\psi x^{\top}(t_{k(t)})Rx(t_{k(t)}),~\forall s\in[0,\zeta]\Big\}\label{event0.11}
\end{align}
or
\begin{align}
\tau_{l+1}=\max\Big\{&\zeta:|x^{\top}(t_{k(t)})L^{\top}L^{\top}\Xi Lx(t_{k(t)})s^{2}+x^{\top}(t_{k(t)})L^{\top}L^{\top}\Xi x(t_{k(t)})s|^2\nonumber\\
&\le\gamma^2|x^{\top}(t_{k(t)})L^{\top}RLx(t_{k(t)})s^{2}+2x^{\top}(t_{k(t)})RLx(t_{k(t)})s\nonumber\\
&+x^{\top}(t_{k(t)})Rx(t_{k(t)})|^2,~\forall s\in[0,\zeta]\Big\}.\label{event0.111}
\end{align}
Then, we have the following results
\begin{theorem}\label{thm0.11s}
Suppose that $\mathcal G$ is strongly connected. At each update time $t_{l}$, giving $\tau_{l+1}$ as in (\ref{event0.11}) then the next update time $t_{l+1}=t_{l}+\tau_{l+1}$ with any fixed $\gamma\in(0,1)$ and $0<a<\frac{2\lambda_2}{\beta_m}$.
Then, system (\ref{mg2}) reaches a consensus; in addition, $\lim_{t\to\infty}x_{i}(t)=\sum_{j=1}^{m}\xi_{j}x_{j}(0)$ and $\lim_{t\to\infty}x_{i}(t_{k(t)})=\sum_{j=1}^{m}\xi_{j}x_{j}(0)$ for all $i\in\mathcal I$.
\end{theorem}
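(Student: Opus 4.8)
The plan is to recognise that the self-triggered rule (\ref{event0.11}) is an exact algebraic rewriting of the event-triggered rule (\ref{event0.1}), obtained by inserting the closed-form inter-event trajectory, so that the two rules produce the same update sequence and the conclusion is inherited from Theorem \ref{thm0.1}.

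First I would fix an interval $[t_l,t_{l+1})$ and use that, since the control is frozen at $\sum_j L_{ij}x_j(t_l)$, the true state obeys the linear formula (\ref{xs}), i.e. $x(t)=\zeta Lx(t_l)+x(t_l)$ with $\zeta=t-t_l$. Writing the squared threshold of (\ref{event0.1}) as $\|x(t_l)-x(t)\|^2\le\psi\,x^{\top}(t)(-R)x(t)$ with $\psi=2\gamma a(1-\frac{a\beta_m}{2\lambda_2})$, and substituting the formula while using $R^{\top}=R$, I would expand
\begin{align*}
\|x(t_l)-x(t)\|^2&=\|Lx(t_l)\|^2\zeta^2,\\
x^{\top}(t)Rx(t)&=x^{\top}(t_l)L^{\top}RLx(t_l)\zeta^2+2x^{\top}(t_l)RLx(t_l)\zeta+x^{\top}(t_l)Rx(t_l).
\end{align*}
These reproduce verbatim the scalar quadratic inequality appearing inside (\ref{event0.11}). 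Because $s=t-t_l$ ranges over $[0,\zeta]$ exactly as $t$ ranges over $[t_l,t_l+\zeta]$, the self-triggered requirement ``for all $s\in[0,\zeta]$'' coincides with the event-triggered requirement ``for all $t\in[t_l,\tau]$''; hence the $\tau_{l+1}$ determined by (\ref{event0.11}) equals the inter-event length $t_{l+1}-t_l$ produced by (\ref{event0.1}), the only difference being that $\tau_{l+1}$ is now computed in advance from $x(t_l)$ by solving one scalar quadratic inequality rather than by continuous monitoring. Consequently the Lyapunov estimate (\ref{speed}) of Theorem \ref{thm0.1} holds on every interval, giving
$$V(t)=O\Big(\exp\big\{-(1-\tfrac{a\beta_m}{2\lambda_2})\tfrac{2\lambda_2(1-\gamma)}{\mu_m}t\big\}\Big)\to 0,$$
which yields consensus and the two stated limits $\lim_{t\to\infty}x_i(t)=\lim_{t\to\infty}x_i(t_{k(t)})=\sum_{j=1}^{m}\xi_j x_j(0)$ exactly as in Theorem \ref{thm0.1}.

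The one point needing genuine care --- and the step I expect to be the main, though modest, obstacle --- is realisability: that $\tau_{l+1}$ is strictly positive and, whenever consensus is not yet attained, finite, so that triggering recurs and $t_{k(t)}\to\infty$ (which is what the sampled limit $\lim_{t\to\infty}x_i(t_{k(t)})$ requires). Positivity is immediate: at $\zeta=0$ the left side of the inequality in (\ref{event0.11}) vanishes while the right side equals $\psi\,x^{\top}(t_l)(-R)x(t_l)$, and since $R{\bf 1}=0$ this equals $\psi\,(x(t_l)-\bar X(t_l))^{\top}(-R)(x(t_l)-\bar X(t_l))>0$ unless $x(t_l)$ is already a consensus vector (using that $-R$ is positive definite on the complement of its kernel $\mathrm{span}\{{\bf 1}\}$); by continuity the strict inequality persists on some $[0,\epsilon]$, so $\tau_{l+1}\ge\epsilon>0$. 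For finiteness I would argue by contradiction: if $\tau_{l+1}=\infty$ the condition would hold for all $t\ge t_l$, forcing $\dot V\le 0$ via (\ref{dV0.1}) and (\ref{event0.1tl}); but the frozen-control trajectory gives, with $w=x(t_l)-\bar X(t_l)$ and $Lw=Lx(t_l)\ne0$,
$$V(t)=\tfrac12 w^{\top}\Xi w+\zeta\,w^{\top}Rw+\tfrac{\zeta^2}{2}w^{\top}L^{\top}\Xi L w\longrightarrow+\infty,$$
contradicting monotonicity of $V$. Finally, a uniform positive lower bound on the inter-event times, ruling out Zeno behaviour and guaranteeing $t_{k(t)}\to\infty$, follows from the established equivalence with (\ref{event0.1}) together with the bound $\tau_0$ of Theorem \ref{thm0.11}. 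Since every step beyond the closed-form substitution is already contained in Theorems \ref{thm0.1} and \ref{thm0.11}, the essential work is confined to verifying that algebraic identity.
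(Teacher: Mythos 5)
Your proposal is correct and follows essentially the same route as the paper: the paper's own (one-line) proof likewise treats (\ref{event0.11}) as the event-triggered condition (\ref{event0.1}) with the closed-form trajectory (\ref{xs}) substituted in, so that the Lyapunov argument of Theorem \ref{thm0.1} applies verbatim. Your additional verification of positivity and finiteness of $\tau_{l+1}$ (via the explicit quadratic growth of $V$ under frozen control and the bound $\tau_0$ from Theorem \ref{thm0.11}) is sound and fills in details the paper leaves implicit, but it does not constitute a different method.
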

\begin{proof} Under the maximisation process (\ref{event0.11}), by the same arguments as in the proof of Theorem \ref{thm0.1}, one can prove this theorem.
\end{proof}

\begin{corollary}\label{thm0.112s}
Suppose that $\mathcal G$ is strongly connected. At each update time $t_{l}$, giving $\tau_{l+1}$ as in (\ref{event0.111}) then the next update time $t_{l+1}=t_{l}+\tau_{l+1}$ with any fixed $\gamma\in(0,1)$.
Then, system (\ref{mg2}) reaches a consensus; in addition, $\lim_{t\to\infty}x_{i}(t)=\sum_{j=1}^{m}\xi_{j}x_{j}(0)$ and $\lim_{t\to\infty}x_{i}(t_{k(t)})=\sum_{j=1}^{m}\xi_{j}x_{j}(0)$ for all $i\in\mathcal I$.
\end{corollary}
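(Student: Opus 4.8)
The plan is to show that the self-triggered rule (\ref{event0.111}) generates exactly the same sequence of triggering times as the event-triggered rule (\ref{event0.01}) of Corollary \ref{thm0.01}, after which the consensus conclusion and the limit value follow immediately from that corollary.

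First I would exploit the fact that on each interval $[t_l,t_{l+1})$ the closed-loop dynamics (\ref{mg2}) reads $\dot{x}(t)=Lx(t_l)$, whose right-hand side is constant; hence the exact solution is the affine formula (\ref{xs}), i.e. with $\zeta=t-t_{k(t)}$,
$$x(t)=\zeta Lx(t_{k(t)})+x(t_{k(t)}),\qquad x(t_{k(t)})-x(t)=-\zeta Lx(t_{k(t)}).$$
This is the conceptual heart of the self-triggered formulation: because the integration between events is exact, the next triggering time can be predetermined at $t_l$ without continuous monitoring, and it will coincide with the time the event-triggered rule would otherwise select.

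Next I would substitute this explicit trajectory into the event-triggered condition of (\ref{event0.01}), namely $|x^{\top}(t)\Xi L(x(t_l)-x(t))|\le\gamma\,x^{\top}(t)(-R)x(t)$. Expanding both sides as polynomials in $\zeta$ and using that $R$ and $\Xi$ are symmetric, together with the scalar-transpose identities $x^{\top}(t_l)L^{\top}L^{\top}\Xi Lx(t_l)=x^{\top}(t_l)L^{\top}\Xi LLx(t_l)$ and $x^{\top}(t_l)L^{\top}Rx(t_l)=x^{\top}(t_l)RLx(t_l)$, the left-hand side becomes $|x^{\top}(t_l)L^{\top}L^{\top}\Xi Lx(t_l)\zeta^{2}+x^{\top}(t_l)L^{\top}L^{\top}\Xi x(t_l)\zeta|$ and the right-hand side becomes $\gamma|x^{\top}(t_l)L^{\top}RLx(t_l)\zeta^{2}+2x^{\top}(t_l)RLx(t_l)\zeta+x^{\top}(t_l)Rx(t_l)|$, where the cross term collapses to $2x^{\top}RLx$ by symmetry of $R$ and the overall sign is irrelevant once both sides are squared. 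This recovers exactly the inequality defining $\tau_{l+1}$ in (\ref{event0.111}), required to hold for all $s\in[0,\zeta]$; hence the maximal $\zeta$ admitted by (\ref{event0.111}) equals the inter-event interval that (\ref{event0.01}) would choose from the same state $x(t_l)$.

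Finally, since the two triggering sequences coincide, Corollary \ref{thm0.01} applies verbatim: $V(t)$ decays exponentially, the system reaches consensus, and $\lim_{t\to\infty}x_{i}(t)=\lim_{t\to\infty}x_{i}(t_{k(t)})=\sum_{j=1}^{m}\xi_{j}x_{j}(0)$ for all $i\in\mathcal I$. The realizability of the rule --- that each $\tau_{l+1}$ is finite and strictly positive, so no Zeno behaviour occurs --- is inherited directly from Theorem \ref{thm0.01c}, again because the triggering times are identical. The only step demanding care is the algebraic matching of the second paragraph, where the transposes of the scalar quadratic forms must be tracked so that the $\Xi LL$ and $L^{\top}L^{\top}\Xi$ factors and the cross term align with (\ref{event0.111}); this is routine once the scalar-transpose device is invoked, so I expect no genuine obstacle beyond bookkeeping.
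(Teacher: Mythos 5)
Your proposal is correct and follows essentially the same route as the paper: the paper itself obtains (\ref{event0.111}) by substituting the exact affine solution $x(t)=\zeta Lx(t_{k(t)})+x(t_{k(t)})$ into the event-triggered condition (\ref{event0.01}), and then disposes of the corollary by noting that ``the same arguments'' as in the event-triggered case (Corollary \ref{thm0.01}, ultimately Theorem \ref{thm0.1}) apply, which is exactly your equivalence-of-triggering-sequences argument. Your explicit tracking of the scalar-transpose identities and the inheritance of non-Zeno behaviour from Theorem \ref{thm0.01c} merely spells out what the paper leaves implicit.
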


Next we will give a simpler self-triggered rule. From (\ref{dV0.1}) and (\ref{xs}), we have
\begin{align}
\frac{d}{dt}V(t)=&(x(t)-\bar{X})^{\top}\Xi\dot{x}(t)\nonumber\\
=&(x(t)-\bar{X})^{\top}\Xi Lx(t_k)\nonumber\\
=&[((t-t_k)L+I)x(t_k)]^{\top}\Xi Lx(t_k)\nonumber\\
=&(t-t_k)x^{\top}(t_k)L^{\top}\Xi Lx(t_k)+x^{\top}(t_k)\Xi Lx(t_k)\nonumber\\
=&(t-t_k)x^{\top}(t_k)L^{\top}\Xi Lx(t_k)+x^{\top}(t_k)Rx(t_k).\label{dV0.12}
\end{align}
Immediately, we have
\begin{theorem}\label{thm0.12}
Suppose that $\mathcal G$ is strongly connected. Set $t_{k+1}$ as the time point such that for any fixed $\gamma\in(0,1)$
\begin{eqnarray}
t_{k+1}\le t_k+\gamma\frac{-x^{\top}(t_k)Rx(t_k)}{x^{\top}(t_k)L^{\top}\Xi Lx(t_k)}.\label{event0.12}
\end{eqnarray}
Then, system (\ref{mg2}) reaches a consensus; In addition,  for all $i\in\mathcal I$, we have $$\lim_{t\to\infty}x_{i}(t)=\sum_{j=1}^{m}\xi_{j}x_{j}(0)$$ and $$\lim_{t\to\infty}x_{i}(t_{k(t)})=\sum_{j=1}^{m}\xi_{j}x_{j}(0).$$
\end{theorem}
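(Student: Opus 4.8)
The plan is to re-run the Lyapunov argument of Theorem \ref{thm0.1} and Corollary \ref{thm0.01}, but starting from the self-triggered identity (\ref{dV0.12}) instead of the continuously monitored bound. Everything hinges on converting (\ref{dV0.12}) into a clean differential inequality $\dot V(t)\le -cV(t)$ with a positive constant $c$, after which exponential decay of $V$ and the stated limits follow by the same reasoning already used in the irreducible case.

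First I would record the signs of the two terms in (\ref{dV0.12}). Since $\Xi$ is positive definite, $x^{\top}(t_k)L^{\top}\Xi Lx(t_k)=(Lx(t_k))^{\top}\Xi(Lx(t_k))\ge 0$, while $x^{\top}(t_k)Rx(t_k)\le 0$ because $R$ is negative semi-definite. Hence the right-hand side of (\ref{dV0.12}) is affine and nondecreasing in $t$ on $[t_k,t_{k+1})$, so it is maximised at the right endpoint. Substituting the self-triggered bound (\ref{event0.12}), i.e. $(t-t_k)\le(t_{k+1}-t_k)\le\gamma(-x^{\top}(t_k)Rx(t_k))/(x^{\top}(t_k)L^{\top}\Xi Lx(t_k))$, and multiplying through by the nonnegative factor $x^{\top}(t_k)L^{\top}\Xi Lx(t_k)$ gives, for every $t\in[t_k,t_{k+1})$,
$$\frac{d}{dt}V(t)\le(1-\gamma)x^{\top}(t_k)Rx(t_k)\le 0.$$
The degenerate case $Lx(t_k)=0$ means $x(t_k)$ is already a consensus vector with $V(t_k)=0$ and frozen dynamics, so it is dismissed at once. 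Consequently $V$ is non-increasing on all of $[0,\infty)$.

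The key step is to close the loop back onto $V(t)$. By (\ref{RU}), $x^{\top}(t_k)(-R)x(t_k)\ge\frac{\lambda_2}{\mu_m}x^{\top}(t_k)Ux(t_k)=\frac{2\lambda_2}{\mu_m}V(t_k)$, and monotonicity of $V$ gives $V(t_k)\ge V(t)$ on $[t_k,t_{k+1})$. Chaining these,
$$\frac{d}{dt}V(t)\le-(1-\gamma)\frac{2\lambda_2}{\mu_m}V(t_k)\le-(1-\gamma)\frac{2\lambda_2}{\mu_m}V(t),$$
which holds on each interval $[t_k,t_{k+1})$, hence for a.e.\ $t\ge 0$. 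The comparison principle then yields $V(t)\le V(0)\exp\{-(1-\gamma)\frac{2\lambda_2}{\mu_m}t\}$, so $V\to 0$ exponentially and consensus is reached. To pin the agreement value I would note $\dot{\bar x}(t)=\xi^{\top}Lx(t_{k(t)})=0$ by Lemma \ref{lem2}, whence $\bar x(t)\equiv\sum_j\xi_jx_j(0)$; since $V\to 0$ forces $x_i(t)-\bar x(t)\to 0$, we get $\lim_{t\to\infty}x_i(t)=\sum_j\xi_jx_j(0)$. For the sampled states, I would bound $\|x(t_{k(t)})-x(t)\|=(t-t_{k(t)})\|Lx(t_{k(t)})\|$ and show it vanishes exponentially, so that $x_i(t_{k(t)})=x_i(t)-\Delta x_i(t)$ carries the same limit.

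The main obstacle I anticipate is twofold. Substantively, the crux is the chaining in the third paragraph: the bound from (\ref{dV0.12}) is frozen at $x(t_k)$, and it is precisely the monotonicity of $V$ established in the second paragraph that lets one replace $V(t_k)$ by $V(t)$ and recover a genuine $\dot V\le -cV$. Technically, the argument presumes the triggering instants do not accumulate in finite time, so that the differential inequality truly holds on all of $[0,\infty)$; this realizability (a strictly positive, finite lower bound on the inter-event times) must be secured as in Theorem \ref{thm0.11} and Theorem \ref{thm0.01c}. The same care is needed in the last step, where the factor $\|Lx(t_{k(t)})\|$ inside $\|\Delta x(t)\|$ must be controlled against the decaying $-x^{\top}(t_k)Rx(t_k)$ in order to conclude $\Delta x(t)\to 0$.
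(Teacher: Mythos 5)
Your proposal is correct and takes essentially the same route as the paper, whose own ``proof'' is implicit: after deriving (\ref{dV0.12}) it simply states the theorem, tacitly combining the trigger bound (\ref{event0.12}) with the Lyapunov argument of Theorem \ref{thm0.1} exactly as you do. The extra details you supply --- the monotonicity chaining $V(t_k)\ge V(t)$ (genuinely needed here because the bound in (\ref{dV0.12}) is frozen at $x(t_k)$ rather than evaluated at $x(t)$), the degenerate case $Lx(t_k)=0$, and the non-Zeno/realizability check --- are precisely what the paper leaves to the reader, and you handle them correctly.
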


\begin{remark}
An very important insight of (\ref{event0.11}) is that it is an one-variable quadratic inequality regarding $s$. And (\ref{event0.111}) is an one-variable inequality with order four. Obviously, (\ref{event0.12}) is simpler than (\ref{event0.11}), and from (\ref{dV0.1}), the next update time $t_{k+1}$ in (\ref{event0.12}) is not less than the next update time in (\ref{event0.11}) if under the same $x(t_k)$.
\end{remark}

By the way, if $\mathcal G$ is symmetric and has a spanning tree, then  zero is an algebraically simple eigenvalue of $L$, $\Xi=\frac{1}{m}I$ and $R=\frac{1}{m}L$. Immediately, from Theorem \ref{thm0.12}, we have
\begin{corollary}\label{coro0.2}
Suppose that $\mathcal G$ is symmetric and has a spanning tree. Set $t_{i+1}$ as the time point such that for any fixed $\sigma\in(0,1)$
\begin{eqnarray}
t_{i+1}\le t_i+\sigma\frac{-x^{\top}(t_i)Lx(t_i)}{x^{\top}(t_i)LLx(t_i)}.\label{event0.12c}
\end{eqnarray}
Then, system (\ref{mg2}) reaches a consensus; in addition, $\lim_{t\to\infty}x_{i}(t)=\sum_{j=1}^{m}\frac{1}{m}x_{j}(0)$ and $\lim_{t\to\infty}x_{i}(t_{k(t)})=\sum_{j=1}^{m}\frac{1}{m}x_{j}(0)$ for all $i\in\mathcal I$.
\end{corollary}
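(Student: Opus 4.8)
The plan is to recognize Corollary \ref{coro0.2} as a direct specialization of Theorem \ref{thm0.12} to the symmetric case, so that essentially all the work reduces to identifying the special forms of $\Xi$ and $R$ and substituting them into the self-triggered bound (\ref{event0.12}). First I would establish the structural facts about a symmetric graph with a spanning tree. For an undirected (symmetric) graph, possessing a spanning tree is equivalent to connectedness, which means $L$ is irreducible; hence Lemma \ref{lem2} applies, so that zero is an algebraically simple eigenvalue of $L$ admitting a positive left eigenvector $\xi$ normalized by $\sum_i \xi_i = 1$. Because $L = L^{\top}$ in the symmetric case, the left and right null spaces of $L$ coincide and are both spanned by $\mathbf{1}$ (the row sums of $L$ vanish), and combining this with the normalization forces $\xi = \frac{1}{m}\mathbf{1}$, i.e. $\Xi = \frac{1}{m}I$, exactly as asserted in the remark preceding the statement.

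Next I would compute $R$ under this identification. By definition $R = \frac{1}{2}(\Xi L + L^{\top}\Xi)$, and substituting $\Xi = \frac{1}{m}I$ together with $L^{\top} = L$ yields $R = \frac{1}{2}\big(\frac{1}{m}L + \frac{1}{m}L\big) = \frac{1}{m}L$. I would then feed these two identities into (\ref{event0.12}): the numerator $-x^{\top}(t_k) R x(t_k)$ becomes $-\frac{1}{m}x^{\top}(t_k) L x(t_k)$, while the denominator $x^{\top}(t_k) L^{\top}\Xi L x(t_k)$ becomes $\frac{1}{m}x^{\top}(t_k) L L x(t_k)$ (again using $L^{\top} = L$). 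The common factor $\frac{1}{m}$ cancels, so (\ref{event0.12}) reduces verbatim to (\ref{event0.12c}) with $\sigma$ playing the role of $\gamma \in (0,1)$.

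With the triggering rules thus matched, the convergence and consensus conclusions transfer directly from Theorem \ref{thm0.12}. In particular the weighted consensus value $\sum_{j=1}^{m}\xi_j x_j(0)$ specializes, under $\xi = \frac{1}{m}\mathbf{1}$, to the ordinary average $\sum_{j=1}^{m}\frac{1}{m}x_j(0)$, which establishes both the limit of $x_i(t)$ and of $x_i(t_{k(t)})$ claimed in the corollary. The only point requiring genuine care — rather than a true obstacle — is the step pinning down $\xi = \frac{1}{m}\mathbf{1}$: one must invoke the symmetry $L = L^{\top}$ to conclude that the a priori merely positive left eigenvector supplied by Lemma \ref{lem2} is in fact the uniform vector. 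Once that identity is secured, the rest is a single algebraic substitution, which is precisely why the result is phrased as a corollary of Theorem \ref{thm0.12} rather than proved from scratch.
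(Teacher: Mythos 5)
Your proposal matches the paper's own argument: the paper likewise notes that symmetry plus a spanning tree forces $\Xi=\frac{1}{m}I$ and $R=\frac{1}{m}L$, and then obtains the corollary ``immediately'' from Theorem \ref{thm0.12} by the same cancellation of the $\frac{1}{m}$ factors in (\ref{event0.12}). Your added care in pinning down $\xi=\frac{1}{m}\mathbf{1}$ via $L=L^{\top}$ is a correct filling-in of a step the paper leaves implicit, not a deviation.
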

Obviously, the self-triggered rule (\ref{event0.12c}) is simpler than (13) in \cite{Dvd}, which is
\begin{align*}
t_{i+1}-t_i\le\frac{-2\sigma^{2}(Lx(t_i))^{\top}LLx(t_i)+\sqrt{\Delta}}{2(\|Lx(t_i)\|^{2}\|L\|^{2}-\sigma^{2}\|L^2x(t_i)\|^{2})}
\end{align*}
where
\begin{align*}
\Delta=4\sigma^{4}\|(Lx(t_i))^{\top}LLx(t_i)\|^{2}+4\sigma^{4}\|L^2x(t_i)\|^{2}(\|Lx(t_i)\|^{2}\|L\|^{2}-\sigma^{2}\|L^2x(t_i)\|^{2}).
 \end{align*}
Next we will prove that the time interval length given by (\ref{event0.12c}) is bigger than the time interval length given by (13) in \cite{Dvd}.

Let $0=\alpha_1<\alpha_2\le\cdots\alpha_m$ be the eigenvalue of $-L$, counting the multiplicities. Let $\Theta=diag[\alpha_{1},\cdots,\alpha_{m}]$ and $\Gamma$ be the unique real orthogonal matrix satisfies $-L=\Gamma^{\top}\Theta\Gamma$. Let $y(t)=\Gamma^{\top}x(t)=[y_1(t),\cdots,y_m(t)]$. So we have
\begin{align*}
&-x^{\top}(t_i)Lx(t_i)=\sum_{j=2}^{m}\alpha_{j}(y_j(t_i))^2,~x^{\top}(t_i)LLx(t_i)=\sum_{j=2}^{m}(\alpha_{j})^2(y_j(t_i))^2,\\
&-x^{\top}(t_i)LLLx(t_i)=\sum_{j=2}^{m}(\alpha_{j})^3(y_j(t_i))^2.
\end{align*}
Since
\begin{align*}
&\Big(\sum_{j=2}^{m}\alpha_{j}(y_j(t_i))^2\Big)\Big(\sum_{j=2}^{m}(\alpha_{j})^3(y_j(t_i))^2\Big)\\
=&\sum_{j=2}^{m}(\alpha_{j})^4(y_j(t_i))^4+\sum_{j=2}^{m}\sum_{l=2,l\neq j}^{m}\big[\alpha_{j}(\alpha_{l})^3+(\alpha_{j})^3\alpha_{l}\big](y_j(t_i)y_l(t_i))^2\\
\ge&\sum_{j=2}^{m}(\alpha_{j})^4(y_j(t_i))^4+\sum_{j=2}^{m}\sum_{l=2,l\neq j}^{m}2(\alpha_{j}\alpha_{l})^2(y_j(t_i)y_l(t_i))^2\\
=&\Big(\sum_{j=2}^{m}(\alpha_{j})^2(y_j(t_i))^2\Big)\Big(\sum_{j=2}^{m}(\alpha_{j})^2(y_j(t_i))^2\Big),
\end{align*}
then
\begin{align*}
\sigma\frac{-x^{\top}(t_i)Lx(t_i)}{x^{\top}(t_i)LLx(t_i)}\ge\sigma\frac{x^{\top}(t_i)LLx(t_i)}{-x^{\top}(t_i)LLLx(t_i)}=\tau^{1}_i.
\end{align*}
Denote the righthand side of (13) in \cite{Dvd} as $\tau^{0}_i$. Next we will prove $\tau^{1}_i\ge\tau^{0}_i$. From \cite{Dvd}, we know that $\tau^{0}_i$ is the maximum which satisfies (6) in \cite{Dvd}, which is
\begin{align*}
\|e(t)\|\le\sigma\frac{\|Lx(t)\|}{\|L\|},~\forall t\in[t_i,t_i+\tau^{0}_i].
\end{align*}
Since
\begin{align*}
e(t_i+\tau^{1}_i)=x(t_i+\tau^{1}_i)-x(t_i)=\tau^{1}_iLx(t_i),
\end{align*}
then
\begin{align*}
\|e(t_i+\tau^{1}_i)\|&=\|\tau^{1}_iLx(t_i)\|=\sigma\frac{x^{\top}(t_i)LLx(t_i)}{-x^{\top}(t_i)LLLx(t_i)}\|Lx(t_i)\|\\
&\ge\sigma\frac{\|Lx(t_i)\|}{\|L\|}\ge\sigma\frac{\|Lx(t)\|}{\|L\|},~\forall t\in[t_i,t_i+\tau^{0}_i].
\end{align*}
Thus $\tau^{1}_i\ge\tau^{0}_i$. So we can conclude that the time interval length given by (\ref{event0.12c}) is bigger than the time interval length given by (13) in \cite{Dvd}.

At the end of this subsection, under the condition $\mathcal G$ is symmetric and has a spanning tree, we will give a novel self-triggered formulation which not uses agents' states but only relays the system topology. In time interval $[t_k, t_{k+1})$ ($t_{k+1}$ is waiting to be determined), we have:
\begin{align*}
y(t)=(t-t_k)\Theta y(t_k)+y(t_k).
\end{align*}
\begin{corollary}\label{coro0.1}
Suppose that $\mathcal G$ is symmetric and has a spanning tree. Set $\Delta_{k}=t_{k+1}-t_{k}$ as the
inter-event times such that for some fixed $\frac{\alpha_{m}-\alpha_{2}}{\alpha_{m}+\alpha_{2}}\le\gamma<1$
\begin{eqnarray}
\frac{1-\gamma}{\alpha_{2}}\le\Delta_{k}\le\frac{1+\gamma}{\alpha_{m}}.\label{event0.1c}
\end{eqnarray}
Then, system (\ref{mg2}) reaches a consensus; in addition, $\lim_{t\to\infty}x_{i}(t)=\sum_{j=1}^{m}\frac{1}{m}x_{j}(0)$ for all $i\in\mathcal I$.
\end{corollary}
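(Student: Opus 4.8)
The plan is to diagonalize the symmetric Laplacian and thereby turn the vector dynamics into $m$ decoupled scalar recursions, one per eigenmode. Because $\mathcal G$ is symmetric with a spanning tree, $L$ is symmetric with a simple zero eigenvalue, and in the orthonormal eigenbasis furnished by $\Gamma$ the held-input trajectory (\ref{xs}) becomes, on each interval $[t_k,t_{k+1}]$,
\begin{align*}
y_j(t)=\big(1-(t-t_k)\alpha_j\big)\,y_j(t_k),\qquad j=1,\dots,m,
\end{align*}
since $L$ acts as multiplication by $-\alpha_j$ on the $j$-th eigendirection of $-L$. The mode $j=1$, with $\alpha_1=0$ and eigenvector $\tfrac{1}{\sqrt m}\mathbf 1$, is invariant, so $y_1(t)\equiv y_1(0)=\tfrac{1}{\sqrt m}\sum_j x_j(0)$; this frozen component will pin down the consensus value. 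First I would therefore write the disagreement as $\|x(t)-\bar{x}\,\mathbf 1\|^2=\sum_{j\ge2}y_j(t)^2$ and reduce the theorem to showing each $y_j$, $j\ge2$, decays to zero.

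The heart of the argument is that, across one triggering step, mode $j$ is simply multiplied by $1-\Delta_k\alpha_j$, and that hypothesis (\ref{event0.1c}) forces $|1-\Delta_k\alpha_j|\le\gamma<1$ \emph{uniformly} in $j\ge2$. I would verify this by bounding $\Delta_k\alpha_j$ from both sides using $\alpha_2\le\alpha_j\le\alpha_m$: the upper bound $\Delta_k\le\tfrac{1+\gamma}{\alpha_m}$ yields $\Delta_k\alpha_j\le1+\gamma$, hence $1-\Delta_k\alpha_j\ge-\gamma$, while the lower bound $\Delta_k\ge\tfrac{1-\gamma}{\alpha_2}$ yields $\Delta_k\alpha_j\ge1-\gamma$, hence $1-\Delta_k\alpha_j\le\gamma$. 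Thus the upper restriction on $\Delta_k$ (driven by the largest mode $\alpha_m$) prevents over-contraction below $-\gamma$, and the lower restriction (driven by the smallest nonzero mode $\alpha_2$) guarantees contraction at least to $\gamma$. Together they give $|y_j(t_{k+1})|\le\gamma|y_j(t_k)|$ for every $j\ge2$, so by induction $|y_j(t_k)|\le\gamma^{k}|y_j(0)|\to0$.

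It then remains to settle feasibility and the continuous-time claim. The admissible window $\big[\tfrac{1-\gamma}{\alpha_2},\tfrac{1+\gamma}{\alpha_m}\big]$ is nonempty exactly when $(1-\gamma)\alpha_m\le(1+\gamma)\alpha_2$, i.e. $\gamma\ge\tfrac{\alpha_m-\alpha_2}{\alpha_m+\alpha_2}$, which is precisely the standing restriction on $\gamma$; this also shows a \emph{constant} $\Delta_k$ is permissible, giving a genuine periodic rule whose period depends only on the spectral extremes of $-L$. For intermediate times $t\in(t_k,t_{k+1})$ the factor $1-(t-t_k)\alpha_j$ lies in $[1-\Delta_k\alpha_j,1]\subseteq[-\gamma,1]$, so $|y_j(t)|\le|y_j(t_k)|$ and the geometric decay at the sample points extends to all $t$, giving $\|x(t)-\bar{x}\,\mathbf 1\|\to0$. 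Finally, letting every $y_j$ with $j\ge2$ vanish while $y_1$ stays fixed yields $x_i(t)\to\tfrac{1}{\sqrt m}y_1(0)=\tfrac1m\sum_j x_j(0)$, the asserted limit.

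The computation is essentially routine once the modal decomposition is in place; the only point that needs care---rather than a real obstacle---is getting the two-sidedness right, namely that the \emph{lower} bound on $\Delta_k$ must be tied to $\alpha_2$ and the \emph{upper} bound to $\alpha_m$, and that the resulting interval is nonempty. This is exactly where the hypothesis $\gamma\ge\tfrac{\alpha_m-\alpha_2}{\alpha_m+\alpha_2}$ enters, and it also explains why no such simple periodic rule can be expected when $\gamma$ is chosen too small relative to the spectral spread of the graph.
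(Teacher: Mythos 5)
Your proof is correct and takes essentially the same route as the paper's: diagonalize the symmetric $L$ in its orthonormal eigenbasis, observe that hypothesis (\ref{event0.1c}) forces every per-step multiplier $1-\Delta_{k}\alpha_{j}$ ($j\ge 2$) to have absolute value at most $\gamma<1$ while the zero mode stays frozen, and conclude geometric decay to the average $\frac{1}{m}\sum_{j}x_{j}(0)$. Your write-up is in fact slightly more careful than the paper's, which writes the step matrix as $B_{k}=\Delta_{k}\Theta+I$ (a sign slip, since $-L=\Gamma^{\top}\Theta\Gamma$ with $\Theta\ge 0$) and omits both the feasibility check $\gamma\ge\frac{\alpha_{m}-\alpha_{2}}{\alpha_{m}+\alpha_{2}}$ ensuring the interval for $\Delta_{k}$ is nonempty and the bound on the intermediate-time factor $1-(t-t_{k})\alpha_{j}$.
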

\begin{proof} Let $B_{k}=(t_{k+1}-t_k)\Theta +I,k=0,1,\cdots$. From (\ref{event0.1c}) we know that the absolute value of $B_{k}$'s diagonal elements are all strictly less than 1 except the first diagonal element. Thus
\begin{align*}
y(t)=[(t-t_k)\Theta +I]B_{k-1}\cdots B_{0}y(0),
\end{align*}
and $\lim_{t\to\infty}y(t)=\lim_{k\to\infty}B_{k-1}\cdots B_{0}y(0)=diag[1,0,\cdots,0]y(0)$. So
\begin{align*}
\lim_{t\to\infty}x(t)=\Gamma diag[1,0,\cdots,0]\Gamma^{\top}x(0)=\sum_{j=1}^{m}\frac{1}{m}x_{j}(0) \mathbf 1.
\end{align*}
\end{proof}

\subsection{Asymmetric and reducible topology}
In this subsection, we consider the case of reducible $L$ and we still suppose $L$ is written in the form of (\ref{PF}). Let $\zeta^{p}=t-t_{k(t)}$. From (\ref{xs}), we can rewrite the $SCC_{p}$ agents¡¯ states as:
\begin{align*}
x^{p}(t)=\zeta^{p}\sum^{K}_{j=p} L^{p,j}x^{j}(t_{k(t)})+x^{p}(t_{k(t)}).
\end{align*}
Thus, to specify (\ref{tauk}), we can rewrite
\begin{align*}
&\psi^{p}=2a^{p}\gamma(\frac{a^{p}\rho(\hat{Q}^{p})}{2\rho_2(-Q^{p})}-1),
~\|x^{p}(t_{k(t)})-x^{p}(t)\|^2=\|\sum^{K}_{j=p} L^{p,j}x^{j}(t_{k(t)})\|^2(\zeta^{p})^{2},\\
&Q^{p}_3(t)=[\sum^{K}_{j=p} L^{p,j}x^{j}(t_{k(t)})]^{\top}Q^{p}[\sum^{K}_{j=p} L^{p,j}x^{j}(t_{k(t)})](\zeta^{p})^{2}
+2[\sum^{K}_{j=p} L^{p,j}x^{j}(t_{k(t)})]^{\top}Q^{p}[x^{p}(t_{k(t)})-\nu{\bf 1}]\zeta^{p}\\
&+[x^{p}(t_{k(t)})-\nu{\bf 1}]^{\top}Q^{p}[x^{p}(t_{k(t)})-\nu{\bf 1}]:=\hat{Q}^{p}_3(\zeta^{p}),\\
&|(x^{p}(t)-\nu{\bf 1})^{\top}\Xi^{p}L^{p,p}(x^{p}(t_{k(t)})-x^{p}(t))|=|[\sum^{K}_{j=p} L^{p,j}x^{j}(t_{k(t)})]^{\top}Q^{p}[\sum^{K}_{j=p} L^{p,j}x^{j}(t_{k(t)})](\zeta^{p})^{2}\\
&+[\sum^{K}_{j=p} L^{p,j}x^{j}(t_{k(t)})]^{\top}(L^{p,p})^{\top}\Xi^{p}[x^{p}(t_{k(t)})-\nu{\bf 1}]\zeta^{p}|:=\tilde{Q}^{p}_3(\zeta^{p}).
\end{align*}
Solve the following inequality to maximise $\zeta^{p}$ so that
\begin{align}
\tau^{p}_{l+1}=\max\Big\{\zeta^{p}:\|\sum^{K}_{j=p} L^{p,j}x^{j}(t_{k(t)})\|^2(s)^{2}\le \psi^{p}\hat{Q}^{p}_3(s),~\forall s\in[0,\zeta^{p}]\Big\},\label{event0.21}
\end{align}
or
\begin{align}
\tau^{p}_{l+1}=\max\Big\{\zeta^{p}:\tilde{Q}^{p}_3(s)\le (\gamma \hat{Q}^{p}_3(s))^2,~\forall s\in[0,\zeta^{p}]\Big\}.\label{event0.211}
\end{align}
Then, we have the following results
\begin{theorem}\label{thm0.21s}
Suppose that $\mathcal G$ has spanning tree and $L$ is written in the form of (\ref{PF}). At each update time $t_{l}$, giving $\tau^{1}_{l+1},\cdots,\tau^{K}_{l+1}$ as in (\ref{event0.21}) then the next update time $t_{l+1}=t_{l}+\min_{p}\{\tau^{p}_{l+1}\}$ with any fixed $\gamma\in(0,1)$ and $0<a^{p}<\frac{2\rho_2(-Q^{p})}{\rho(\hat{Q}^{p})}$.
Then, system (\ref{mg2}) reaches a consensus; in addition, $\lim_{t\to\infty}x_{i}(t)=\sum_{j=1}^{n_{K}}\xi^{K}_{j}x^{K}_{j}(0)$ and $\lim_{t\to\infty}x_{i}(t_{k(t)})=\sum_{j=1}^{n_{K}}\xi^{K}_{j}x^{K}_{j}(0)$ for all $i\in\mathcal I$.
\end{theorem}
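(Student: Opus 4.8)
The plan is to reduce Theorem \ref{thm0.21s} to the already proven event-triggered result Theorem \ref{thm0.1r}, exploiting the fact that between two consecutive samples the closed loop (\ref{mg2}) is an explicit affine function of the elapsed time. First I would note that on each interval $[t_l,t_{l+1})$ the coupling is frozen, so $\dot x=Lx(t_{k(t)})$ and hence, writing $\zeta^p=t-t_{k(t)}$, each block satisfies $x^p(t)=x^p(t_{k(t)})+\zeta^p\sum_{j=p}^{K}L^{p,j}x^j(t_{k(t)})$ exactly, as recorded just above (\ref{event0.21}). Substituting this affine expression into the two quantities that define the event rule (\ref{tauk}) (with $k=p$), namely $\|x^p(t_l)-x^p(t)\|^2$ and $Q^p_3(t)$, turns both into explicit polynomials in $\zeta^p$: the measurement-error term becomes the pure quadratic $\|\sum_{j=p}^{K}L^{p,j}x^j(t_{k(t)})\|^2(\zeta^p)^2$, and $Q^p_3(t)$ becomes the quadratic $\hat Q^p_3(\zeta^p)$. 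Consequently the event inequality in (\ref{tauk}) is, for each fixed $l$, precisely the one-variable quadratic inequality inside the maximization (\ref{event0.21}), with $\psi^p=2a^p\gamma\big(\tfrac{a^p\rho(\hat Q^p)}{2\rho_2(-Q^p)}-1\big)$.

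The second step is to argue that the precomputed interval $\tau^p_{l+1}$ produced by (\ref{event0.21}) coincides with the event inter-event time $\omega^p_{l+1}-t_l$ of (\ref{tauk}). This is immediate from the first step: since on $[t_l,t_{l+1})$ the trajectory genuinely equals the affine function used in the substitution, the first instant at which the quadratic inequality fails, computed off-line in (\ref{event0.21}) by requiring validity for all $s\in[0,\zeta^p]$, is the same instant at which continuous monitoring would have triggered along $\dot x=Lx(t_l)$. Taking $\min_p\{\tau^p_{l+1}\}$ then reproduces (\ref{event0.1r}) exactly, so the self-triggered sampling sequence $\{t_l\}$ is identical to the event-triggered one of Theorem \ref{thm0.1r}. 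All of that theorem's conclusions, consensus of (\ref{mg2}) and convergence of every $x_i(t)$ and $x_i(t_{k(t)})$ to $\nu=\sum_{p=1}^{n_K}\xi^K_p x^K_p(0)$ via the block-by-block induction from $SCC_K$ down to $SCC_1$, therefore transfer without change.

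The step I expect to be the main obstacle is the sign bookkeeping and well-posedness of the scalar inequality, which must be checked to make the reduction rigorous. Because $Q^p$ is negative definite, $\hat Q^p_3(s)=w^\top Q^p w$ with $w=x^p(t_l)+s\sum_{j=p}^{K}L^{p,j}x^j(t_{k(t)})-\nu\mathbf 1$ is nonpositive for every real $s$, and $\psi^p<0$ under the admissible range $0<a^p<\tfrac{2\rho_2(-Q^p)}{\rho(\hat Q^p)}$; hence $\psi^p\hat Q^p_3(s)\ge 0$ and the constraint $\|\sum_{j=p}^{K}L^{p,j}x^j(t_{k(t)})\|^2 s^2\le\psi^p\hat Q^p_3(s)$ is a genuine bound on $s$. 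Evaluating at $s=0$ gives left side $0$ and right side $\psi^p Q^p_3(t_l)>0$ whenever $x^p(t_l)\neq\nu\mathbf 1$, so by continuity the feasible set contains $0$ in its interior and $\tau^p_{l+1}>0$, which secures strictly positive inter-event times. I would also verify that imposing the inequality on all of $[0,\zeta^p]$, rather than merely at the endpoint, is exactly what guarantees $\frac{d}{dt}V_{K-1}(t)\le 0$ over the entire sampling interval as used in (\ref{dVK-1r1}).

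Finally I would confirm that the induction's driving mechanism carries over unchanged: the cross term $F_{1,\upsilon^{K-1}}(t)$ still vanishes as in (\ref{F1up}), because the higher-indexed blocks $SCC_K,\dots,SCC_{k+1}$ converge under the same (now self-triggered, but identical) sampling, so $x^j(t_{k(t)})\to\nu\mathbf 1$ exponentially for $j>k$, exactly as in the event-triggered proof of Theorem \ref{thm0.1r}. With the affine-trajectory equivalence, the positivity of $\tau^p_{l+1}$, and the unchanged induction in hand, the two stated limits follow, completing the argument.
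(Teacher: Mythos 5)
Your proposal is correct and takes essentially the same approach as the paper: the paper's own proof is a one-line reduction observing that under the maximisation process (\ref{event0.21}) the trajectory between triggers is exactly the affine function substituted into the event rule, so the arguments of Theorem \ref{thm0.1r} apply verbatim. Your sign bookkeeping and the positivity check of $\tau^{p}_{l+1}$ are sound elaborations of details the paper leaves implicit.
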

\begin{proof} Under the maximisation process (\ref{event0.21}), by the same arguments as in the proof of Theorem \ref{thm0.1r}, one can prove this theorem.
\end{proof}

\begin{corollary}\label{thm0.211s}
Suppose that $\mathcal G$ has spanning tree and $L$ is written in the form of (\ref{PF}). At each update time $t_{l}$, giving $\tau^{1}_{l+1},\cdots,\tau^{K}_{l+1}$ as in (\ref{event0.211}) then the next update time $t_{l+1}=t_{l}+\min_{p}\{\tau^{p}_{l+1}\}$ with any fixed $\gamma\in(0,1)$.
Then, system (\ref{mg2}) reaches a consensus; in addition, $\lim_{t\to\infty}x_{i}(t)=\sum_{j=1}^{n_{K}}\xi^{K}_{j}x^{K}_{j}(0)$ and $\lim_{t\to\infty}x_{i}(t_{k(t)})=\sum_{j=1}^{n_{K}}\xi^{K}_{j}x^{K}_{j}(0)$ for all $i\in\mathcal I$.
\end{corollary}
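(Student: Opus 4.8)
The plan is to recognise that the self-triggered rule (\ref{event0.211}) is simply the continuous-monitoring event-triggered rule (\ref{event0.01r}) of Corollary \ref{thm0.01r} rewritten as an algebraic inequality in the single scalar $\zeta^p=t-t_{k(t)}$; once the two rules are shown to generate the same triggering sequence, every convergence conclusion follows verbatim from Corollary \ref{thm0.01r}. First I would invoke the structural fact (\ref{xs}) that on each interval $[t_l,t_{l+1})$ the coupling is frozen at $x(t_l)$, so the $SCC_p$-block of the state is affine in $\zeta^p$,
$$x^p(t)=\zeta^p\sum_{j=p}^{K}L^{p,j}x^j(t_{k(t)})+x^p(t_{k(t)}).$$
Substituting this into the two quantities entering the event condition (\ref{tauk0.01}), namely $Q^p_3(t)$ and $|(x^p(t)-\nu\mathbf 1)^\top\Xi^p L^{p,p}(x^p(t_l)-x^p(t))|$, turns each into an explicit polynomial in $\zeta^p$; these are exactly the polynomials $\hat Q^p_3(\zeta^p)$ and $\tilde Q^p_3(\zeta^p)$ displayed just before (\ref{event0.211}). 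The point of this rewriting is that the resulting inequality can be solved for $\zeta^p$ at time $t_l$ using only the frozen data $x^j(t_l)$, $j\ge p$, so no continuous observation of the states is needed.

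Next I would argue that the maximisation defining $\tau^p_{l+1}$ in (\ref{event0.211}) returns precisely the inter-event length that (\ref{event0.01r}) would produce for component $p$ under continuous monitoring: both take the largest window on which the identical inequality holds at every intermediate instant, the quantifier $\forall\,s\in[0,\zeta^p]$ being the self-triggered transcription of $\forall\,t\in[t_l,\tau]$. Since the update instant is chosen as $t_{l+1}=t_l+\min_p\{\tau^p_{l+1}\}$, matching the $\min_k$ in (\ref{event0.01r}), the two rules produce identical triggering sequences for all time. Therefore the Lyapunov estimate of Corollary \ref{thm0.01r} applies unchanged, and the induction on strongly connected components from the proof of Theorem \ref{thm0.1r} ($SCC_K$ first, then $SCC_{K-1}$, and so on) gives $V_k(t)\to0$ exponentially for each $k$, whence $\lim_{t\to\infty}x_i(t)=\sum_{p=1}^{n_K}\xi^K_p x^K_p(0)$ together with the same limit for $x_i(t_{k(t)})$.

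The main obstacle is bookkeeping rather than conceptual: one must verify that the polynomial substitution reproduces $\hat Q^p_3$ and $\tilde Q^p_3$ with the correct coefficients and sign conventions --- recalling that $Q^p$ is negative definite, so $\hat Q^p_3(\zeta^p)\le0$ --- so that (\ref{event0.211}) is genuinely \emph{equivalent} to (\ref{tauk0.01}) and not merely sufficient; the consistent treatment of the absolute values when squaring is the delicate part. A secondary point is realisability (absence of Zeno behaviour), which I would not re-prove in detail: as noted in the remark following Corollary \ref{thm0.01r}, a strictly positive lower bound on each inter-event time is inherited by the same induction used in Theorems \ref{thm0.11} and \ref{thm0.01c}, the frozen-control structure guaranteeing such a bound on every $SCC_p$.
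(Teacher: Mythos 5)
Your proposal is correct and is essentially the paper's own (implicit) argument: the paper disposes of this corollary exactly as you do, by observing that under the frozen-control solution (\ref{xs}) the rule (\ref{event0.211}) is the algebraic transcription in $\zeta^p$ of the event-triggered rule (\ref{tauk0.01}) of Corollary \ref{thm0.01r}, so that consensus and the limits $\lim_{t\to\infty}x_i(t)=\lim_{t\to\infty}x_i(t_{k(t)})=\sum_{j=1}^{n_K}\xi^K_j x^K_j(0)$ follow from the same induction over strongly connected components used in Theorem \ref{thm0.1r}. Your caution about ``consistent treatment of the absolute values when squaring'' is well placed: as printed, (\ref{event0.211}) compares $\tilde{Q}^p_3(s)$ (first power) against $(\gamma\hat{Q}^p_3(s))^2$, so for genuine equivalence with (\ref{tauk0.01}) --- and consistency with the irreducible counterpart (\ref{event0.111}) --- the left-hand side should be $(\tilde{Q}^p_3(s))^2$.
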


Like Theorem \ref{thm0.12}, next we will give a simpler self-triggered rule. Since $L$ is written in the form of (\ref{PF}), the $SCC_{K-1}$ agents' states can be formulated as:
\begin{align*}
x^{K-1}(t)=(t-t_{k(t)})L^{K-1,K-1}x^{K-1}(t_{k(t)})+(t-t_{k(t)})L^{K-1,K}x^{K}(t_{k(t)})+x^{K-1}(t_{k(t)}).
\end{align*}
Thus
\begin{align}
&\frac{d}{dt}V_{K-1}(t)
=(x^{K-1}(t)-\nu{\bf 1})^{\top}\Xi^{K-1}(
\dot{x}^{K-1}(t))\nonumber\\
=&(x^{K-1}(t)-\nu{\bf 1})^{\top}\Xi^{K-1}\Big\{L^{K-1,K-1}x^{K-1}(t_{k(t)})+L^{K-1,K}x^{K}(t_{k(t)})\Big\}\nonumber\\
=&\Big\{(t-t_{k(t)})L^{K-1,K-1}(x^{K-1}(t_{k(t)})-\nu{\bf 1})+(t-t_{k(t)})L^{K-1,K}(x^{K}(t_{k(t)})-\nu{\bf 1})\nonumber\\
&+(x^{K-1}(t_{k(t)})-\nu{\bf 1})\Big\}^{\top}\Xi^{K-1}\Big\{L^{K-1,K-1}(x^{K-1}(t_{k(t)})-\nu{\bf 1})\nonumber\\
&+L^{K-1,K}(x^{K}(t_{k(t)})-\nu{\bf 1})\Big\}\nonumber\\
=&(t-t_{k(t)})[L^{K-1,K-1}(x^{K-1}(t_{k(t)})-\nu{\bf 1})]^{\top}\Xi^{K-1}L^{K-1,K-1}(x^{K-1}(t_{k(t)})-\nu{\bf 1})\nonumber\\
+&Q^{K-1}_{4}(t)+Q^{K-1}_{5}(t)
\label{dVK-10.2r}
\end{align}
where
\begin{align*}
Q^{K-1}_{4}(t)=&(x^{K-1}(t_{k(t)})-\nu{\bf 1})^{\top}\Xi^{K-1}L^{K-1,K-1}(x^{K-1}(t_{k(t)})-\nu{\bf 1})\\
=&(x^{K-1}(t_{k(t)})-\nu{\bf 1})^{\top}Q^{K-1}(x^{K-1}(t_{k(t)})-\nu{\bf 1}),\\
Q^{K-1}_{5}(t)=&Q^{K-1}_{6}(t)+Q^{K-1}_{7}(t)+Q^{K-1}_{8}(t),\\
Q^{K-1}_{6}(t)=&2(t-t_{k(t)})[L^{K-1,K}(x^{K}(t_{k(t)})-\nu{\bf 1})]^{\top}\Xi^{K-1}L^{K-1,K-1}(x^{K-1}(t_{k(t)})-\nu{\bf 1}),\\
Q^{K-1}_{7}(t)=&[x^{K-1}(t_{k(t)})-\nu{\bf 1}]^{\top}\Xi^{K-1}L^{K-1,K}(x^{K}(t_{k(t)})-\nu{\bf 1}),\\
Q^{K-1}_{8}(t)=&(t-t_{k(t)})[L^{K-1,K}(x^{K}(t_{k(t)})-\nu{\bf 1})]^{\top}\Xi^{K-1}L^{K-1,K}(x^{K}(t_{k(t)})-\nu{\bf 1}).
\end{align*}
From (\ref{QK-1XiK-1}), for any $\upsilon^{K-1}_{5},~\upsilon^{K-1}_{6}>0$, we have
\begin{align*}
Q^{K-1}_{6}(t)\le&\upsilon^{K-1}_{5}(x^{K-1}(t_{k(t)})-\nu{\bf 1})^{\top}(x^{K-1}(t_{k(t)})-\nu{\bf 1})+F_{1,\upsilon^{K-1}_{5}}(t)\\
\le&-\upsilon^{K-1}_{5}\frac{1}{\rho_2(-Q^{K-1})} Q^{K-1}_{4}(t)+F_{1,\upsilon^{K-1}_{5}}(t),\\
Q^{K-1}_{7}(t)\le&\upsilon^{K-1}_{6}(x^{K-1}(t_{k(t)})-\nu{\bf 1})^{\top}(x^{K-1}(t_{k(t)})-\nu{\bf 1})+F_{1,\upsilon^{K-1}_{6}}(t)\\
\le&-\upsilon^{K-1}_{6}\frac{1}{\rho_2(-Q^{K-1})} Q^{K-1}_{4}(t)+F_{1,\upsilon^{K-1}_{6}}(t),
\end{align*}
where
\begin{align*}
&F_{1,\upsilon^{K-1}_{5}}(t)=\frac{1}{4\upsilon^{K-1}_{5}}\|2(t-t_{k(t)})[L^{K-1,K}(x^{K}(t_{k(t)})-\nu{\bf 1})]^{\top}\Xi^{K-1}L^{K-1,K-1}\|^2\\
&F_{1,\upsilon^{K-1}_{6}}(t)=\frac{1}{4\upsilon^{K-1}_{6}}\|\Xi^{K-1}L^{K-1,K}(x^{K}(t_{k(t)})-\nu{\bf 1})\|^2.
\end{align*}
According to the discussion of $SCC_{K}$ and Theorem \ref{thm0.12}, for all $p=1,\cdots,n_{K}$, we have
\begin{eqnarray*}
\lim_{t\to\infty}x^{K}_{p}(t_{k(t)})=\nu,
\end{eqnarray*}
exponentially. So,
\begin{align}
\lim_{t\to\infty}F_{1,\upsilon^{K-1}_{5}}(t)=0,~\lim_{t\to\infty}F_{1,\upsilon^{K-1}_{6}}(t)=0,\lim_{t\to\infty}~Q^{K-1}_{8}(t)=0,\label{F1up2}
\end{align}
exponentially. Immediately, we have

\begin{theorem}\label{thm0.2r}
Suppose that $\mathcal G$ has spanning tree and $L$ is written in the form of (\ref{PF}). Set $t_{l+1}$ as the time point such that for any fixed $\gamma\in(0,1)$
\begin{eqnarray}
t_{l+1}\le t_l+\gamma\min_{p}\{\tau^{p}_{l+1}\}
\end{eqnarray}
with
\begin{eqnarray}
\tau^{p}_{l+1}=\frac{-(x^{p}(t_l)-\nu{\bf 1})^{\top}Q^{p}(x^{p}(t_l)-\nu{\bf 1})}{[L^{p,p}(x^{p}(t_l)-\nu{\bf 1})]^{\top}\Xi^{p}L^{p,p}(x^{p}(t_l)-\nu{\bf 1})}\label{event0.2r}
\end{eqnarray}
Then, system (\ref{mg2}) reaches a consensus; in addition,  for all $i\in\mathcal I$, we have $$\lim_{t\to\infty}x_{i}(t)=\sum_{j=1}^{n_{K}}\xi^{K}_{j}x^{K}_{j}(0)$$ and $$\lim_{t\to\infty}x_{i}(t_{k(t)})=\sum_{j=1}^{n_{K}}\xi^{K}_{j}x^{K}_{j}(0).$$
\end{theorem}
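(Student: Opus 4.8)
The plan is to run the same induction over the strongly connected components as in Theorem \ref{thm0.1r}, using the terminal component $SCC_K$ as the base case and replacing the event-triggered control of $\Delta x^{k}$ by the predetermined interval length supplied by the self-triggered rule (\ref{event0.2r}). First I would dispose of $SCC_K$. Because $L$ is in the Perron--Frobenius form (\ref{PF}), the $SCC_K$-subsystem is autonomous, $\dot x^{K}(t)=L^{K,K}x^{K}(t_{k(t)})$, and since no component lies below it Property \ref{p1} gives $D^{K}=0$, whence $Q^{K}=R^{K}$ and $L^{K,K}\mathbf 1=0$. The shift by $\nu\mathbf 1$ is then harmless and $\tau^{K}_{l+1}$ in (\ref{event0.2r}) coincides exactly with the rule of Theorem \ref{thm0.12}; moreover the actual inter-event time satisfies $t_{l+1}-t_l\le\gamma\min_p\{\tau^{p}_{l+1}\}\le\gamma\tau^{K}_{l+1}$, so $SCC_K$ is only triggered more often, which can only sharpen the Lyapunov estimate. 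Hence Theorem \ref{thm0.12} applies verbatim: $SCC_K$ reaches consensus at $\nu=\sum_{p}\xi^{K}_{p}x^{K}_{p}(0)$ exponentially, with $x^{K}(t_{k(t)})\to\nu\mathbf 1$ as well, and since $[0,\cdots,0,(\xi^{K})^{\top}]$ is a left null vector of $L$ this $\nu$ is the global target $\bar x(0)$.

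For the induction step I would assume $SCC_{k+1},\cdots,SCC_K$ all converge to $\nu\mathbf 1$ exponentially (in both $x^{j}(t)$ and $x^{j}(t_{k(t)})$) and prove the same for $SCC_k$; it suffices to display $k=K-1$, the general case being identical after relabelling. The derivative identity (\ref{dVK-10.2r}) already splits $\dot V_{K-1}$ into a prediction term $(t-t_{k(t)})[L^{K-1,K-1}(x^{K-1}(t_{k(t)})-\nu\mathbf 1)]^{\top}\Xi^{K-1}L^{K-1,K-1}(x^{K-1}(t_{k(t)})-\nu\mathbf 1)$, the dissipation $Q^{K-1}_4<0$, and the cross term $Q^{K-1}_5$. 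The rule (\ref{event0.2r}) is designed precisely so that on $[t_l,t_{l+1})$ one has $t-t_{k(t)}\le t_{l+1}-t_l\le\gamma\tau^{K-1}_{l+1}$; multiplying by the nonnegative denominator of $\tau^{K-1}_{l+1}$ bounds the prediction term by $-\gamma Q^{K-1}_4$. Combining this with the already-established estimates for $Q^{K-1}_6,Q^{K-1}_7$ and with $Q^{K-1}_8\to0$ yields
$$\dot V_{K-1}(t)\le\Big[(1-\gamma)-\frac{\upsilon^{K-1}_5+\upsilon^{K-1}_6}{\rho_2(-Q^{K-1})}\Big]Q^{K-1}_4+F_{1,\upsilon^{K-1}_5}(t)+F_{1,\upsilon^{K-1}_6}(t)+Q^{K-1}_8(t),$$
where by (\ref{F1up2}) the last three terms vanish exponentially.

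Next I would choose $\upsilon^{K-1}_5,\upsilon^{K-1}_6$ small enough that the bracketed coefficient $c$ is strictly positive, so $cQ^{K-1}_4<0$. Dominating $V_{K-1}$ by $-Q^{K-1}_4/(2\rho_2(-Q^{K-1}))$ through (\ref{QK-1XiK-1}) should then turn this into $\dot V_{K-1}\le-\upsilon^{K-1}_0V_{K-1}+F(t)$ with $F\to0$ exponentially; the comparison principle gives $V_{K-1}(t)\to0$ exponentially, i.e. $x^{K-1}(t)\to\nu\mathbf 1$, and the argument at the end of Theorem \ref{thm0.1} upgrades this to $x^{K-1}(t_{k(t)})\to\nu\mathbf 1$. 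Iterating down to $SCC_1$ then completes the proof, with every agent converging to $\nu$.

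The one step genuinely more delicate than in the event-triggered Theorem \ref{thm0.1r}, and the place I expect the main obstacle, is exactly this domination. There the counterpart $Q^{K-1}_3(t)$ is evaluated at the current time $t$, so $V_{K-1}(t)\le-Q^{K-1}_3(t)/(2\rho_2(-Q^{K-1}))$ is an instantaneous inequality, whereas here $Q^{K-1}_4$ is frozen at the sample time $t_l=t_{k(t)}$ and (\ref{QK-1XiK-1}) only delivers $V_{K-1}(t_l)\le-Q^{K-1}_4/(2\rho_2(-Q^{K-1}))$. The cleanest way around this sample-versus-instant mismatch is to re-expand (\ref{dVK-10.2r}) in the continuous form (\ref{dVK-1r}), writing the middle term as $(x^{K-1}(t)-\nu\mathbf 1)^{\top}\Xi^{K-1}L^{K-1,K-1}\Delta x^{K-1}(t)$ with $\Delta x^{K-1}(t)=x^{K-1}(t_l)-x^{K-1}(t)$: the self-triggered interval bound then controls $\|\Delta x^{K-1}(t)\|$ and the estimate reduces literally to (\ref{dVK-1r1}), where the domination is instantaneous. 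Alternatively, one verifies as for Theorems \ref{thm0.11} and \ref{thm0.01c} that the inter-event times are bounded above and below, so that $V_{K-1}(t)$ and $V_{K-1}(t_l)$ are uniformly comparable on each interval and the sampled inequality transfers to $t$ with an adjusted constant.
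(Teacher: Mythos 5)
Your proposal is correct and follows essentially the same route as the paper's proof: the same induction over strongly connected components with $SCC_K$ handled by Theorem \ref{thm0.12}, the same bounding of the prediction term in (\ref{dVK-10.2r}) by $-\gamma Q^{K-1}_{4}$ via the rule (\ref{event0.2r}), and the same choice of small $\upsilon^{K-1}_{5},\upsilon^{K-1}_{6}$ followed by a comparison argument. You are in fact more careful than the paper at the last step: the paper merely writes ``by the similar argument in the proof of Theorem \ref{thm0.1r}'' and never addresses the sample-versus-instant mismatch you flag (that $Q^{K-1}_{4}$ is frozen at $t_{k(t)}$ while $V_{K-1}$ is evaluated at $t$), and your second fix --- uniform upper and lower bounds on the inter-event intervals, which transfer the sampled inequality to the current time up to an exponentially decaying error --- is precisely what closes that step rigorously.
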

\begin{proof} For the $K$-th SCC, the self-triggered rule (\ref{event0.2r}) is the same as (\ref{event0.12}) in Theorem \ref{thm0.12}, since $L$ is written in the form of (\ref{PF}).
By Theorem \ref{thm0.12}, we can conclude that under the updating rule of $\{t_{l}\}$for each $v_{j+M_{K-1}}\in SCC_{K}$, the subsystem restricted in $SCC_{K}$ reaches a consensus. And, $\lim_{t\to\infty}x^{K}_{j}(t_{k(t)})=\nu$ for all $j=1,\cdots,n_{K}$ as well.

In the following, we are to prove that the state of the agent $v_{p+M_{K-2}}\in SCC_{K-1}$ converges to $\nu$ and so it is with $x^{K-1}_{p}(t_{k(t)})$. The remaining can be proved similarly by induction.

From (\ref{dVK-10.2r}) and (\ref{event0.2r}), we have
\begin{align*}
\frac{d}{dt}V_{K-1}(t)\le&(1-\gamma)Q^{K-1}_{4}(t)-\upsilon^{K-1}_{5}\frac{1}{\rho_2(-Q^{K-1})} Q^{K-1}_{4}(t)+F_{1,\upsilon^{K-1}_{5}}(t)\\
&-\upsilon^{K-1}_{6}\frac{1}{\rho_2(-Q^{K-1})} Q^{K-1}_{4}(t)+F_{1,\upsilon^{K-1}_{6}}(t).
\end{align*}
By the similar argument in the proof of Theorem \ref{thm0.1r}, we can complete the proof.
\end{proof}

\noindent{Finally}, we give a more simple self-triggered formulation which not uses agents' states but only relays the system topology.
\begin{theorem}\label{thm0.3}
Suppose that $\mathcal G$ has a spanning tree, then the self-triggered strategy with a fixed time  interval $T_0 \le\frac{\gamma}{\max_{i}\{-L_{ii}\}}$ for some fixed $\gamma\in(0,1)$ between two continue self-triggered times asymptotically solves the consensus problem (\ref{mg2}), where $L_{11}, L_{22} ,\cdots, L_{mm}$ are the diagonal elements of the Laplacian matrix $L$. In addition, $\lim_{t\to\infty}x_{i}(t)=\sum_{j=1}^{m}\eta_{j}x_{j}(0)$ for all $i\in\mathcal I$, where nonnegative vector $\eta^{\top}=[\eta_{1},\cdots,\eta_{m}]$ is a left eigenvector of $L$ corresponding eigenvalue zero and $\mathbf 1^{\top}\eta=1$.
\end{theorem}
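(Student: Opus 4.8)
The plan is to reduce the continuous-time dynamics to a discrete-time linear iteration at the sampling instants and then invoke Lemma \ref{lem3}. Since the interval is fixed, I write $t_{k+1}-t_k=T_0$ for all $k$, so that equation (\ref{xs}) gives, on each interval,
\[
x(t)=[I+(t-t_k)L]\,x(t_k),\qquad t\in[t_k,t_{k+1}),
\]
and in particular $x(t_{k+1})=P\,x(t_k)$ with $P:=I+T_0L$. Iterating yields $x(t_k)=P^{k}x(0)$, so the whole problem reduces to understanding $\lim_{k\to\infty}P^{k}$.

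First I would verify that $P$ satisfies the three hypotheses of Lemma \ref{lem3}. Because every row sum of $L$ is zero (by definition $L_{ii}=-deg^{in}(v_i)=-\sum_{j\ne i}L_{ij}$), each row sum of $P$ equals $1$. The off-diagonal entries satisfy $P_{ij}=T_0L_{ij}=T_0a_{ij}\ge 0$, while the diagonal entries satisfy $P_{ii}=1+T_0L_{ii}=1-T_0(-L_{ii})\ge 1-T_0\max_i\{-L_{ii}\}\ge 1-\gamma>0$, thanks to the bound $T_0\le\gamma/\max_i\{-L_{ii}\}$ with $\gamma\in(0,1)$. Hence $P$ is a stochastic matrix with strictly positive diagonal. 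Moreover $P_{ij}>0\iff a_{ij}>0$ for $i\ne j$, so the directed graph associated with $P$ has the same edge set as $\mathcal G$; since $\mathcal G$ has a spanning tree, so does $P$.

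By Lemma \ref{lem3} it then follows that $\lim_{k\to\infty}P^{k}=\mathbf 1 v^{\top}$, where $P^{\top}v=v$, $\mathbf 1^{\top}v=1$, and $v\ge 0$. The eigenvector relation $P^{\top}v=v$ reduces to $T_0L^{\top}v=0$, i.e. $v^{\top}L=0$; thus $v$ is precisely the nonnegative normalized left null vector $\eta$ of $L$ furnished by Lemma \ref{lem2} (recall that a spanning tree forces $rank(L)=m-1$, so this vector is unique up to the chosen normalization). Consequently $\lim_{k\to\infty}x(t_k)=\mathbf 1\,(v^{\top}x(0))=\mathbf 1\sum_{j=1}^{m}\eta_j x_j(0)$.

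Finally I would upgrade convergence at the sampling instants to convergence for all $t$. For $t\in[t_k,t_{k+1})$ we have $x(t)=x(t_k)+(t-t_k)Lx(t_k)$; since $(t-t_k)\le T_0$ is bounded and $Lx(t_k)\to L\mathbf 1\big(\sum_j\eta_j x_j(0)\big)=0$ (using $L\mathbf 1=0$), the correction term vanishes in the limit, giving $\lim_{t\to\infty}x_i(t)=\sum_{j=1}^{m}\eta_j x_j(0)$ for every $i$. I do not expect a serious obstacle: the argument is a direct application of Lemma \ref{lem3} once the three structural hypotheses are checked. The only places demanding care are the positivity of the diagonal of $P$, which is exactly what the bound $T_0\le\gamma/\max_i\{-L_{ii}\}$ purchases, and the passage from the discrete limit to the continuous-time limit, which is immediate because $L$ annihilates the consensus direction $\mathbf 1$.
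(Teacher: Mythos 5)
Your proof is correct and takes essentially the same approach as the paper's: write the dynamics at sampling instants as $x(t_{k+1})=(I+T_0L)\,x(t_k)$, apply Lemma \ref{lem3} to the stochastic matrix $I+T_0L$, identify the limiting vector with the nonnegative left null vector $\eta$ of $L$, and pass from sampling instants to all $t$. You even make explicit two steps the paper glosses over, namely the verification of the hypotheses of Lemma \ref{lem3} and the argument that the intra-interval term $(t-t_k)Lx(t_k)$ vanishes because $L\mathbf 1=0$.
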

\begin{proof} We have
\begin{align*}
x(t)=L(t-kT_0)x(kT_0)+x(kT_0)=[I+L(t-kT_0)][I+T_0L]^{k}x_{0},~t\in [kT_0,(k+1)T_0).
\end{align*}
Let $A=I+T_0L$, then $A$ satisfies all the conditions demand in Lemma \ref{lem3} under the condition $T_0 \le\frac{\gamma}{\max_{i}\{-L_{ii}\}}$. We have $\lim_{k\rightarrow\infty}A^k=\mathbf 1\eta^{\top}$, where nonnegative vector $\eta$ satisfies $A^{\top}\eta=\eta$ and $\mathbf 1^{\top}\eta=1$. Actually, $\eta$ is a left eigenvector of $L$ corresponding eigenvalue zero. Furthermore
\begin{align*}
 \lim\limits_{t\rightarrow\infty}x(t)=\lim\limits_{k\rightarrow\infty}[I+L(t-kT_0)][I+T_0 L]^{k}x_{0}=\mathbf 1\eta^{\top}x_{0}.
 \end{align*}
This completes the proof.
\end{proof}
\begin{remark}
A similar result could be found in \cite{LC2004,Rjr}, but the condition
required here is weaker since we do not require the graph is strongly
connected but just has a spanning tree.
\end{remark}

\section{Examples}
In this section, two numerical examples are given to demonstrate the effectiveness of the presented results. In order to compare the above principles and the normal continuous control, we write the continuous control here:
\begin{align}\label{mg4}
\dot{x}(t)=Lx(t).
\end{align}

\noindent{\bf Firstly}, consider a network of four agents whose Laplacian matrix is given by
\begin{eqnarray*}
L=\left[\begin{array}{rrrr}-2&2&0&0\\
0&-4&4&0\\
0&3&-7&4\\
4&0&5&-9
\end{array}\right].
\end{eqnarray*}
Obviously, this is an asymmetric strongly connected weighted network described by Figure \ref{fig:1} left. The initial value of each agent is randomly selected within the interval $[-5,5]$ in our simulations. Figure \ref{fig:2} shows the four agents evolve under the triggered principles provided in Theorem \ref{thm0.11s}, Corollary \ref{thm0.112s}, Theorem \ref{thm0.12} and Theorem \ref{thm0.3} with $\gamma=0.9$, and $a=\frac{\lambda_2}{\beta_m}=0.0666$ and initial value $[3.1470,4.0580,-3.7300, 4.1340]^{\top}$, comparing with continuous control, i.e., evolving under (\ref{mg4}).  Under above initial conditions, the consensus value can be computed, $\bar{x}(0)=1.6304$, and $T_0=0.1$ in  Theorem \ref{thm0.3}. The symbol $\cdot$ indicates the agent's triggering times.

\begin{figure}[hbt]
\centering
\includegraphics[width=2.2in]{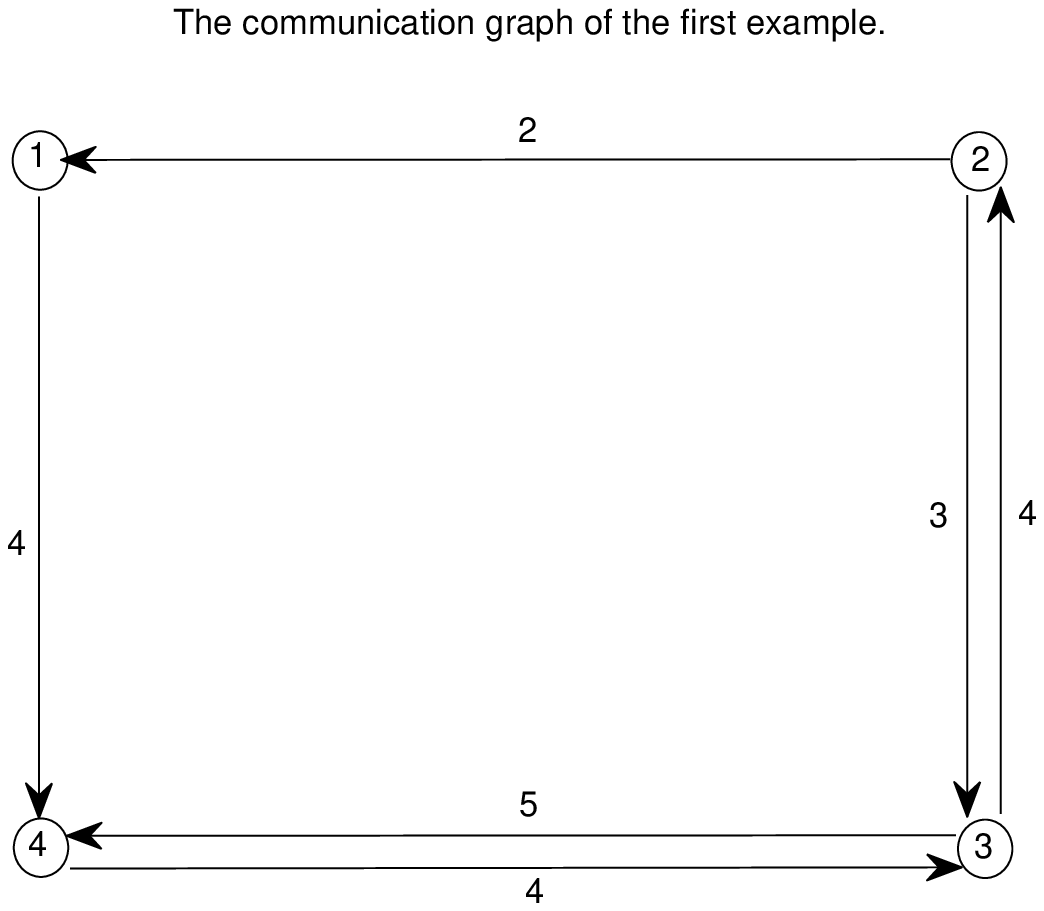}
\includegraphics[width=2.2in]{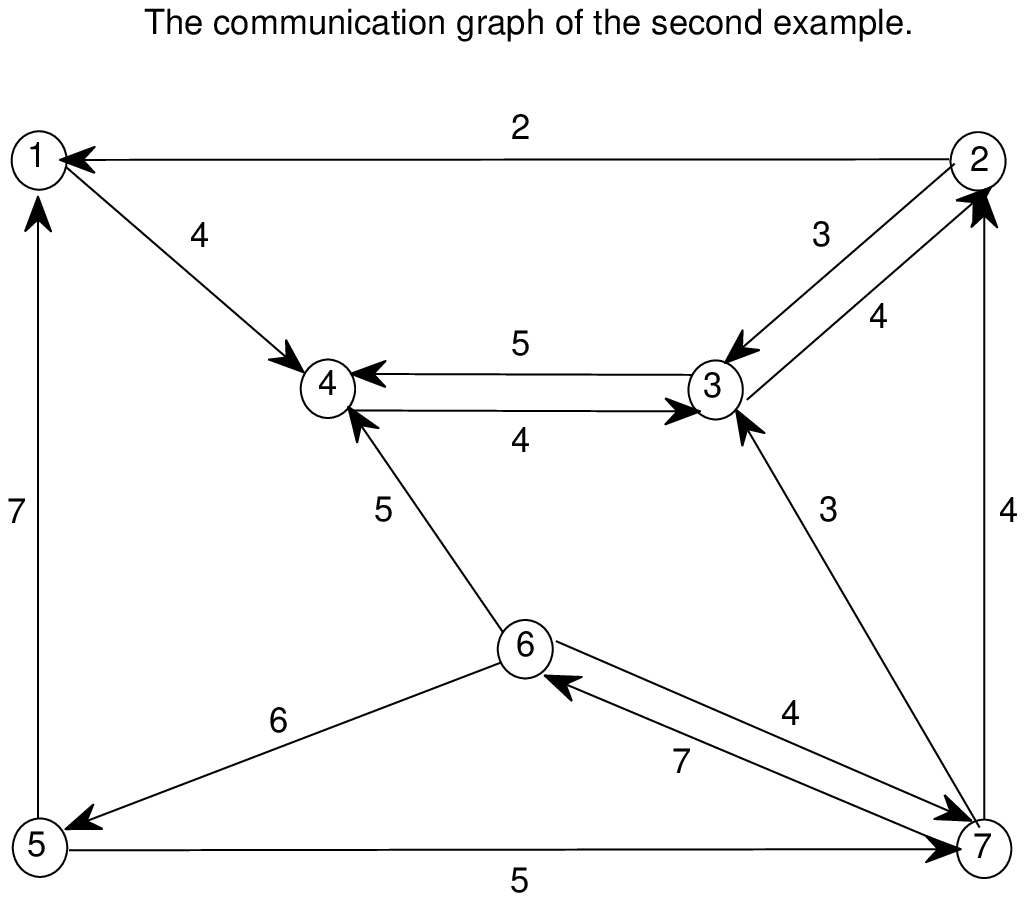}
\caption{The communication graphs.}
\label{fig:1}
\end{figure}

\begin{figure}[hbt]
\centering
\includegraphics[width=2in]{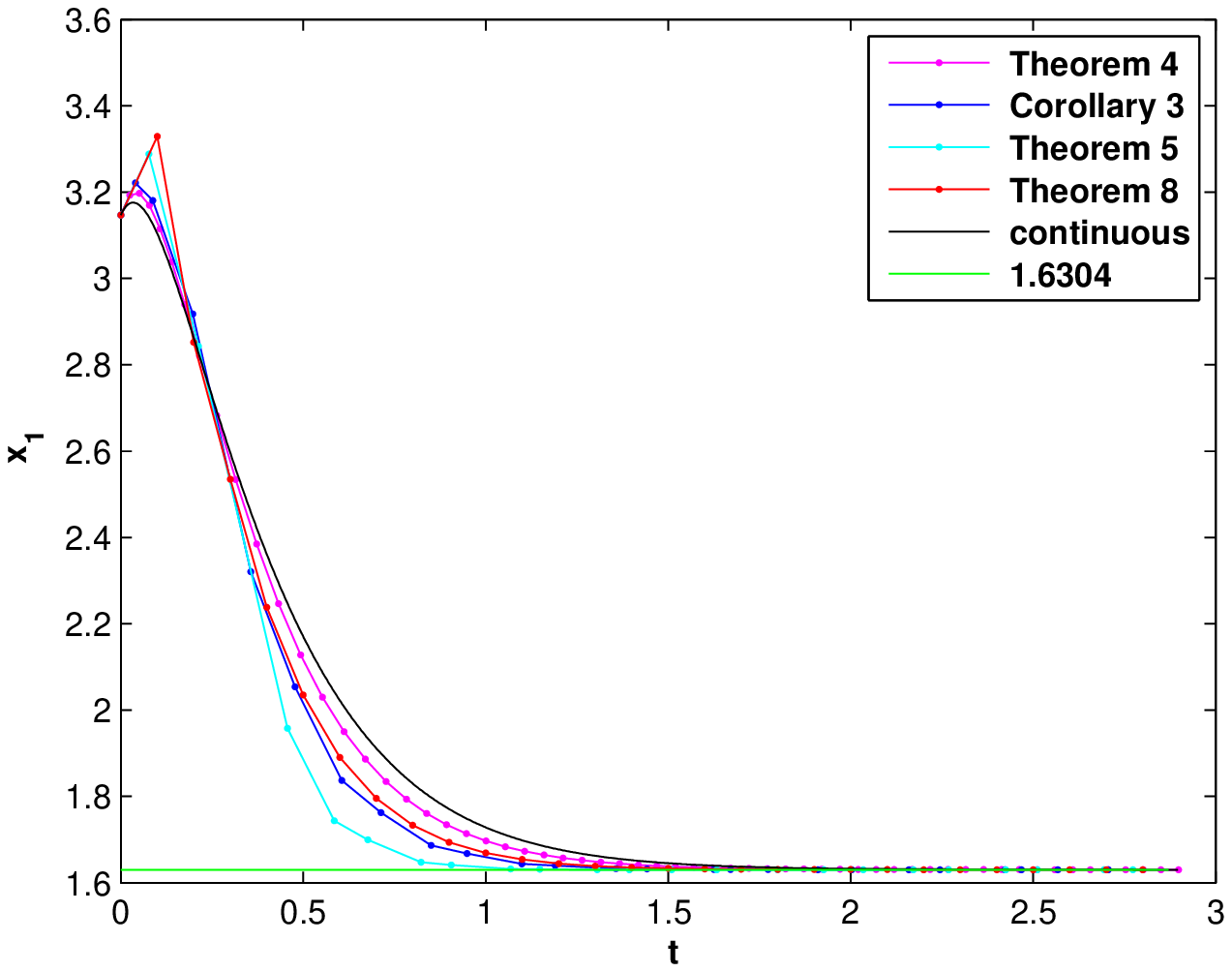}
\includegraphics[width=2in]{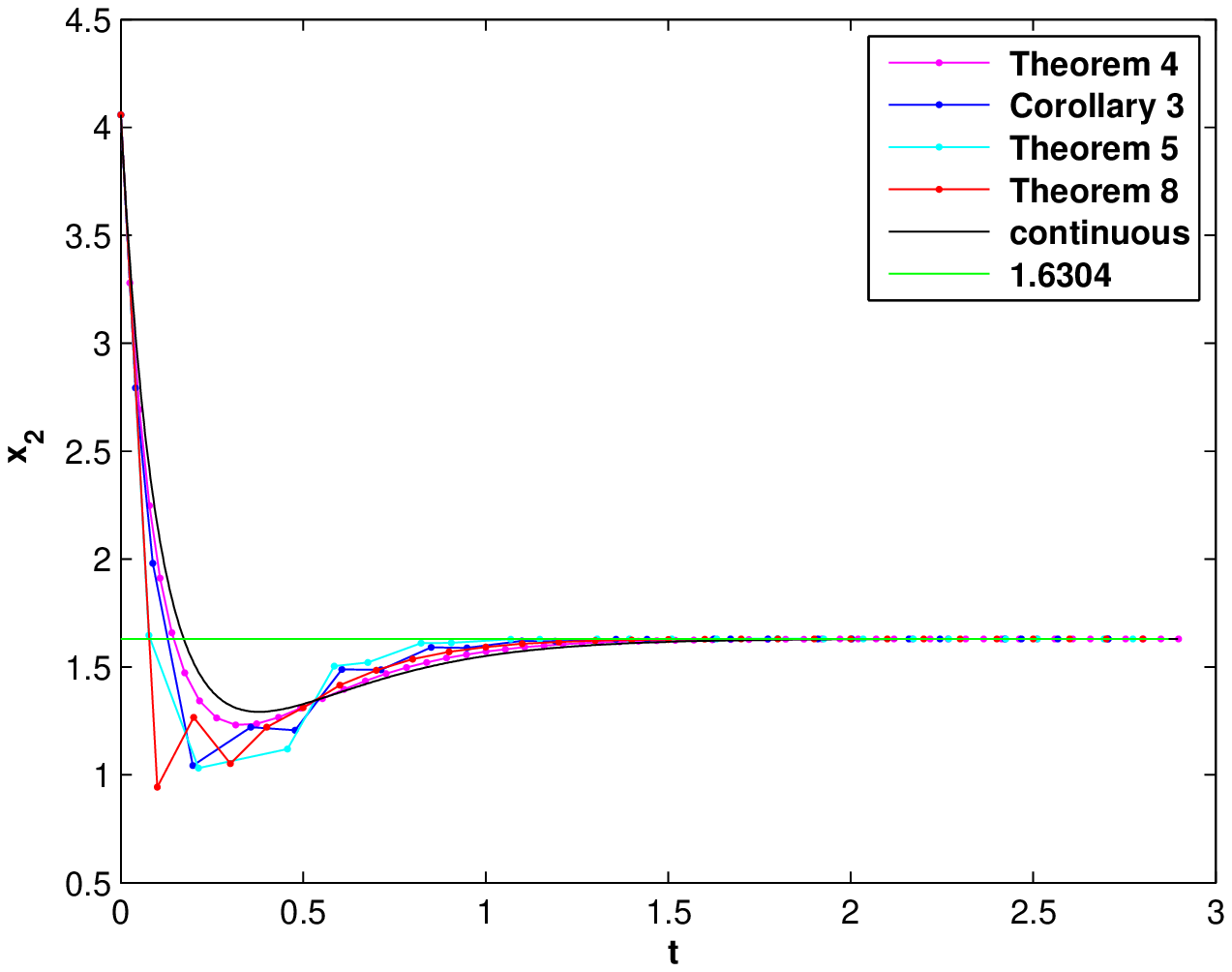}\\
\includegraphics[width=2in]{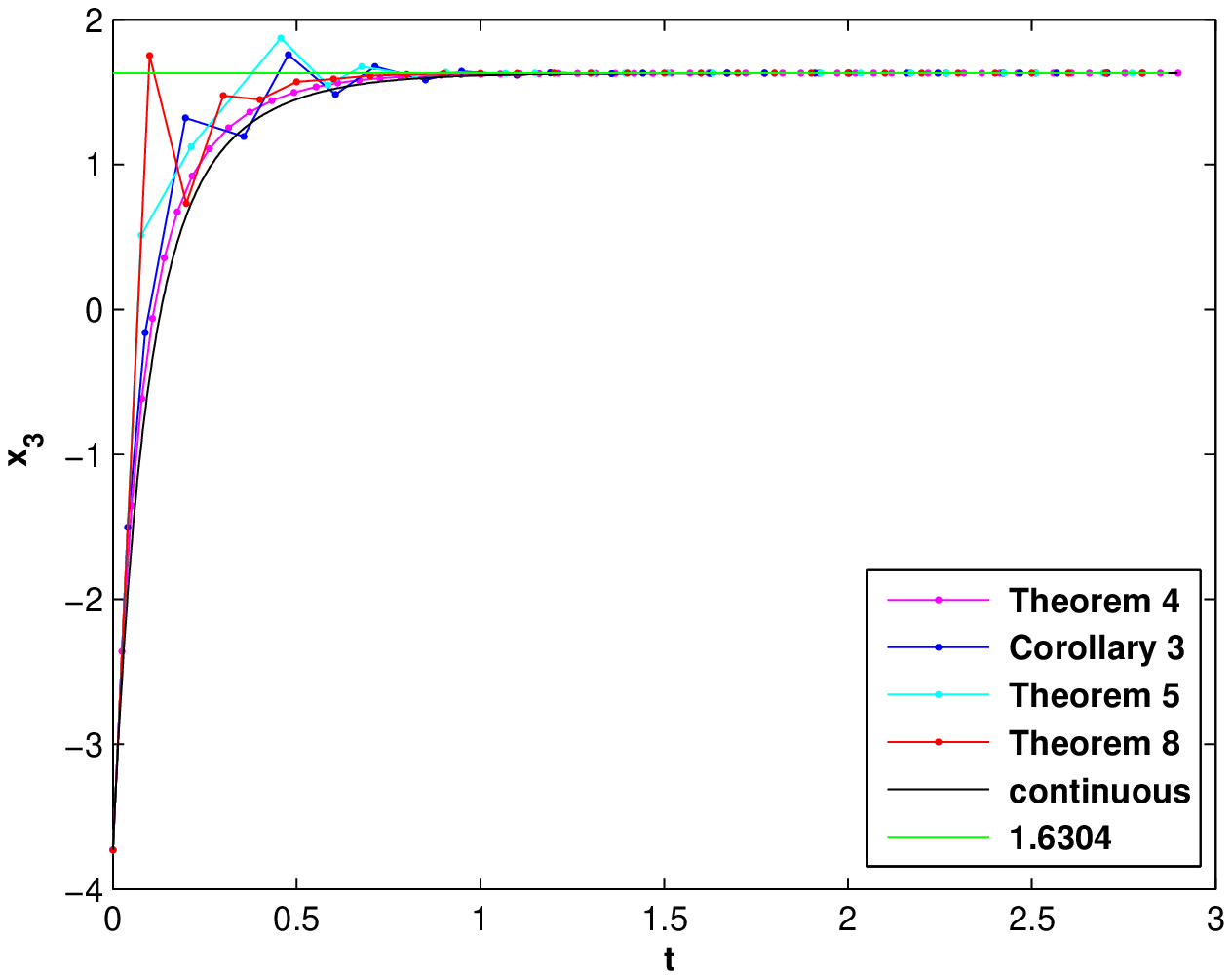}
\includegraphics[width=2in]{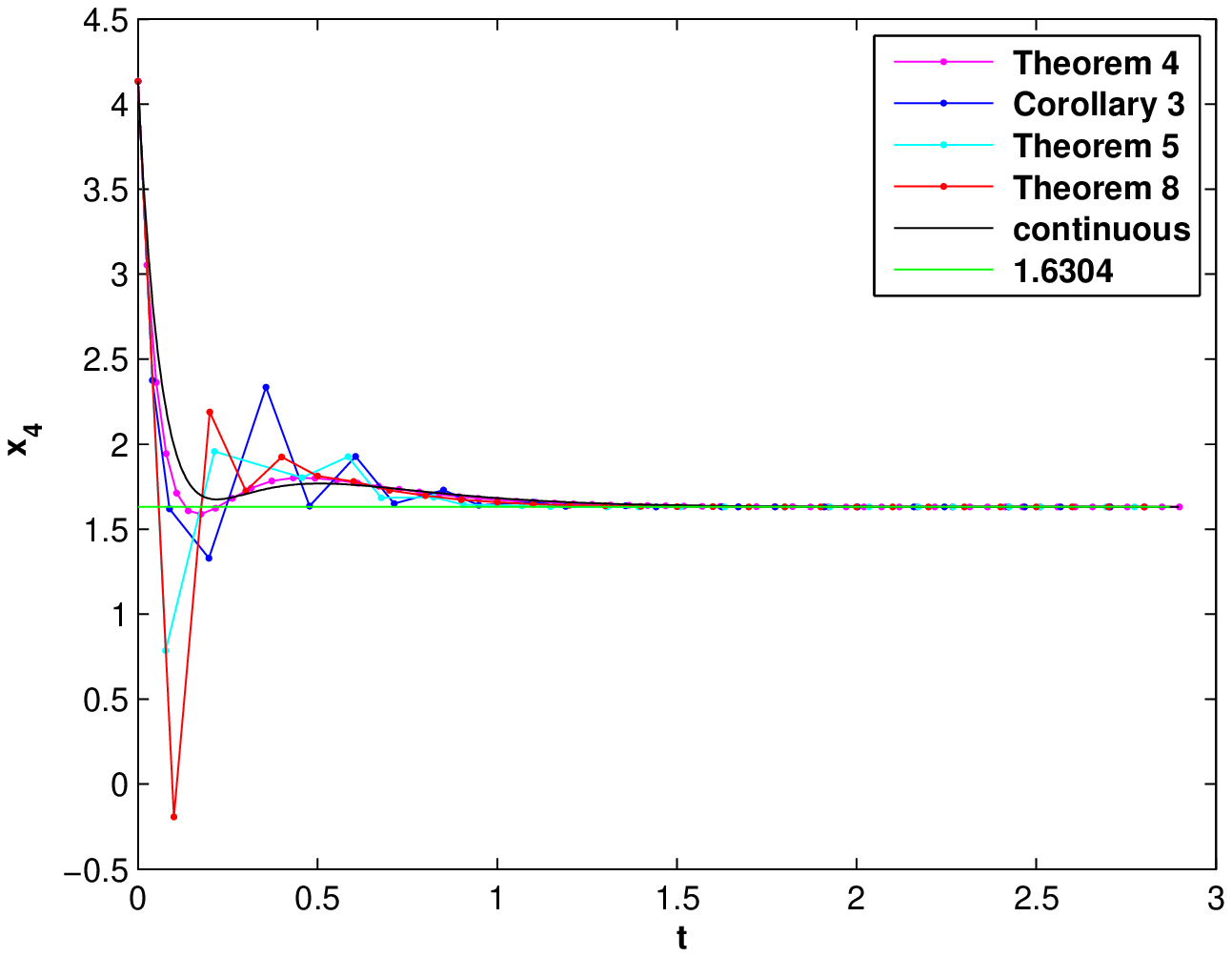}
\caption{Four agents evolve under the event-triggered principles provided in Theorem \ref{thm0.11s}, Corollary \ref{thm0.112s}, Theorem \ref{thm0.12} and Theorem \ref{thm0.3} comparing with continuous control.}
\label{fig:2}
\end{figure}

Then, the parameter $\gamma$ is set to be different values while adopting the triggered principles provided in Theorem \ref{thm0.11s}, Corollary \ref{thm0.112s} and Theorem \ref{thm0.12}. The simulation results are list in Table \ref{tab1}, Table \ref{tab2} and Table \ref{tab3}, respectively. The $T_1$ in the table denotes the first time when $\|x(t)-\bar{X}(0)\|\le 0.0001$, which can be seen as an index representing the convergence speed of the consensus protocol. All the data in this table is the average of 50 runs. It can be seen that all the actual minimum inter-event times are greater than the
corresponding $\tau_0$ calculated by (\ref{tau0}). The minimum value of event interval and the actual number of event decreases with respect to $\gamma$, which is consistent with the theoretical analysis. It is worth noting that $T_1$ also decreases with respect to $\gamma$, which is opposite to usually thinking that $T_1$ increases with respect to $\gamma$. To sum up, the more close to 1 for $\gamma$, the better for the system to realize a consensus.

\begin{table}[htbp]
\begin{tabular}{c|c|c|c|c}
\hline
$\gamma$ & $\tau_0$ calculated by (\ref{tau0}) & the minimum value of event interval & number of event & $T_1$ \\
\hline
0.1  &0.0044&0.0119   &112.88  &2.1021  \\
\hline
0.2  &0.0060&0.0162   &81.12   &2.0713  \\
\hline
0.3  &0.0071&0.0193   &67.16   &2.0524  \\
\hline
0.4  &0.0080&0.0218   &58.78   &2.0350  \\
\hline
0.5  &0.0088&0.0239   &53.16   &2.0239  \\
\hline
0.6  &0.0095&0.0257   &48.82   &2.0062  \\
\hline
0.7  &0.0100&0.0274   &45.54   &1.9944  \\
\hline
0.8  &0.0106&0.0289   &42.84   &1.9813  \\
\hline
0.9  &0.0111&0.0302   &40.72   &1.9748  \\
\hline
\end{tabular}
\caption{Simulation results with different $\gamma$ under the triggered principles provided in Theorem \ref{thm0.11s}.}
\label{tab1}
\end{table}

\begin{table}[htbp]
\begin{tabular}{c|c|c|c|c}
\hline
$\gamma$ & $\tau_0$ calculated by (\ref{tau0}) & the minimum value of event interval & number of event & $T_1$ \\
\hline
0.1  &0.0044&0.0124&107.40&2.5863  \\
\hline
0.2  &0.0060&0.0234&56.36&2.4379  \\
\hline
0.3  &0.0071&0.0333&39.24&2.3251  \\
\hline
0.4  &0.0080&0.0390&31.00&2.2519  \\
\hline
0.5  &0.0088&0.0472&25.56&2.1577  \\
\hline
0.6  &0.0095&0.0478&21.60&2.0541  \\
\hline
0.7  &0.0100&0.0465&18.46&1.9656  \\
\hline
0.8  &0.0106&0.0482&17.06&1.9263  \\
\hline
0.9  &0.0111&0.0483&16.36&1.9058  \\
\hline
\end{tabular}
\caption{Simulation results with different $\gamma$ under the triggered principles provided in Corollary \ref{thm0.112s}.}
\label{tab2}
\end{table}

\begin{table}[htbp]
\begin{tabular}{c|c|c|c|c}
\hline
$\gamma$ & $\tau_0$ calculated by (\ref{tau0}) & the minimum value of event interval & number of event & $T_1$ \\
\hline
0.1  &0.0044&0.0124&110.70&2.6132  \\
\hline
0.2  &0.0060&0.0248&53.02&2.4840  \\
\hline
0.3  &0.0071&0.0371&33.66&2.3436  \\
\hline
0.4  &0.0080&0.0495&23.92&2.1949  \\
\hline
0.5  &0.0088&0.0615&18.52&2.0551  \\
\hline
0.6  &0.0095&0.0719&16.64&1.9798  \\
\hline
0.7  &0.0100&0.0808&16.32&1.9473  \\
\hline
0.8  &0.0106&0.0821&14.82&1.8461  \\
\hline
0.9  &0.0111&0.0787&13.74&1.7856  \\
\hline
\end{tabular}
\caption{Simulation results with different $\gamma$ under the triggered principles provided in Theorem \ref{thm0.12}.}
\label{tab3}
\end{table}

Finally, we compare the triggered principles provided in Theorem \ref{thm0.11s}, Corollary \ref{thm0.112s}, Theorem \ref{thm0.12}, Theorem \ref{thm0.3} with $\gamma=0.9$, and $a=0.0666$ and continuous control. The simulation results are list in Table \ref{tab4}. All the data in this table is the average of 50 runs. It can be seen that  the triggered principle provided in Theorem \ref{thm0.12} is the best, since the corresponding minimum value of event interval is the biggest, the number of event is the smallest and the convergence speed of the consensus protocol is the fastest.

\begin{table}[htbp]
\begin{tabular}{c|c|c|c}
\hline
triggered principles  & the minimum value of event interval & number of event & $T_1$ \\
\hline
Theorem \ref{thm0.11s}  &0.0281   &50.92   &2.4554  \\
\hline
Corollary\ref{thm0.112s} &0.0466   &16.36   &1.8846  \\
\hline
Theorem \ref{thm0.12}   & 0.0774  &13.18   &1.7029  \\
\hline
Theorem \ref{thm0.3}    &/         &21.58   &2.1580  \\
\hline
continuous control      &/         &/        &2.7162  \\
\hline
\end{tabular}
\caption{Simulation results Example 1 with different triggered principles of.}
\label{tab4}
\end{table}

\noindent{\bf Secondly}, we consider a network of seven agents whose Laplacian matrix is given by
\begin{eqnarray*}
L=\left[\begin{array}{rrrrrrr}-9&2&0&0&7&0&0\\
0&-8&4&0&0&0&4\\
0&3&-10&4&0&0&3\\
4&0&5&-14&0&5&0\\
0&0&0&0&-6&6&0\\
0&0&0&0&0&-7&7\\
0&0&0&0&5&4&-9
\end{array}\right].
\end{eqnarray*}
Obviously, this is a asymmetric reducible weighted network with a spanning tree described by Figure \ref{fig:1} right. The seven agents can be divided into two strongly connected components, i.e. the first four agents form a strongly connected component and the rest form anther. The initial value of each agent is also randomly selected within the interval $[-5,5]$ in our simulations. Figure \ref{fig:3} shows the 1st, 3rd, 5th and 7th agents evolve under the triggered principles provided in Theorem \ref{thm0.21s}, Corollary \ref{thm0.211s}, Theorem \ref{thm0.2r} and Theorem \ref{thm0.3} with $\gamma=0.9$, $a^1=\frac{\rho_2(-Q^{1})}{\rho(\hat{Q}^{1})}=0.0580$, $a^2=\frac{\rho_2(-Q^{2})}{\rho(\hat{Q}^{2})}=0.0882$ and initial value $[1.3240,-4.0250,-2.2150,0.4690,4.5750,4.6490,-3.4240]^{\top}$, comparing with continuous control, i.e., evolving under (\ref{mg4}).  Under above initial conditions, the consensus value can be computed, $\nu=2.0409$, and $T_0=0.0643$ in  Theorem \ref{thm0.3}. The symbol $\cdot$ indicates the agent's triggering times.

\begin{figure}[hbt]
\centering
\includegraphics[width=2in]{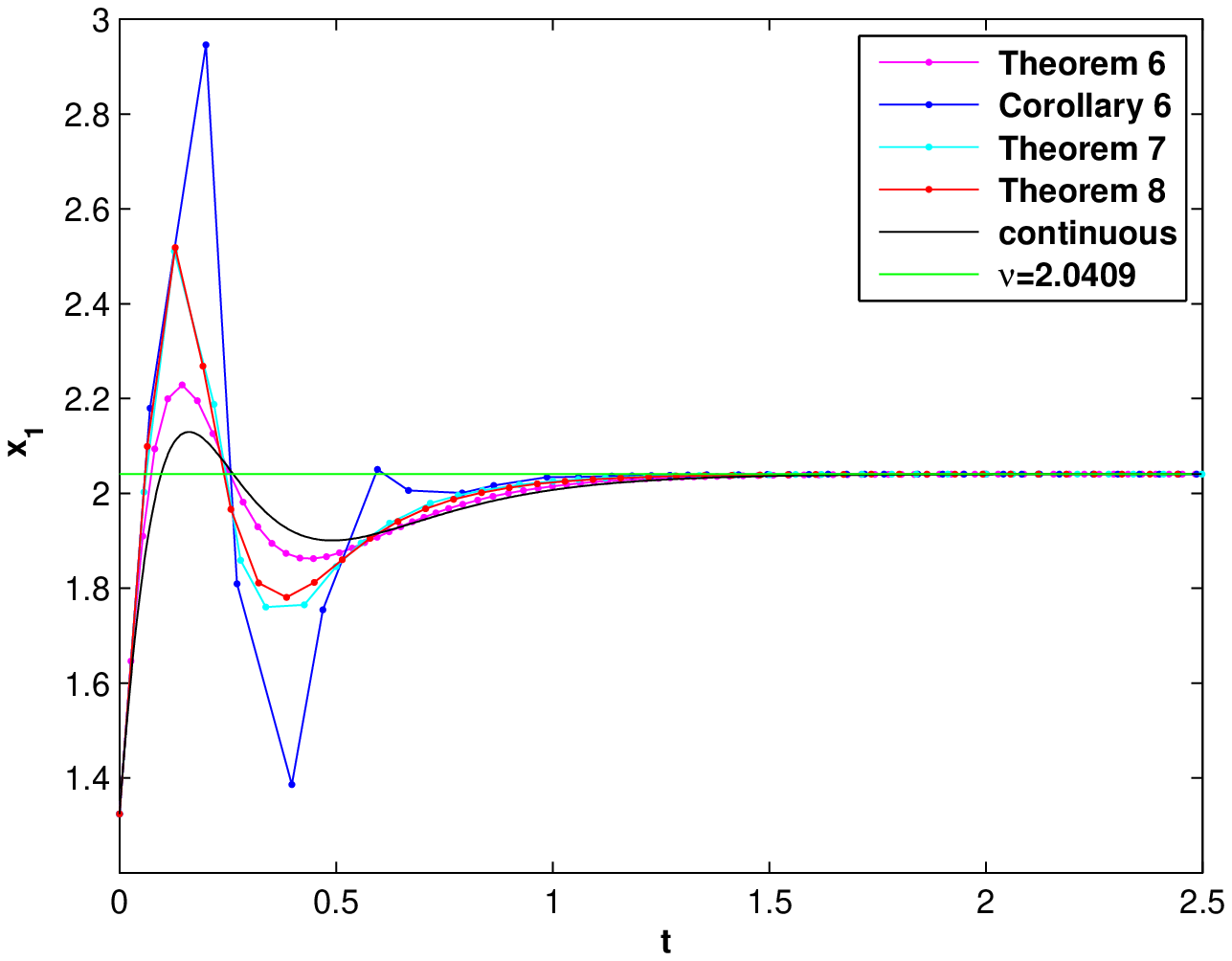}
\includegraphics[width=2in]{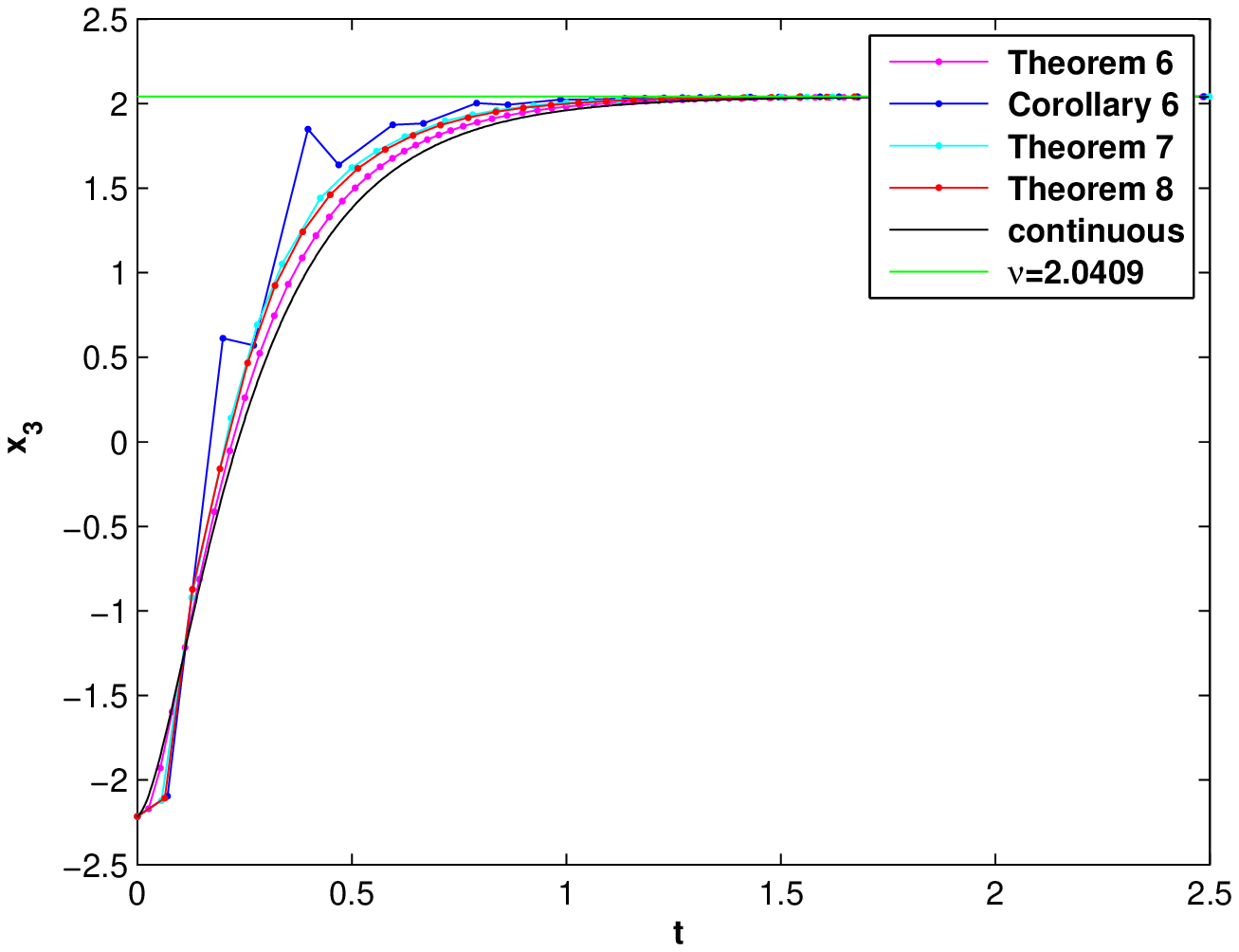}\\
\includegraphics[width=2in]{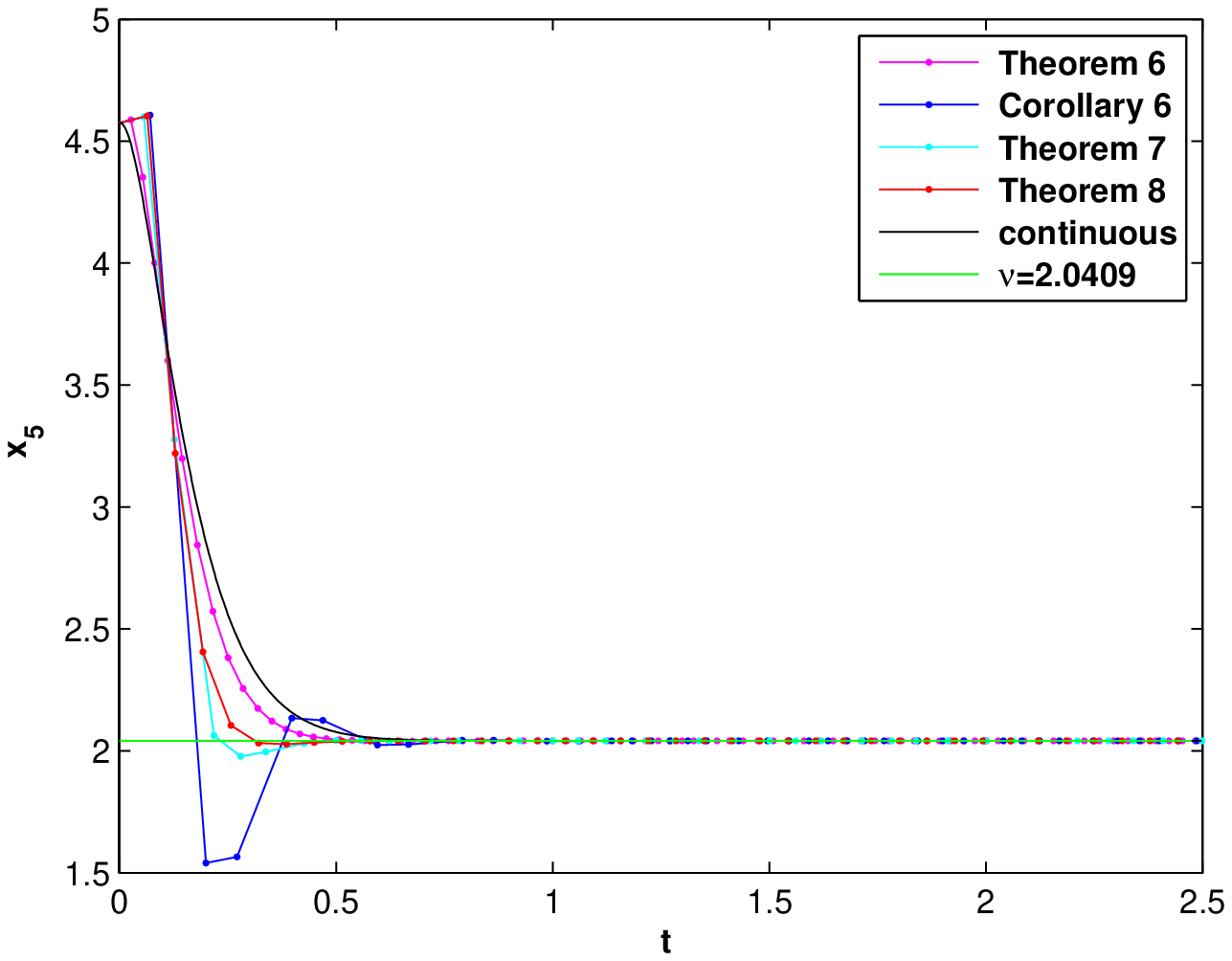}
\includegraphics[width=2in]{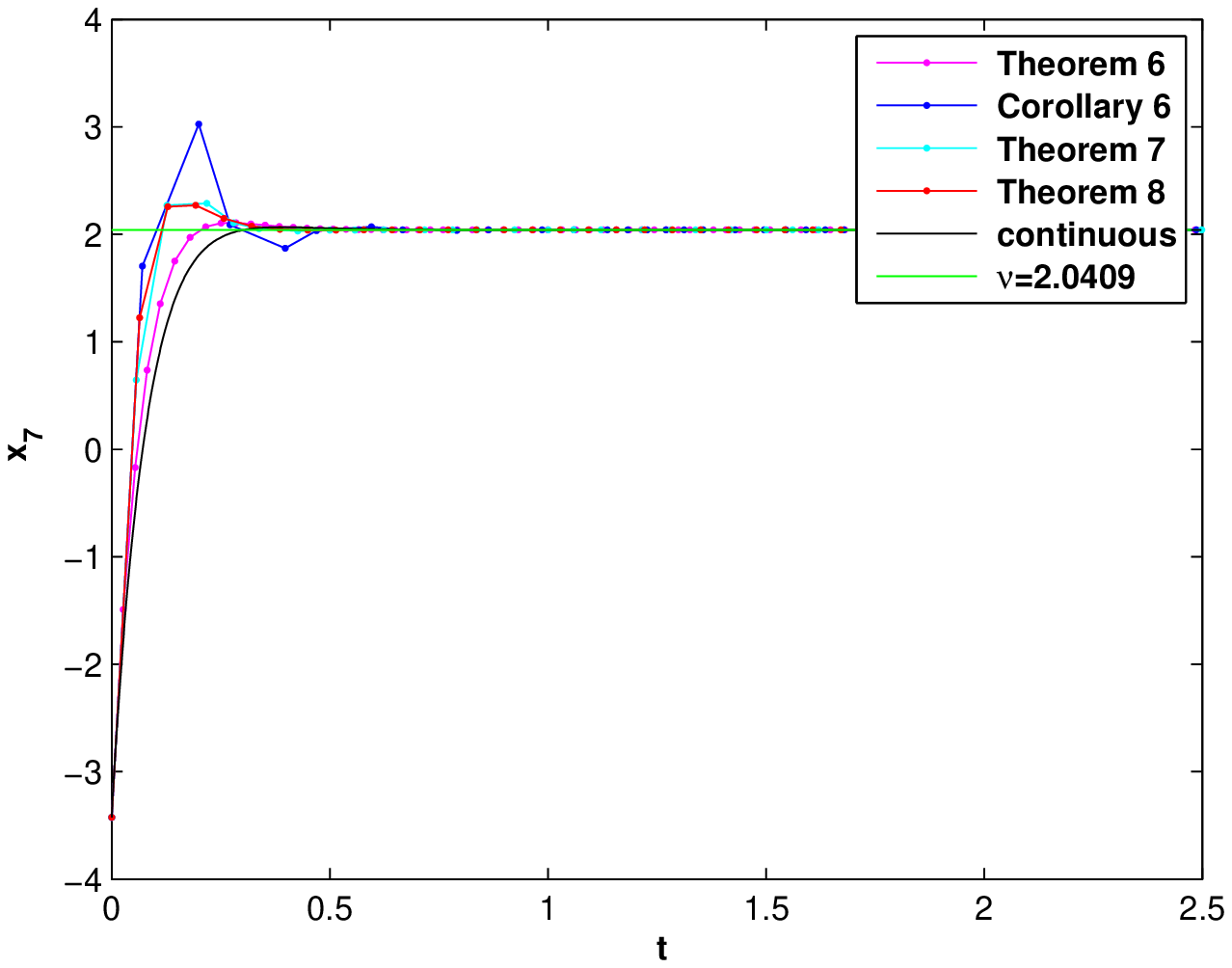}
\caption{The 1st, 3rd, 5th and 7th agents evolve under the event-triggered principles provided in Theorem \ref{thm0.21s}, Corollary \ref{thm0.211s}, Theorem \ref{thm0.2r} and Theorem \ref{thm0.3} comparing with continuous control.}
\label{fig:3}
\end{figure}

Finally, we compare the triggered principles provided in Theorem \ref{thm0.21s}, Corollary \ref{thm0.211s}, Theorem \ref{thm0.2r}, Theorem \ref{thm0.3} with $\gamma=0.9$, $a^1=0.0580$, $a^2=0.0882$, and continuous control. The simulation results are list in Table \ref{tab5}. All the data in this table is the average of 50 runs. It can be seen that the triggered principle provided in Theorem \ref{thm0.2r} is the best, since the corresponding minimum value of event interval is the biggest, the number of event is the smallest and the convergence speed of the consensus protocol is the fastest.

\begin{table}[htbp]
\begin{tabular}{c|c|c|c}
\hline
triggered principles  & the minimum value of event interval & number of event & $T_1$ \\
\hline
Theorem \ref{thm0.21s}  &0.0232   &70.90   &2.3220  \\
\hline
Corollary \ref{thm0.211s} &0.0402   &36.42   &2.1177  \\
\hline
Theorem \ref{thm0.2r}   & 0.0559  &29.54   &2.1063  \\
\hline
Theorem \ref{thm0.3}    &/         &33.52   &2.1549  \\
\hline
continuous control      &/         &/        &2.4688  \\
\hline
\end{tabular}
\caption{Simulation results of Example 2 with different triggered principles.}
\label{tab5}
\end{table}

\section{Conclusion}
In this paper, we first consider centralized event-triggered strategies for multi-agent systems. The triggering times depend on the ratio of a certain measurement error with respect to the norm of a function of the all agents' states. It is proved that if the asymmetric network topology has a spanning tree, then the centralized event-triggered coupling strategy we provide can realize consensus  exponentially for the multi-agent system and singular triggering and Zeno behavior can be both excluded. Then the results are extended to discontinuous monitoring, where each agent computes its next triggering time in advance without having to observe the system¡¯s state continuously and we have pointed out that it is very easy to compute the next triggering time in our principles. In addition, we provide a novel and very simple self-triggered rule (see Theorem \ref{thm0.12} for irreducible case, see Theorem \ref{thm0.2r} for reducible case), and we prove that the time interval length of our rule applied in symmetric topology is bigger comparing with the centralized rule in \cite{Dvd}. Finally, we give a periodic self-triggered strategy. The effectiveness the theoretical results are verified and compared by two examples of numerical simulation. In our numerical simulation, it is worth noting that the time needed to reach consensus decreases with respect to $\gamma$ which is opposite to usually thinking.

In our future paper, inspired by \cite{Pta,Yfg}, we will focus on the distributed event-triggered and self-triggered strategies with push-based feedback and pull-based feedback for multi-agent systems with asymmetric and reducible topologies.

\end{document}